\setlist[enumerate]{leftmargin=.5in}
\setlist[itemize]{leftmargin=.5in}
\newtheorem{Th}{Theorem}[section]
\newtheorem{Lemma}[Th]{Lemma}
\newtheorem{Rem}[Th]{Remark}
\newtheorem{?}[Th]{Problem}
\newtheorem{Ass}[Th]{Assumption}
\def\R{{\mathbb R}}
\def\N{{\mathbb N}}
\def\E{{\mathbb E}}
\def\R{{\mathbb R}}
\def\S{{\mathcal S}}
\def\W{{\bm W}}
\def\H{{\bm H}}
\def\P{{\mathbb P}}
\def\A{{\bm A}}
\def\ZZ{{\bm Z}}
\def\e{{\varepsilon}}
\def\ex{{\text{epr}}}
\def\ext{{\text{ept}}}
\DeclareMathOperator*{\argmin}{arg\,min}
\newcommand{\indep}{\perp \!\!\! \perp}
\def\x{{\bm x}}
\def\y{{\bm y}}
\def\z{{\bm z}}
\def\ex{{\text{epr}}}
\def\ext{{\text{ept}}}
\colorlet{texcscolor}{blue!50!black}
\colorlet{texemcolor}{red!70!black}
\colorlet{texpreamble}{red!70!black}
\colorlet{codebackground}{black!25!white!25}
\lstdefinestyle{siamlatex}{%
  style=tcblatex,
  texcsstyle=*\color{texcscolor},
  texcsstyle=[2]\color{texemcolor},
  keywordstyle=[2]\color{texemcolor},
  moretexcs={cref,Cref,maketitle,mathcal,text,headers,email,url},
}
\DeclareTotalTCBox{\code}{ v O{} }
{ 
  fontupper=\ttfamily\color{black},
  nobeforeafter,
  tcbox raise base,
  colback=codebackground,colframe=white,
  top=0pt,bottom=0pt,left=0mm,right=0mm,
  leftrule=0pt,rightrule=0pt,toprule=0mm,bottomrule=0mm,
  boxsep=0.5mm,
  #2}{#1}
\patchcmd\newpage{\vfil}{}{}{}
\title{An approximate control variates approach to multifidelity distribution estimation\thanks{Submitted to the editors.
\funding{RH is partially supported by the Hong Kong Research Grants Council (grant no. 14301821) and Start-up Fund for New Recruits, The Hong Kong Polytechnic University.
AN is partially supported by NSF DMS-1848508 and AFOSR FA9550-20-1-0338.
BK and DL are supported by the Ministry of Trade, Industry and Energy (MOTIE) and the Korea Institute for Advancement
of Technology (KIAT) through the International Cooperative R\&D program (grant no. P0019804, Digital twin based intelligent unmanned facility inspection solutions).
The views expressed in this article do not necessarily represent the views of Wells Fargo.}}}
\author{
Ruijian Han\thanks{Department of Applied Mathematics, The Hong Kong Polytechnic University}
\and Boris Kramer\thanks{Department of Mechanical and Aerospace Engineering, University of California San Diego}
\and Dongjin Lee\footnotemark[3]
\and Akil Narayan\thanks{Scientific Computing and Imaging Institute, and Department of Mathematics, University of Utah}
\and Yiming Xu\thanks{Corporate Model Risk, Wells Fargo
  (\email{yiming2358@gmail.com}).}}
\title{An approximate control variates approach to multifidelity distribution estimation\thanks{Submitted to the editors.
\funding{RH is partially supported by the Hong Kong Research Grants Council (grant no. 14301821) and Start-up Fund for New Recruits, The Hong Kong Polytechnic University.
AN is partially supported by NSF DMS-1848508 and AFOSR FA9550-20-1-0338.
BK and DL are supported by the Ministry of Trade, Industry and Energy (MOTIE) and the Korea Institute for Advancement
of Technology (KIAT) through the International Cooperative R\&D program (grant no. P0019804, Digital twin based intelligent unmanned facility inspection solutions).
The views expressed in this article do not necessarily represent the views of Wells Fargo.}}}
\author{
Ruijian Han\thanks{Department of Applied Mathematics, The Hong Kong Polytechnic University}
\and Boris Kramer\thanks{Department of Mechanical and Aerospace Engineering, University of California San Diego}
\and Dongjin Lee\footnotemark[3]
\and Akil Narayan\thanks{Scientific Computing and Imaging Institute, and Department of Mathematics, University of Utah}
\and Yiming Xu\thanks{Corporate Model Risk, Wells Fargo
  (\email{yiming2358@gmail.com}).}}
\tikzset{fit margins/.style={/tikz/afit/.cd,#1,
    /tikz/.cd,
    inner xsep=\pgfkeysvalueof{/tikz/afit/left}+\pgfkeysvalueof{/tikz/afit/right},
    inner ysep=\pgfkeysvalueof{/tikz/afit/top}+\pgfkeysvalueof{/tikz/afit/bottom},
    xshift=-\pgfkeysvalueof{/tikz/afit/left}+\pgfkeysvalueof{/tikz/afit/right},
    yshift=-\pgfkeysvalueof{/tikz/afit/bottom}+\pgfkeysvalueof{/tikz/afit/top}},
    afit/.cd,left/.initial=2pt,right/.initial=2pt,bottom/.initial=2pt,top/.initial=2pt}
\tikzset{base/.style={rectangle, rounded corners, draw=black, thick, text centered, fill=blue!20},
         innerbase/.style={base, draw=black!30},
         eebox/.style={base, dashed, thick, fill=black!08, inner xsep=0.5cm},
         explotation/.style={base, fill=red!40!white},
         eephase/.style={font=\large\bfseries},
         connector/.style={very thick,-{Latex[width=3mm]}, rounded corners}
         }
\begin{document}
\maketitle

\begin{abstract}
Forward simulation-based uncertainty quantification that studies the distribution of quantities of interest (QoI) is a crucial component for computationally robust engineering design and prediction. There is a large body of literature devoted to accurately assessing statistics of QoIs, and in particular, multilevel or multifidelity approaches are known to be effective, leveraging cost-accuracy tradeoffs between a given ensemble of models. However, effective algorithms that can estimate the full distribution of QoIs are still under active development. In this paper, we introduce a general multifidelity framework for estimating the cumulative distribution function (CDF) of a vector-valued QoI associated with a high-fidelity model under a budget constraint. Given a family of appropriate control variates obtained from lower-fidelity surrogates, our framework involves identifying the most cost-effective model subset and then using it to build an approximate control variates estimator for the target CDF.  We instantiate the framework by constructing a family of control variates using intermediate linear approximators and rigorously analyze the corresponding algorithm. Our analysis reveals that the resulting CDF estimator is uniformly consistent and asymptotically optimal as the budget tends to infinity, with only mild moment and regularity assumptions on the joint distribution of QoIs. The approach provides a robust multifidelity CDF estimator that is adaptive to the available budget, does not require \textit{a priori} knowledge of cross-model statistics or model hierarchy, and applies to multiple dimensions. We demonstrate the efficiency and robustness of the approach using test examples of parametric PDEs and stochastic differential equations including both academic instances and more challenging engineering problems.   
\end{abstract}

\begin{keywords}
control variates, distribution estimation, model selection, multifidelity, robustness
\end{keywords}

\begin{AMS}
62J05, 62G30, 62F12, 62-08
\end{AMS}

\section{Introduction}
Physical systems are often modeled with computational simulations or emulators, and as such, understanding the error in these constructed approximations is of utmost importance. One particular source of uncertainty in the output is due to the input uncertainty in these models, either through uncertainty in model parameters (which can be finite- or infinite-dimensional) or through modeled stochasticity in the system, e.g., systems driven with white noise processes. 
To make the resulting models trustworthy, it is crucial to quantify the resulting uncertainty in QoIs; that is, to estimate the QoI's distribution or some statistical summary of it. One popular approach for achieving this is through Monte Carlo (MC) simulation, which is easy to implement and provides robust results but has a slow convergence rate.
A typical MC procedure requires drawing a large number of samples or running repeated experiments, which is expensive given the increasing complexity of computational simulations.

To address this issue, methods based on multilevel \cite{giles2008multilevel, giles2015multilevel} and multifidelity modeling \cite{Peherstorfer_2016, PKW17MFCE, qian2018multifidelity, HKTQ18CVaRROMS, peherstorfer2019multifidelity, Gorodetsky_2020, HKT2020_Adaptive_ROM_CVAR_estimation, Schaden_2020, farcas2020context, xu2022bandit, schaden2021asymptotic, gruber2022multifidelity, croci2023multi, farcas2023context} have been developed to estimate the statistics of QoIs associated with the (high-fidelity) model. 
The core idea behind multilevel/multifidelity methods lies in leveraging models of different accuracies and costs to improve computational efficiency.
However, a major limitation of the existing literature is that it predominantly focuses on the estimation of the statistical mean of the QoIs (or other scalar-valued descriptive statistics such as quantiles or conditional expectations), providing only partial insight into the uncertainty of the QoIs. A more comprehensive understanding would require assessing, for example, higher-order statistics of the QoI, or even the entire distribution. 

Existing methods to estimate CDFs in the multilevel and multifidelity setup have seen notable success \cite{giles2015multilevel, lu2016improved, giles2017adaptive, krumscheid2018multilevel, ayoul2022quantifying, xu2021budget}.
In \cite{giles2015multilevel}, the authors proposed a multilevel approach to computing the CDFs of univariate random variables arising from stochastic differential equations and derived an upper bound for the cost in terms of the error.
The methodology in \cite{giles2015multilevel} was further developed and applied in several subsequent works \cite{lu2016improved, krumscheid2018multilevel, giles2017adaptive, ayoul2022quantifying}.
In particular,
\cite{lu2016improved} designed an \textit{a posteriori} optimization strategy to calibrate the smoothing function and showed its superiority over MC in oil reservoir simulations; 
\cite{krumscheid2018multilevel} generalized the ideas in \cite{giles2015multilevel} to approximate more general parametric expectations such as characteristic functions; \cite{giles2017adaptive} applied an adaptive approach for parameter selection that yields an improved cost bound;
\cite{ayoul2022quantifying} provides a novel computable error estimator to enhance algorithm tuning. 
Despite the substantive contributions of these approaches, nearly all of them make relatively restrictive assumptions regarding model hierarchy (e.g., the model cost versus accuracy tradeoffs), and do not immediately extend to the general non-hierarchical multifidelity setup.   
For this more general multifidelity estimation of CDFs, the only work we are aware of is the adaptive explore-then-commit algorithm for distribution learning (AETC-d) \cite{xu2021budget}. However, the large-budget performance of AETC-d is restricted by its own set of statistical assumptions that are often too stringent to satisfy in practice. Moreover, the QoI in all the above references is assumed to be a scalar.

An outline of the paper is as follows.  The remainder of this section lists our contributions, introduces overall notation, and summarizes the main theory and algorithmic advances.
Sections \ref{sec:2} through \ref{sec:4} describe the necessary mathematical and statistical background for our method: 
Section \ref{sec:2} gives a brief overview of the control variates method; 
Section \ref{sec:3} introduces a multifidelity CDF estimation framework based on approximate control variates estimators.
Section \ref{sec:4} provides a computational construction for the control variates through linear approximators.
Section \ref{sec:5} develops our new meta algorithm (cvMDL) that accomplishes autonomous model selection together with an algorithmic correction to preserve the monotonicity of the resulting CDF estimators. The meta algorithm cvMDL itself does not specify how to compute the control variates: A specialization to using the linear approximations from Section \ref{sec:4} yields a computationally explicit algorithm that we study in detail, establishing both uniform consistency and budget-asymptotic optimality. Section \ref{sec:6} contains a detailed simulation study and showcases applications that use estimated CDFs to compute probabilistic risk metrics.

\subsection{Contributions}

The main goal of this article is to provide novel solutions that mitigate the deficiencies described above.
We develop an efficient algorithm for estimating the CDF in a general non-hierarchical multifidelity approximation setting under computational budget constraints. The proposed method satisfies the following criteria: 1) it requires as input neither cross-model statistics nor model hierarchy; 2) it can provide distributional estimates for vector-valued QoIs, and 3) it is empirically robust and enjoys theoretical guarantees.  Although our approach uses a similar meta algorithm as in \cite{xu2022bandit, xu2021budget} (all borrowing ideas from the explore-then-commit algorithm in bandit learning \cite{lattimore2020bandit}), it contains a substantial number of new ingredients that extend applicability and improve robustness. In more technical language, our contributions are twofold:

\begin{itemize}[topsep=0pt,itemsep=0pt]
\item We propose a control variates-based exploration-exploitation strategy for multifidelity CDF estimation under a budget constraint.
  The \textit{exploration} step leverages statistical estimation to select a subset of low-fidelity models for the control variates construction, followed by the \textit{exploitation} step that utilizes the learned information to build an approximate control variates estimator for the target CDF. This procedure is initialized with no \textit{a priori} oracle information\footnote{In this article, oracle information refers to model statistics that we treat as exact. These statistics may be exactly computed, but more often are approximations identified through simulations with a large enough computational expense so that the approximations are treated as ground truth.} about model relationships, in contrast to several methods that require such information as input. In addition, our estimator for the CDF applies to both scalar-valued and vector-valued QoI, which differentiates it from existing methods that apply only to scalar-valued QoI.
 \item Through examination of the average weighted-$L^2$ loss that balances errors in exploration and exploitation, we design a new meta algorithm, the control variates multifidelity distribution learning algorithm (``cvMDL'', summarized in \Cref{fig:flowchart} and detailed in \Cref{alg2-detailed}), that accomplishes model (subset) selection and CDF estimation. Using control variates constructed from linear approximators, we establish both uniform consistency and asymptotic optimality of the estimator produced by cvMDL as the budget approaches infinity (\Cref{main}). Our analysis illustrates that the proposed procedure significantly ameliorates the restrictive model assumptions in \cite{xu2021budget}.
\end{itemize}

A verbatim usage of our approaches produces a CDF estimator that enjoys the previously-mentioned theoretical guarantees but is not necessarily monotonic and hence may not be itself a distribution function. To mitigate this artifact, we utilize an empirical algorithmic correction that restores the monotonicity of the estimated CDFs and additionally makes its manipulation more computationally convenient (e.g. for extraction of quantiles and conditional expectations); see \Cref{alg-sort}.
We observe that in some cases this empirical correction further reduces errors. 
 
\subsection{Notation}\label{notation}

For $n\in\N$, let $\{1:n\}:=\{1, \ldots, n\}$. 
We use bold upper-case and lower-case letters to denote matrices and vectors, respectively. The Euclidean ($\ell^2$) norm on a vector $\bm{v}$ is denoted $\|\bm{v}\|_2$. For a matrix $\A$, $\A^\top$ is the transpose and $\A^\dagger$ is the pseudoinverse; $\A^\dagger$ coincides with the regular inverse $\A^{-1}$ when $\A$ is invertible. 
The $i$th column of $\A$ is denoted by $\A^{(i)}$. 
The Frobenius norm of $\A$ is denoted by $\|\A\|_F = (\sum_{i}\|\A^{(i)}\|_2^2)^{1/2}$. 
We use $\otimes$ to denote the tensor product operator. 
For a set $\mathcal T\subseteq\R^d$, we denote its interior as $\mathcal T^\circ$, and $\mathbf 1_{\mathcal T}(\x) := \mathbf 1_{\{\x\in \mathcal T\}}$ as the indicator function on $\mathcal T$. 
For two vectors $\x = (\x^{(1)}, \ldots, \x^{(d)})^\top$ and $\y = (\y^{(1)}, \ldots, \y^{(d)})^\top$, we use $\vee$ and $\wedge$ to denote the componentwise \emph{max} and \emph{min} operators, respectively, i.e., 
\begin{align*}
\x\vee\y &:= \left(\max\{\x^{(1)}, \y^{(1)}\}, \ldots, \max\{\x^{(d)}, \y^{(d)}\}\right)^\top\\
\x\wedge\y &:= \left(\min\{\x^{(1)}, \y^{(1)}\}, \ldots, \min\{\x^{(d)}, \y^{(d)}\}\right)^\top.
\end{align*}
Moreover, we say $\x\leq \y$ if $\x^{(i)}\leq \y^{(i)}$ for all $i\in \{1:d\}$. 
We consider the QoIs from computational models as random variables that jointly lie in some common probability space $(\Omega, \mathcal F, \P)$. For a random vector $X\in\R^d$, we let $F_X(\x) = \P(X^{(1)}\leq \x^{(1)}, \ldots, X^{(d)}\leq \x^{(d)})$ denote its CDF. 
For two sequences of random variables $\{a_m(\omega)\}$ and $\{b_m(\omega)\}$ where $\omega \in \Omega$ is a probabilistic event, we write $a_m(\omega)\lesssim b_m(\omega)$ if almost surely (a.s.), $a_m(\omega)\leq \eta(\omega)b_m(\omega)$ for all $m\in\N$, where the constant $\eta(\omega)$ is independent of $m$. For convenience, we let 
\begin{align*}
&Y = (Y^{(1)}, \ldots, Y^{(d)})^\top\in\R^{d}& X_i = (X_i^{(1)}, \ldots, X_i^{(d_i)})^\top\in\R^{d_i}\ \ \ \ \ \ \ i\in \{1:n\}
\end{align*}
denote the high-fidelity and the $i$th low-fidelity QoIs, respectively.
Here $d, d_i\in\N$ are the corresponding dimensions of $Y$ and $X_i$.
There are $n$ low-fidelity models in total. 
We use $\E[\cdot ], \text{Var}[\cdot]$/$\text{Cov}[\cdot ]$, and $\text{Corr}[\cdot ]$ to denote the expectation, variance/covariance, and correlation operators respectively.
We use $\indep$ to represent probabilistic independence. 

\subsection{Model assumptions}

We assume the sampling costs for $Y$ and $X_1, \ldots, X_n$, denoted by positive numbers $c_0$ and $c_1, \ldots, c_n$, are deterministic and known.
For $\S\subseteq \{1:n\}$, let $c_\S = \sum_{i\in\S}c_i$, corresponding to the cost of sampling all (low-fidelity) models from subset $\S$. 
We let $B>0$ be the total budget (deterministic and known) that is available to expend on sampling the models.
Moreover, for $\S\subseteq \{1:n\}$, we let $X_\S = (X_i^\top)^\top_{i\in\S}\in\R^{d_\S}$, where $d_\S = \sum_{i\in\S}d_i$, and $X_{\S^+} = (1, X_\S^\top)^\top$, where the latter is used when considering a linear model approximation with the intercept/bias term. 

The central goal in the rest of the article is to develop a multifidelity estimator for $F_Y(\x)$ through drawing samples of $(Y, X_{\{1:n\}})$ and of $X_\S$ for some adaptively-determined $\S \subseteq \{1:n\}$, subject to the sampling cost not exceeding the total budget constraint $B>0$. No other high-level assumptions are made. In other words, we assume only that $Y$ is a known high-fidelity model; we do not assume any ordering/hierarchy in the models $X_{\{1:n\}}$, and we do not assume known statistics (e.g., correlations) between any models. 
While such generality is sufficient for algorithmic purposes, our theoretical guarantees require additional technical assumptions that are articulated in \Cref{ssec:theory}. These technical assumptions are mild regularity conditions, related to finite moments of random variables and CDF functional regularity.

The notation we have introduced is enough to present the overall cvMDL algorithm in the next section.
The actual computations that make the algorithm practical, however, require more technical details which are provided in \Cref{sec:2} through \Cref{sec:5}.

\subsection{Summary of the algorithm}\label{ssec:cvmdl-summary}

The proposed cvMDL meta algorithm is shown in \Cref{fig:flowchart}. In summary, we first gather $m$ full joint samples of $(Y, X_{\{1:n\}})$ through an \textit{exploration phase} that identifies (i) how models are related, (ii) which model subset $\S$ optimally balances cost versus accuracy, and (iii) whether more samples $m$ are needed to certify a robust exploration or whether the choice of $\S$ is statistically robust enough to proceed with exploitation. 
Exploration is followed by the \textit{exploitation phase}, where we exhaust the remaining computational budget to sample the optimal model subset $X_\S$. Exploitation corresponds to exercise of a particular approximate control variates estimator for $F_Y$. A more detailed description is as follows:
\begin{itemize}[topsep=3pt,itemsep=0pt,leftmargin=0pt]
  \item[] \textbf{Exploration phase}
    \begin{itemize}[topsep=0pt, itemsep=0pt, leftmargin=15pt]
    \item \textit{Mininum exploration}: This step ensures that the number of exploration samples $m$ is set large enough so that non-degenerate empirical statistics can be computed.
    \item \textit{Analyze low-fidelity models}: We are interested in estimating the minimal loss associated with an estimated CDF that utilizes the model subset $\S$. For such a goal, this step identifies for each model subset $\S$ both an estimated number of optimal exploration samples $\widehat{m}^\ast_\S$ along with the corresponding loss function minimum $\widehat{L}(m\vee \widehat{m}^\ast_\S; m)$. The value $\widehat{L}(z; m)$ is an estimator with the currently-available $m$ exploration samples and measures the estimated loss if we eventually use $z$ exploration samples. When evaluating the minimum loss, we require the input $z \gets m\vee \widehat{m}^\ast_\S$ since if $m > \widehat{m}^\ast_\S$ then the number of exploration samples should be $m$, and not $\widehat{m}^\ast_\S$ (we cannot take fewer exploration samples than already committed and we assume $\widehat{L}(z; m)$ is convex and has a unique minimizer). The definitions of $\widehat{L}$, $\widehat{m}_\S^\ast$, and $\widehat{\S}^\ast$ are given in \eqref{lm} and \eqref{eq:Shatast}.
    \item \textit{Select optimal model}: The estimated optimal model subset $\widehat{\S}^\ast$ is computed by choosing the subset $\S$ with minimal loss from the previous step.
    \item \textit{Continue exploration}: If the current number of exploration samples $m$ is smaller than the estimated optimal number of samples $\widehat{m}^\ast_{\widehat{\S}^\ast}$ required for the optimal subset $\widehat{\S}^\ast$, then we continue exploration, with the precise number of additional exploration samples determined by the function $Q(\cdot,\cdot)$ that is defined in \eqref{eq:Q-def}. If $m \geq \widehat{m}^\ast_{\widehat{\S}^\ast}$, then exploration terminates and we move to the exploitation phase.
  \end{itemize}
  \item[] \textbf{Exploitation phase} 
  \begin{itemize}[topsep=0pt, itemsep=0pt, leftmargin=15pt]
    \item \textit{Expend budget}: After exploration terminates and an ``optimal'' model subset $\widehat{\S}^\ast$ has been identified, we expend the remaining computational budget on sampling $X_{\widehat{\S}^\ast}$.
    \item \textit{Estimate CDF}: Using the collected samples, we construct the CDF estimator $\widetilde{F}_{\widehat{\S}^\ast}$ for $F_Y$, which is defined in \eqref{fug}.
  \end{itemize}
\end{itemize}
The precise details of how the loss function is computed and the CDF estimator is constructed is the topic of \Cref{sec:5}, with \Cref{sec:2,sec:3,sec:4} serving to make requisite mathematical and statistical definitions.

A more detailed version of the algorithm is given in \Cref{alg2-detailed}, which lists more explicit computational steps that must be taken. The coming sections are devoted to the theoretical construction of quantities in \Cref{fig:flowchart}; in particular \Cref{sec:2,sec:3} provide a construction of a loss function that is the integral part of exploration decision-making.

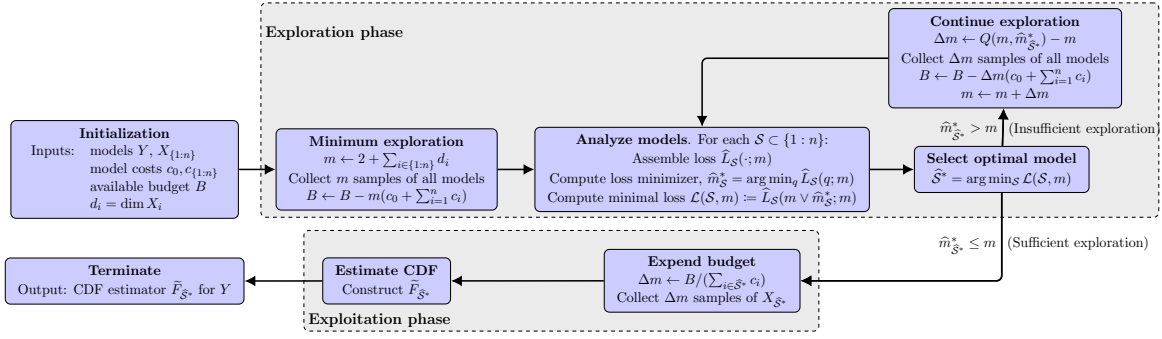
\begin{figure}
  \resizebox{\textwidth}{!}{
    \begin{tikzpicture}

  \node (start) [base] {\begin{tabular}{c} \textbf{Initialization} \\ \begin{tabular}{rl} Inputs: & models $Y$, $X_{\{1:n\}}$ \\ & model costs $c_0, c_{\{1:n\}}$ \\ & available budget $B$  \\ & $d_i = \dim X_i$ \end{tabular}\end{tabular}};
  \node (minexplore) [base, right=of start.east, anchor=west] {\begin{tabular}{c} \textbf{Minimum exploration} \\ $m \gets 2 + \sum_{i \in \{1:n\}} d_i$ \\
    Collect $m$ samples of all models \\
    $B \gets B - m (c_0 + \sum_{i=1}^n c_i)$ \end{tabular}};

    \node (modelanalysis) [base, right=of minexplore.east] {\begin{tabular}{c} \textbf{Analyze models}. For each $\S \subset \{1:n\}$: \\ Assemble loss $\widehat{L}_{\S}(\cdot;m)$ \\ Compute loss minimizer, $\widehat{m}^\ast_{\S} = \arg\min_q \widehat{L}_{\S}(q;m)$ \\ Compute minimal loss $\mathcal{L}(\S,m) \coloneqq \widehat{L}_{\S}(m\vee \widehat{m}^\ast_{\S};m)$ \end{tabular}};

    \node (modelselect) [base, right=of modelanalysis.east, xshift=0.2cm] {\begin{tabular}{c}\textbf{Select optimal model} \\ $\widehat{\S}^\ast = \arg\min_{\S} \mathcal{L}(\S,m)$ \end{tabular}};

    \node (exploremore) [base, above=of modelanalysis.north east, anchor=south west, xshift=0.5cm, yshift=-0.5cm] {\begin{tabular}{c} \textbf{Continue exploration} \\ $\Delta m \gets Q(m,\widehat{m}^\ast_{\widehat{\S}^\ast}) - m$ \\ Collect $\Delta m$ samples of all models \\ $B \gets B - \Delta m (c_0 + \sum_{i=1}^n c_i)$ \\ $m \gets m + \Delta m$ \end{tabular}};

    \node (exploit) [base, below=of modelanalysis.south, anchor=north] {\begin{tabular}{c} \textbf{Expend budget} \\ $\Delta m \gets B/(\sum_{i \in \widehat{\S}^\ast} c_i)$ \\
    Collect $\Delta m$ samples of $X_{\widehat{\S}^\ast}$\end{tabular}};

    \node (estimate) [base, anchor=base, yshift=-1mm] at (minexplore |- exploit.west) {\begin{tabular}{c} \textbf{Estimate CDF} \\ Construct $\widetilde{F}_{\widehat{\S}^\ast}$ \end{tabular}};
    \node (end) [base, anchor=base, yshift=-1mm] at (start |- estimate.west) {\begin{tabular}{c} \textbf{Terminate} \\ Output: CDF estimator $\widetilde{F}_{\widehat{\S}^\ast}$ for $Y$ \end{tabular}};

  \begin{pgfonlayer}{background}
    \node (exploration) [eebox, fit margins={left=0.2cm,right=0.5cm}, fit={(minexplore) (modelanalysis) (exploremore)}] {};
    \node (exploitation) [eebox, inner ysep=0.5cm, fit={(exploit) (estimate)}] {};
  \end{pgfonlayer}

  \node [anchor=north west, yshift=-0.5cm, eephase] at (exploration.north west) {Exploration phase};
  \node [anchor=south west, eephase] at (exploitation.south west) {Exploitation phase};

  \draw [connector] (start) -- (minexplore);
  \draw [connector] (minexplore) -- (modelanalysis);
  \draw [connector] (modelanalysis) -- (modelselect);
  \draw [connector] (modelselect) -- node [pos=0.4, left] {$\widehat{m}^\ast_{\widehat{\S}^\ast} > m$} (exploremore);
  \draw [connector] (modelselect) -- node [pos=0.44, right] {(Insufficient exploration)} (exploremore);
  \draw [connector] (exploremore) -| (modelanalysis);
  \draw [connector] (modelselect) |- node [pos=0.30, left] {$\widehat{m}^\ast_{\widehat{\S}^\ast} \leq m$} (exploit);
  \draw [connector] (modelselect) |- node [pos=0.29, right] {(Sufficient exploration)} (exploit);
  \draw [connector] (exploit) -- (estimate);
  \draw [connector] (estimate) -- (end);

\end{tikzpicture}
  }
  \caption{Flowchart illustration of the cvMDL algorithm. More details of the steps are discussed in \Cref{ssec:cvmdl-summary}. The full algorithm is presented in \Cref{alg2-detailed}.}\label{fig:flowchart}
\end{figure}

\section{Background: control variates}\label{sec:2}

We first introduce the control variates method, which is a standard approach for variance reduction in MC simulation. 
For a random variable $X$ with bounded variance $\sigma_X^2>0$, the size-$m$ MC estimator for $\E[X]$ based on i.i.d. data $X_\ell$, $\widehat{x} = \sum_{\ell\in \{1:m\}} X_\ell/m$, is unbiased and has variance $\sigma_X^2/m$. 
Given a random vector $Z = (Z^{(1)}, \ldots, Z^{(d)})^\top\in\R^d$ that lives in the same probability space as $X$, one may use joint i.i.d samples of $(X, Z)$, i.e., $(X_\ell, Z_\ell^\top) = (X_\ell, Z^{(1)}_{\ell}, \ldots,  Z^{(d)}_{\ell})$ for $\ell\in \{1:m\}$, to construct a control variates estimator $\widehat{x}_\text{cv}$ for $\E[X]$:
\begin{align*}
\widehat{x}_{\text{cv}} = \frac{1}{m}\sum_{\ell\in \{1:m\}}X_\ell -  \frac{1}{m}\sum_{\ell\in \{1:m\}}(Z_\ell^\top\beta - \E[Z]^\top\beta),
\end{align*}
where $\beta\in\R^d$ is some appropriately chosen vector and $\E[Z]$ is assumed known. The estimator $\widehat{x}_{\text{cv}}$ is also unbiased and has variance 
\begin{align*}
\sigma_{\text{cv}}^2 = \text{Var}[\widehat{x}_{\text{cv}}] = \frac{\text{Var}[X-Z^\top\beta]}{m}.
\end{align*}
This variance is minimized when $\beta$ is the least-squares coefficient for centered linear regression, i.e., for regressing $(X-\E[X])$ on $(Z-\E[Z])$,
{\small
\begin{align}\label{mybeta}
\beta = \text{Cov}[Z]^{-1}\text{Cov}[Z, X] \hskip 5pt \Longrightarrow \hskip 5pt
\sigma_{*}^2 = \min_{\beta\in\R^d}\sigma_{\text{cv}}^2  = \frac{(1-\rho^2)\sigma_X^2}{m}, \hskip 10pt \rho = \text{Corr}(X, Z^\top\text{Cov}[Z]^{-1}\text{Cov}[Z, X]).
\end{align}
}
When $|\rho|\approx 1$, the variance reduction is significant, in which case 
$Z^\top\beta$ 
accounts for most of $\text{Var}[X]$. 

When $\E[Z]$ is unknown, one may consider the following \emph{approximate control variates} estimator that uses an independent size-$N$ MC estimator in place of $\E[Z]$ using samples $\widetilde{Z}_j$: 
\begin{align}
&\widehat{x}_{\text{acv}} = \frac{1}{m}\sum_{\ell\in \{1:m\}}X_\ell -  \frac{1}{m}\sum_{\ell\in \{1:m\}}\left(Z_\ell^\top\beta - \frac{1}{N}\sum_{j\in \{1:N\}}\widetilde{Z}_{j}^\top\beta\right)&
  (\ell, j)\in \{1:m\}\times \{1:N\}\label{xiaodaihua},
\end{align}
where we assume $Z_{\ell}\indep \widetilde{Z}_{j}$. Then this has variance 
\begin{align}
\sigma^2_{\star} = \sigma_{*}^2 + \frac{\text{Var}[Z^\top\beta]}{N} = \frac{(1-\rho^2)\sigma_X^2}{m} + \frac{\rho^2\sigma_X^2}{N}.\label{1}
\end{align}
Construction of such approximate control variates estimators has been recently studied in the multifidelity estimation of first-order statistics \cite{Gorodetsky_2020, xu2022bandit}. 
The terms $\text{Cov}[Z]^{-1}$ and $\text{Cov}[Z, X]$ may be estimated empirically at the cost of incurring higher-order trajectory-wise statistical errors in $m$ \cite{glasserman2004monte, owen2013monte}.

\section{Variance reduction for CDF estimation}\label{sec:3}

Control variates can be more generally applied to CDF estimation of nonlinear functions of random variables \cite{glasserman2000efficient, hesterberg1998control}. 
For example, in risk management applications \cite{glasserman2000efficient}, the authors considered using the \emph{delta-gamma} approximation\footnote{Here we refer to the ``full'' delta-gamma approximation. The more commonly used delta-gamma approximation in practice does not consider the second-order cross terms.} (i.e. the second-order Taylor expansion) of a loss function $L$ at a given position $\bm x$ along random market move direction $\bm\eta$ as control variates to compute its quantiles.
More precisely, one uses a quadratic function of $\bm\eta$ to approximate the loss at $\bm x$:
\begin{align*}
-(L(\bm x + \bm\eta)-L(\bm x)) \eqqcolon \ell(\bm x) \approx \widehat{\ell}(\bm x)\coloneqq -\nabla L (\bm x)^\top\bm\eta -\frac{1}{2}\bm\eta^\top\nabla^2 L (\bm x)\bm\eta.
\end{align*} 
Fixing a scalar $C$, $\bm 1_{\{\widehat{\ell}(\bm x)\leq C\}}$ can be used as a control variate for $\bm 1_{\{\ell(\bm x)\leq C\}}$ to compute the latter's expectation, which in particular provides a way to compute CDFs.
More advanced approximation techniques have been introduced in \cite{hesterberg1998control} to construct other control variates in the value-at-risk computation. 

We apply a similar idea in the proposed multifidelity setup here. 
In our setup, a specific functional form may be computationally difficult to produce, and Taylor-like approximations can be inaccurate outside local regions.  
Our alternative strategy is to employ a global emulator for $Y$ based on linear combinations of $X_{\{1:n\}}$, which can be effective when the correlation between these quantities is high.
For example, this situation is often true when modeling parametric PDEs.
In the rest of the section, we introduce a general multifidelity approach to estimate $F_Y(x)$ subject to a budget constraint.

\subsection{Control variates for multifidelity CDF estimation}\label{sec:3.1}
 
In developing the proposed method, we frequently resort to the simple observation that  
\begin{align*}
F_Y(\bm x) = \E[\mathbf 1_{\{Y\leq \bm x\}}], \qquad\bm x\in\R^{d}.
\end{align*}
If we fix $\S\subseteq \{1:n\}$, the control variate based on $X_\S$ that minimizes variance (and hence is optimal) is $\E[\mathbf 1_{\{Y\leq \bm x\}}|X_\S]$ \cite{rao1973linear}.
This quantity requires the orthogonal projection of $Y$ onto the sigma-field generated by $X_\S$, which is computationally intractable without special assumptions (e.g. joint normality). In order to approximate $\E[\mathbf 1_{\{Y\leq \bm x\}}|X_\S]$,
we use $h(X_\S; \bm x)$ to denote a general $X_\S$-measurable function that serves as the control variates for $\mathbf 1_{\{Y\leq \bm x\}}$.  We make a particular choice for $h$ in \Cref{sec:4}.

Analogous to \eqref{xiaodaihua}, we construct an approximate control variates estimator for $F_Y(\x)$, where the $m$ and $N$ in \eqref{xiaodaihua} are related by the budget constraint (the cost of sampling $Y$ and $X_\S$).
Since different subsets $\S$ are considered simultaneously, we take a \emph{uniform exploration policy} that first collects $m$ i.i.d joint exploration samples of the \emph{full} model
for variance reduction and then commits the remainder of the budget to collect $N_\S$ samples of a selected model subset $\S$ of low-fidelity models to compute the control variates mean.
The exploration samples and exploitation samples under a uniform exploration policy are denoted:
\begin{align}
  \text{Exploration samples:} &\; \{(X^\top_{\ex, \ell, 1}, \ldots, X^\top_{\ex, \ell, n}, Y^\top_{\ex, \ell})^\top\}_{\ell\in \{1:m\}}\subset \R^{d + \sum_{i=1}^n d_i}\label{onta1}\\
  \text{Exploitation samples:} &\; \{X_{\ext, j, \S}\}_{j\in \{1:N_\S\}},\label{onta2}
\end{align}
where the subscripts ``$\ex$'' and ``$\ext$'' specify the stage where a sample is used. 
The parameters $m$ and $N_\S$ are related by the budget constraint:  
\begin{align}\label{eq:NS-def}
  N_\S &= \frac{B-c_{\text{epr}}m}{c_\S}& c_{\text{epr}} &= \sum_{i=0}^nc_i, &  c_\S &= \sum_{i\in\S}c_i, 
\end{align}
where we ignore integer rounding effects to simplify the discussion. 
The control variates estimator for $F_Y(\x)$ based on $h(X_\S; \x)$ is
\begin{align}
\widehat{F}_\S(\x) = \frac{1}{m}\sum_{\ell\in \{1:m\}}\mathbf 1_{\{Y_{\ex,\ell}\leq \x\}} -  \frac{1}{m}\sum_{\ell\in \{1:m\}}\alpha(\x)\left(h(X_{\ex, \ell, \S}; \x) - \frac{1}{N_\S}\sum_{j\in \{1:N_\S\}}h(X_{\ext, j, \S}; \x)\right),\label{panger}
\end{align}
where $\alpha(\x)$ is the optimal scaling coefficient as in \eqref{mybeta}:
\begin{align}
\alpha(\x) = \text{Cov}[h(X_{\S}; \x)]^{-1}\text{Cov}[\mathbf 1_{\{Y\leq \x\}}, h(X_{\S}; \x)].\label{alpha}
\end{align}
Note $\alpha(\x)$ is undefined if $\text{Cov}[h(X_{\S}; \x)]=0$.
In this case, the value of the estimator $\widehat{F}_\S(\x)$ does not depend on $\alpha(\x)$, and we set $\alpha(\x)$ to $0$ for convenience. 
The quantity $\widehat{F}_\S(\x)$ is an unbiased estimator for $F_Y(\x)$ with variance 
\begin{align*}
\text{Var}[\widehat{F}_\S(\x)] = \frac{(1-\rho_\S^2(\x))F_Y(\x)(1-F_Y(\x))}{m} + \frac{\rho_\S^2(\x)F_Y(\x)(1-F_Y(\x))}{N_\S},
\end{align*}
where
\begin{align}\label{eq:rhoS-def}
\rho_\S(\x) = \text{Corr}[\mathbf 1_{\{Y\leq \x\}}, h(X_{\S}; \x)].
\end{align} 

\subsection{A control variates loss function}\label{ssec:loss}

To measure the overall accuracy of $\widehat{F}_\S(\x)$, we introduce the loss $L_\S$ defined by the average $\omega(\x)$-weighted $L^2$-norm square of $\widehat{F}_\S(\x)-F_Y(\x)$:
\begin{align}\label{eq:loss-def}
L_\S \coloneqq \E\left[\int_{\R^d}\omega(\x)|\widehat{F}_\S(\x)-F_Y(\x)|^2 \text{d}\x\right],
\end{align}
where $\omega(\x): \R^d\to\R_{\geq 0}$ is a weight function. 
The $\omega(\x)$-weighted $L^2$-norm square is related to other more widely used metrics on distributions, 
e.g., it reduces to the Cram\'{e}r--von Mises distance when $\omega(\x)\text{d}\x=\text{d}F_Y(\x)$. 
To estimate $L_\S$, note that by Tonelli's theorem, we have, 
\begin{align}
  L_\S = \int_{\R^d} \omega(\x) \text{Var}[\widehat{F}_\S(\x)] \text{d}\x =  \frac{k_1(\S)}{m} + \frac{k_2(\S)}{B-c_{\text{epr}}m} \label{loss},
\end{align}
where 
\begin{align}\label{eq:k12-def}
k_1(\S) &= \int_{\R^d}\omega(\x)(1-\rho_\S^2(\x))F_Y(\x)(1-F_Y(\x)) \text{d}\x \nonumber\\
k_2(\S) &= c_\S\int_{\R^d}\omega(\x)\rho_\S^2(\x)F_Y(\x)(1-F_Y(\x))\text{d}\x.
\end{align}
Since $k_1(\S)$ and $k_2(\S)$ are nonnegative, a sufficient and necessary condition for $k_1(\S)$ and $k_2(\S)$ being well-defined (i.e. finite) is 
\begin{align}
k_1(\S) + c_\S^{-1}k_2(\S) = \int_{\R^d}\omega(\x) F_Y(\x)(1-F_Y(\x))\text{d}\x <\infty, \label{pangwa}
\end{align}
but this need not hold for arbitrary choice of $\omega$. 
For instance, when $\omega(\x)\equiv 1$, \eqref{pangwa} is true when $d = 1$ if $\E[|Y|^{1+\delta}]<\infty$ for some $\delta>0$.
However, when $d\geq 2$, \eqref{pangwa} is generally not true when the support for the distribution of $Y$ is unbounded since $F_Y^{-1}([\e, 1-\e])$ may have infinite Lebesgue measure in $\R^d$ for some $\e>0$.  
For such scenarios, requiring that $\omega(\x)$ is integrable ensures \eqref{pangwa}, i.e.,
\begin{align*}
\int_{\R^d}\omega(\x) F_Y(\x)(1-F_Y(\x))\text{d}\x<\int_{\R^d}\omega(\x)\text{d}\x<\infty. 
\end{align*}
Some typical choices for integrable $\omega(\x)$ include $\omega(\x) = \mathbf 1_{\mathcal T}$ where $\mathcal T\subset \R^d$ is a bounded domain of interest or $\omega(\x)$ with reasonably fast decaying tails as $\|\x\|_2\to\infty$. In the following discussion, we assume \eqref{pangwa} holds (and later codify this as \Cref{ass:omega}). We make different choices for $\omega$ in our numerical results of \Cref{sec:6}.

\subsection{Exploration-exploitation trade-off}\label{sec:22}

Equation \eqref{loss} is similar to the exploration-exploitation loss trade-off that was originally formulated in \cite{xu2022bandit}, where $k_1$ and $k_2$ measure the errors committed by the exploration and the exploitation, respectively. 
Note that $L_\S$ is a strictly convex function for a valid exploration rate $m$, i.e., for $0 < m < B/{c_{\textrm{epr}}}$, and achieves its unique minimum at $m^*_\S$ with corresponding minimum loss $L_\S^*$:
\begin{align}\label{optm}
  m^*_\S &= \frac{B}{c_{\text{epr}}+\sqrt{\frac{c_{\text{epr}}k_2(\S)}{k_1(\S)}}}, & 
  L_\S^* \coloneqq \min_{0<m<\frac{B}{c_{\text{epr}}}}L_\S(m) &= \frac{(\sqrt{c_{\text{epr}}k_1(\S)} + \sqrt{k_2(\S)})^2}{B} 
      \eqqcolon \frac{\gamma_\S}{B}. 
\end{align}
An optimal subset $\S$ is the one that minimizes the $m$-optimized loss value,
\begin{align}\label{eq:Sopt-oracle}
\S^* = \arg\min_{\S\subseteq \{1:n\}}\gamma_\S.
\end{align}
A uniform exploration policy is called \emph{optimal} if it collects $m^*_{\S^*}$ joint samples for exploration and uses model $\S^*$ for exploitation. 
This is, in effect, a model selection procedure, as an optimal exploration policy selects the model subset that yields the smallest error via optimally balancing the trade-off between exploration and exploitation. In the following discussion, we assume $\S^*$ is unique.   

As a benchmark to the procedure above (with oracle information), one can consider an empirical (ECDF) procedure that devotes the full budget to sampling the high-fidelity model $Y$, ignoring the lower-fidelity models. The following result relates the error between these two approaches.
\begin{Lemma}\label{lem:rel-efficiency}
  With $L^\ast_{\S^\ast} = \gamma_{\S^*}/B$ the minimum error achieved by a uniform exploration policy as described above, and $c_\ex\int_{\R^d}\omega(\x) F_Y(\x)(1-F_Y(\x))\text{d}\x/B$ the expected error achieved by an ECDF estimator for $F_Y$, then 
  \begin{align*}
    \frac{c_\ex\int_{\R^d}\omega(\x) F_Y(\x)(1-F_Y(\x))\text{d}\x/B}{\gamma_{\S^*}/B} \geq \frac{1}{2\left(\frac{c_\S}{c_\ex} + \E_{Z}[(1-\rho_{\S^*}^2(Z))]\right)} \geq\frac{1}{4}
  \end{align*}
  where $Z$ is a random variable with (unnormalized) density $ \omega(\z) F_Y(\z)(1-F_Y(\z))$.  
\end{Lemma}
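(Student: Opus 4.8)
The plan is to reduce the stated ratio to an elementary inequality by factoring out the common normalizing mass of $Z$. First I would set
\begin{align*}
M := \int_{\R^d}\omega(\x)F_Y(\x)(1-F_Y(\x))\,\text{d}\x,
\end{align*}
which is finite by \eqref{pangwa} and is exactly the normalizing constant for the (unnormalized) density of $Z$. With this, the definitions in \eqref{eq:k12-def} become $k_1(\S^*) = M\,\E_Z[1-\rho_{\S^*}^2(Z)]$ and $k_2(\S^*) = c_{\S^*}M\,\E_Z[\rho_{\S^*}^2(Z)]$. Writing $a := \E_Z[1-\rho_{\S^*}^2(Z)]\in[0,1]$, so that $\E_Z[\rho_{\S^*}^2(Z)] = 1-a$, the minimal-loss constant from \eqref{optm} factors as
\begin{align*}
\gamma_{\S^*} = \left(\sqrt{c_\ex k_1(\S^*)} + \sqrt{k_2(\S^*)}\right)^2 = M\left(\sqrt{c_\ex\,a} + \sqrt{c_{\S^*}(1-a)}\right)^2.
\end{align*}
Since the ECDF error in the numerator equals $c_\ex M/B$, the common factor $M/B$ cancels and the ratio reduces to $c_\ex\big/\left(\sqrt{c_\ex a} + \sqrt{c_{\S^*}(1-a)}\right)^2$.

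Next I would apply the elementary bound $(\sqrt{u}+\sqrt{v})^2 = u+v+2\sqrt{uv}\leq 2(u+v)$ (AM--GM on the cross term) with $u = c_\ex a$ and $v = c_{\S^*}(1-a)$ to the denominator, which gives
\begin{align*}
\frac{c_\ex}{\left(\sqrt{c_\ex\,a} + \sqrt{c_{\S^*}(1-a)}\right)^2} \geq \frac{c_\ex}{2\left(c_\ex\,a + c_{\S^*}(1-a)\right)}.
\end{align*}
Using $1-a\leq 1$ to replace $c_{\S^*}(1-a)$ by $c_{\S^*}$ and then dividing numerator and denominator by $c_\ex$ yields the first claimed inequality $\tfrac{1}{2(a + c_{\S^*}/c_\ex)}$, with $a = \E_Z[1-\rho_{\S^*}^2(Z)]$.

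For the second inequality I would only need the two boundedness facts $a\le 1$ (it is an average of $1-\rho^2\in[0,1]$) and $c_{\S^*}/c_\ex\le 1$ (since $\S^*\subseteq\{1:n\}$ gives $c_{\S^*}\le\sum_{i=1}^n c_i < c_\ex$ because $c_0>0$). Together these give $a + c_{\S^*}/c_\ex\le 2$, hence the ratio is at least $1/4$.

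The computation carries no real analytic difficulty, so the only thing to get right is the bookkeeping. The one step worth care is the probabilistic reparameterization that turns the two integrals $k_1,k_2$ into expectations against the single measure $Z$ with a shared normalizing mass $M$, so that $M$ (and $B$) cancel cleanly; after that the result is just AM--GM together with $a,\,c_{\S^*}/c_\ex\in[0,1]$. I would also note in passing that the intermediate bound $\tfrac{1}{2(a + (c_{\S^*}/c_\ex)(1-a))}$ is slightly sharper than the stated $\tfrac{1}{2(a + c_{\S^*}/c_\ex)}$, the latter being obtained by the harmless relaxation $1-a\le 1$.
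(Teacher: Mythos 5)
Your proof is correct and follows essentially the same route as the paper's: rewrite $k_1(\S^*)$ and $k_2(\S^*)$ as expectations against the normalized measure of $Z$, bound the denominator via $(\sqrt{u}+\sqrt{v})^2\leq 2(u+v)$, and finish with $\E_Z[\rho_{\S^*}^2(Z)]\leq 1$ and $c_{\S^*}\leq c_\ex$. The only (harmless) difference is your explicit bookkeeping with $a=\E_Z[1-\rho_{\S^*}^2(Z)]$ and your observation that the intermediate bound before relaxing $1-a\leq 1$ is slightly sharper.
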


\begin{proof}
We have
\begin{align*}
\frac{c_\ex\int_{\R^d}\omega(\x) F_Y(\x)(1-F_Y(\x))\text{d}\x/B}{\gamma_{\S^*}/B}&= \frac{c_\ex\int_{\R^d}\omega(\x)F_Y(\x)(1-F_Y(\x))\text{d}\x}{\left(\sqrt{c_{\text{epr}}k_1(\S^*)}+ \sqrt{k_2(\S^*)}\right)^2}\nonumber\\
&\geq \frac{c_\ex\int_{\R^d}\omega(\x)F_Y(\x)(1-F_Y(\x))\text{d}\x}{2\left(c_{\text{epr}}k_1(\S^*) + k_2(\S^*)\right)}\nonumber\\
&\geq \frac{1}{2\left(\frac{c_\S}{c_\ex} + \E_{\x}[(1-\rho_{\S^*}^2(\x))]\right)} \geq\frac{1}{4}, 
\end{align*}
where the expectation $\E_\x[\cdot]$ is taken with respect to 
\begin{align*}
    \x\sim \frac{\omega(\x) F_Y(\x)(1-F_Y(\x))\text{d}\x}{\int_{\R^d} \omega(\z) F_Y(\z)(1-F_Y(\z))\text{d}\z},
\end{align*}
and the last inequality follows by noting $c_\S\leq c_\ex$ and $0\leq \E_{\x}[(1-\rho_{\S^*}^2(\x))]\leq 1$. 
\end{proof}
Hence, the relative efficiency of a uniform exploration policy compared to the ECDF estimator is unconditionally bounded below by $1/4$, and hence the uniform exploration policy can at worst realize a loss value of $4$ times a naive ECDF procedure. On the other hand, the relative efficiency is $\gg 1$ if both $c_\S/c_\ex$ and $\E_{\x}[(1-\rho_{\S^*}^2(\x))]$ are small. 
This happens, for instance, if $X_{\S^*}$ has a much smaller sampling cost than $Y$ and $h(X_\S; \x)$ are ``good'' control variates for $\mathbf 1_{\{Y\leq \x\}}$ uniformly for all $\x\in\R^d$, both of which are realistic occurrences in multifidelity applications.  

\section{Choosing control variates from linear approximations}\label{sec:4}

We propose a procedure for selecting the control variate $h$, which boils down to constructing approximations of $\E[\mathbf 1_{\{Y\leq\x\}}|X_\S]$ that both retain high correlation with $Y$ and are budget-friendly.
While one may generate special forms for approximations in particular cases, our goal is a simple and generic choice that is useful for many practical applications.

Recall that $X_{\S^+} = (1, X^\top_i)_{i\in\S}^\top\in\R^{d_\S + 1}$.
For $i\in \{1:d\}$, let $\bm{\beta}^{(i)}_{\S^+}$ be the optimal linear projection coefficients for estimating the $i$th component of $Y$ using $X_{\S^+}$:
\begin{align}\label{eq:betaplus}
  \bm{\beta}^{(i)}_{\S^+} &= (\E[X_{\S^+}X_{\S^+}^\top])^{-1}\text{Cov}[X_{\S^+}, Y^{(i)}]\in\R^{d_\S + 1}, & 
  \bm B_{\S^+} &= [\bm{\beta}^{(1)}_{\S^+}, \cdots, \bm{\beta}^{(d)}_{\S^+}]\in\R^{(d_\S+1)\times d},
\end{align}
The least squares approximation of $Y$ using linear combinations of $X_\S$ and $1$ is given by
\begin{align*}
  H_\S(X_\S) := (X_{\S^+}^\top\bm B_{\S^+})^\top = \begin{bmatrix}
X_{\S^+}^\top \bm\beta^{(1)}_{\S^+}\\
\vdots\\
X_{\S^+}^\top \bm\beta^{(d)}_{\S^+}
\end{bmatrix}\in\R^d. 
\end{align*}
When all quantities are scalars, i.e., $d=d_1 = \cdots = d_n =1$, one can directly manipulate $H_\S$ to estimate the statistics of $Y$ \cite{xu2022bandit, xu2021budget}. 
Such an approach is easy to implement and enjoys certain robustness for first-order statistics, but is more prone to model misspecification effects (e.g. expressibility of the linear model, noise assumption, etc.) when the whole distribution of $Y$ is to be learned due to the limitation of linear approximation \cite{xu2021budget}.

To address the issue, we take an additional nonlinear step beyond $H_\S$. 
In particular, we consider the following family of control variates that slice the estimator $H_\S$:
\begin{align}
h(X_\S; \x) = \mathbf 1_{\{H_\S\leq \x\}}.\label{linear:h}
\end{align} 
Intuitively, we may expect $\mathbf 1_{\{Y\leq \x\}}$ and $h(X_\S; \x)$ to be correlated if $\E[\|Y-H_\S\|^2_2]$ is small.
However, this may not be true for $\x$ approaching the tails of $Y$.
For instance, assuming $d=1$ and a standard joint Gaussian random vector $(X, Y)$ with correlation $\rho$, 
\begin{align*}
  \lim_{x\to-\infty}\text{Corr}(\mathbf 1_{\{Y\leq x\}}, \mathbf 1_{\{X\leq x\}}) &= \lim_{x\to 0}\frac{C_{X,Y}(x,x)}{x} = \left\{ \begin{array}{rl} 1, & |\rho| = 1, \\ 0, & |\rho| < 1 \end{array}\right.
\end{align*}  
where $C_{X, Y}(x, y) = \P\left(\Phi^{-1}(X)\leq x, \Phi^{-1}(Y)\leq y\right)$ is the Gaussian copula and $\Phi^{-1}$ is the quantile of a standard normal distribution; see \cite{mcneil2015quantitative}. Hence, $\mathbf 1_{\{X \leq x\}}$ is not a good control variate for $\mathbf 1_{\{Y\leq x\}}$ when $|x|\to\infty$ unless $|\rho| = 1$, i.e., only if $X \propto Y$.
Nevertheless, our experiments in Section \ref{sec:6} show that in practice $h(X_\S; \x)$ provides a reasonable control variates choice for many scenarios in multifidelity simulations, and thus suggests that situations described above are less common for the applications of our interest. We discuss computational aspects of using \eqref{linear:h} as control variates in Section \ref{sorted-comp}.

Choosing $h$ as in \eqref{linear:h}, the coefficient $\alpha(\x)$ in \eqref{alpha} can be explicitly computed as 
\begin{align}\label{here}
\alpha(\x) = 
\begin{cases}
\frac{F_{Y\vee H_\S}(\x) - F_{Y}(\x)F_{H_\S}(\x)}{F_{H_\S}(\x)(1-F_{H_\S}(\x))}&\x\in\text{supp}(F_{H_\S}(\x))^\circ
\\
0& \text{otherwise}
\end{cases}
.
\end{align}
One useful technical result is that $\alpha(\x)$ is bounded.
\begin{Lemma}\label{kang}
Let $\alpha(\x)$ be given as in \eqref{here}. Then, $|\alpha(\x)|\leq 1$. 
\end{Lemma}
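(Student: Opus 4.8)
The plan is to recognize $\alpha(\x)$ as the slope of an ordinary least squares regression of one Bernoulli variable on another and to exploit the binary structure of the regressor. Concretely, I would set $A \coloneqq \mathbf 1_{\{Y\leq \x\}}$ and $B \coloneqq \mathbf 1_{\{H_\S\leq \x\}}$, and write $p = F_Y(\x) = \E[A]$, $q = F_{H_\S}(\x) = \E[B]$, and $r = F_{Y\vee H_\S}(\x)$. Since $\{Y\vee H_\S \leq \x\} = \{Y\leq \x\}\cap\{H_\S\leq \x\}$, we have $r = \E[AB]$, so that $\text{Cov}[A,B] = r - pq$ and $\text{Var}[B] = q(1-q)$; comparing with \eqref{alpha} confirms that on $\text{supp}(F_{H_\S})^\circ$ (where $0 < q < 1$) the expression \eqref{here} is exactly $\alpha(\x) = \text{Cov}[A,B]/\text{Var}[B]$. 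On the complementary set $\alpha(\x)$ is defined to be $0$, for which the bound holds trivially, so it suffices to treat the case $0<q<1$.

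The key step is the identity $\alpha(\x) = \E[A\mid B=1] - \E[A\mid B=0]$, which holds because $B$ is binary. I would verify it by direct computation, using $\E[A\mid B=1] = r/q$ and $\E[A\mid B=0] = (p-r)/(1-q)$:
\begin{align*}
  \frac{r}{q} - \frac{p-r}{1-q} = \frac{r(1-q) - q(p-r)}{q(1-q)} = \frac{r - pq}{q(1-q)} = \alpha(\x).
\end{align*}
Geometrically, an OLS fit against a two-valued regressor interpolates the two conditional means exactly, and $\alpha(\x)$ is the slope of the line through $(0, \E[A\mid B=0])$ and $(1, \E[A\mid B=1])$.

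To conclude, both $\E[A\mid B=1] = \P(Y\leq \x \mid H_\S\leq \x)$ and $\E[A\mid B=0] = \P(Y\leq \x \mid H_\S\not\leq \x)$ are conditional probabilities of the event $\{A=1\}$, hence lie in $[0,1]$; their difference therefore lies in $[-1,1]$, giving $|\alpha(\x)|\leq 1$. There is no genuinely hard step here: the only real insight is the binary-regressor identity, after which the bound is immediate. As an alternative route that avoids conditional expectations, one can instead invoke the Fréchet--Hoeffding bounds $\max(0,p+q-1)\leq r\leq \min(p,q)$ for the joint law of the two indicators and check the two inequalities $-q(1-q)\leq r-pq\leq q(1-q)$ by a short case analysis on the signs of $p-q$ and $p+q-1$; this is routine but slightly more tedious than the conditioning argument above.
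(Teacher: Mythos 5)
Your proposal is correct, and your primary argument takes a genuinely different route from the paper's. The paper proves the bound by a direct inequality chain on the ratio: it invokes the Fr\'echet--Hoeffding-type bounds $F_{Y\vee H_\S}(\x)\leq F_Y(\x)\wedge F_{H_\S}(\x)$ and $F_{Y\vee H_\S}(\x)\geq \max\{0,\, F_Y(\x)+F_{H_\S}(\x)-1\}$, substitutes each into the numerator of \eqref{here}, and simplifies to expressions of the form $\frac{F_Y(\x)}{F_{H_\S}(\x)}\wedge\frac{1-F_Y(\x)}{1-F_{H_\S}(\x)}$, each factor of which is at most $1$. That is precisely the ``alternative route'' you sketch in your closing remark, so in a sense you have subsumed the paper's proof as a corollary of your setup. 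Your main argument --- identifying $\alpha(\x)$ as the OLS slope of one Bernoulli variable on another and using the binary-regressor identity $\alpha(\x)=\P(Y\leq\x\mid H_\S\leq\x)-\P(Y\leq\x\mid H_\S\not\leq\x)$ --- is cleaner and more conceptual: the bound becomes an immediate consequence of two conditional probabilities lying in $[0,1]$, and it additionally explains when the bound is attained. The computation verifying the identity is correct ($r/q-(p-r)/(1-q)=(r-pq)/(q(1-q))$), you correctly identify $F_{Y\vee H_\S}(\x)=\E[AB]$ via $\{Y\vee H_\S\leq\x\}=\{Y\leq\x\}\cap\{H_\S\leq\x\}$, and you handle the degenerate case (denominator zero, $\alpha$ set to $0$) exactly as the paper does. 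The only thing the paper's direct chain buys is that it never needs to condition on the events $\{B=1\}$ and $\{B=0\}$, but since both proofs operate only where $0<F_{H_\S}(\x)<1$, this is immaterial.
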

\begin{proof}
It suffices to check that for $\x\in\text{supp}(F_{H_\S}(\x))^\circ$, $\alpha(\x)\leq 1$ and $-\alpha(\x)\leq 1$ hold simutaneously: 
\begin{align*}
\frac{F_{Y\vee H_\S}(\x) - F_{Y}(\x)F_{H_\S}(\x)}{F_{H_\S}(\x)(1-F_{H_\S}(\x))}&\leq\frac{F_{Y}(\x)\wedge F_{H_\S}(\x) - F_{Y}(\x)F_{H_\S}(\x)}{F_{H_\S}(\x)(1-F_{H_\S}(\x))}\\
&= \frac{F_Y(\x)}{F_{H_\S}(\x)}\wedge \frac{1-F_Y(\x)}{1-F_{H_\S}(\x)}\leq 1,
\end{align*}
and
\begin{align*}
\frac{F_{Y}(\x)F_{H_\S}(\x) - F_{Y\vee H_\S}(\x)}{F_{H_\S}(\x)(1-F_{H_\S}(\x))}&\leq\frac{F_{Y}(\x)F_{H_\S}(\x) - F_{Y}(\x) + 1 - F_{H_\S}(\x)}{F_{H_\S}(\x)(1-F_{H_\S}(\x))}\wedge \frac{F_{Y}(\x)F_{H_\S}(\x)}{F_{H_\S}(\x)(1-F_{H_\S}(\x))}\\
&\leq\frac{1-F_Y(\x)}{F_{H_\S}(\x)}\wedge\frac{F_Y(\x)}{1-F_{H_\S}(\x)}\leq 1.  
\end{align*}
\end{proof}

\section{Algorithms}\label{sec:5}
We revisit the cvMDL algorithm in \Cref{fig:flowchart}: the loss function $L_\S$ in \eqref{loss} is the desired loss function to optimize over but requires oracle statistics (i.e. $k_1(\S)$ and $k_2(\S)$). Thus, we replace it with an approximation $\widehat{L}_\S$ that we describe in this section. Additionally, the computations in the ``Analyze Models'' step are now more transparent: The oracle computations are given by \eqref{eq:Sopt-oracle} and \eqref{optm}. In a practical algorithmic setting, we replace these with consistent approximate computations, which is the topic of this section.

When using approximate quantities to compute $L_\S$, the explicit exploration-exploitation loss decomposition in \eqref{loss} may no longer be true. Nevertheless, if the quantities we estimate are sufficiently accurate, then such a decomposition is expected to be approximately valid. Thus, in devising practical algorithms, we use the oracle loss form \eqref{loss} (with estimated coefficients) instead of \eqref{eq:loss-def} as the criteria for model selection. 
We present in the numerical section some empirical evidence that such a replacement has little impact on model selection. 

Since the proposed estimators change when new exploration samples are collected, the dependence on this number of exploration samples must be made explicit. For $\S\subseteq \{1:n\}$, we let $\widehat{L}_\S(\cdot;t)$ denote the estimated loss function $L_\S$ after having collected $t$ exploration samples. We then let $\widehat{m}^*_\S$ be the corresponding estimator for the optimal exploration sample size $m^\ast_\S$. 
Summarizing this: the intuition behind the cvMDL algorithm is that we use currently collected exploration data ($t$ samples) to find the estimated optimal model ($\widehat{\S}^\ast$) and the corresponding exploration rate ($\widehat{m}^\ast_{\widehat{\S}^\ast}$). Based on the value of $\widehat{m}^\ast_{\widehat{\S}^\ast}$ relative to $t$, we decide whether to continue to explore or to switch to exploitation.

\subsection{Estimators for oracle quantities}\label{sorted-comp}

In this section, we discuss how to estimate $L_{\S}$, $m^*_\S$, and $\alpha(\x)$ from exploration data when instantiating cvMDL using the linear approximators as introduced in Section \ref{sec:4}.
The control variates $h(X_\S; \x) = \mathbf 1_{\{H_\S(X_\S)\leq \x\}} $ belong to a parametric family characterized by $\bm\beta^{(i)}_{\S^+}, i\in \{1:d\}$ from \eqref{eq:betaplus}, which can be estimated from exploration data. 

Recall from \eqref{onta1} that the $\ell$th exploration sample of all low-fidelity models in $\S$ is denoted by $X_{\ex, \ell, \S}$.
Similarly, we define $X_{\ex, \ell, \S^+} := (1, X_{\ex, \ell, \S}^\top)^\top$.  
To estimate $\bm\beta_{\S^+}^{(i)}$, we use the least-squares estimator:
\begin{align}\label{eq:betahatS}
&\widehat{\bm\beta}^{(i)}_{\S^+} = \ZZ_\S^\dagger\bm Y^{(i)}& \ZZ_\S = \begin{bmatrix}
X^\top_{\ex, 1, \S^+} \\
\vdots\\
X^\top_{\ex, m, \S^+} \\
\end{bmatrix}\in\R^{m\times (d_\S + 1)}
\ \ \ \bm Y^{(i)} = \begin{bmatrix}
Y^{(i)}_{\ex, 1}\\
\vdots\\
Y^{(i)}_{\ex, m}
\end{bmatrix}\in\R^m,
\end{align}
where $(X^\top_{\ex, \ell, 1}, \ldots, Y^\top_{\ex,\ell})^\top_{\ell\in \{1:m\}}$ are joint exploration samples, and the design matrix $\ZZ_\S$ is assumed to have full column rank\footnote{This motivates the minimal exploration size condition in Algorithm \ref{fig:flowchart}, which is a neccesary condition for full rank here.}.

For $\x\in\R^d$, $h(X_\S; \x)$ can be estimated as 
\begin{align}\label{eq:hShat-HShat}
  &\widehat{h}(X_\S; \x) = \mathbf 1_{\{\widehat{H}_\S(X_\S)\leq \x\}}&\widehat{H}_\S(X_\S) \coloneqq \widehat{\bm B}_{\S^+}^\top X_{\S^+} \eqqcolon 
\begin{bmatrix}
X^\top_{\S^+}\widehat{\bm\beta}^{(1)}_{\S^+}\\
\vdots\\
X^\top_{\S^+}\widehat{\bm\beta}^{(d)}_{\S^+}
\end{bmatrix},
\end{align}
For ease of notation, we write $\widehat{H}_\S(X_\S)$ and $\widehat{h}_\S(X_\S; \x)$ as $\widehat{H}_\S$ and $\widehat{h}_\S$ when $X_{\S^+}$ is generic and not necessarily related to the exploration and exploitation data. 
We introduce some additional notation for quantities involving both estimated coefficients and empirical CDFs using exploration data:
\begin{align*}
\widehat{F}_Y(\x) &= \frac{1}{m}\sum_{\ell\in \{1:m\}}\mathbf 1_{\{Y_{\ex, \ell}\leq \x\}}\\
\widehat{F}_{\widehat{H}_\S}(\x) &= \frac{1}{m}\sum_{\ell\in \{1:m\}}\widehat{h}(X_{\ex, \ell, \S}; \x) = \frac{1}{m}\sum_{\ell\in \{1:m\}}\mathbf 1_{\{\widehat{H}_\S(X_{\ex, \ell, \S})\leq \x\}}\\
\widehat{F}_{Y\vee\widehat{H}_\S}(\x) & = \frac{1}{m}\sum_{\ell\in \{1:m\}}\mathbf 1_{\{{Y_{\ex, \ell} \vee \widehat{H}_\S(X_{\ex, \ell, \S})}\leq \x\}}.
\end{align*}

To compute the loss function approximation, we build approximations to $k_1$ and $k_2$ in \eqref{eq:k12-def}, which requires us to compute $\rho_\S^2$ in \eqref{eq:rhoS-def}. For this purpose, observe that
\begin{align*}
(1-\rho_\S^2(\x))F_Y(\x)(1-F_Y(\x)) = \E[(\mathbf 1_{\{Y\leq\x\}}-F_Y(\x))-\alpha(\x)(\mathbf 1_{\{H_\S\leq \x\}}-F_{H_\S}(\x))]^2,
\end{align*}
where $\alpha(\x)$ is defined in \eqref{here}. 
The quantity in the expectation is the mean squared regression residual between two centered Bernoulli random variables $\mathbf 1_{\{Y\leq\x\}}$ and $\mathbf 1_{\{H_\S\leq \x\}}$. 
Thus, a natural estimator for $(1-\rho_\S^2(\x))F_Y(\x)(1-F_Y(\x))$ is to compute an empirical mean-squared difference between $\mathbf 1_{\{Y \leq x\}}$ and a regressor with covariates $\mathbf 1_{\{\widehat{H}_\S \leq x\}}$, which requires data for $Y$. Since we have (uncentered) data for $Y$ on the exploration samples $Y_{\ex,j}$ for $j \in \{1:m\}$, we can evaluate a regressor for $Y$ with covariates $\widehat{H}_\S$ together with the intercept term on the exploration data sites. This results in the following estimator $\mathcal{K}_1$ for $(1-\rho_\S^2(\x))F_Y(\x)(1-F_Y(\x))$
\begin{align}
&\mathcal K_1(\x) = \frac{1}{m}\sum_{\ell\in \{1:m\}}(\mathbf 1_{\{Y_{\ex, \ell}\leq \x\}} - r_\ell(\x))^2& 
\begin{bmatrix}
r_1(\x)\\
\vdots\\
r_m(\x)
\end{bmatrix} = \W_\S\W_\S^\dagger \begin{bmatrix}
\bm 1_{\{Y_{\ex, 1}\leq \x\}}\\
\vdots\\
\bm 1_{\{Y_{\ex, m}\leq \x\}}
\end{bmatrix},
\label{my1}
\end{align}
where 
\begin{align*}
&\W_\S =  \begin{bmatrix}
1 &\widehat{h}_\S(X_{\ex,1,{\S}}; \x)\\
\vdots & \vdots\\
1 & \widehat{h}_\S(X_{\ex,m,{\S}}; \x)
\end{bmatrix}
.
\end{align*}
This allows us to estimate $\rho_\S^2(\x)F_Y(\x)(1-F_Y(\x))$ as
\begin{align}\label{eq:calK2}
\mathcal K_2(\x) &= \widehat{F}_Y(\x)(1- \widehat{F}_Y(\x)) - \mathcal K_1(\x).
\end{align}
Consequently, we can estimate $k_1(\S)$ and $k_2(\S)$ as 
\begin{align}
&\widehat{k}_1(\S) = \int_{\R^d}\mathcal \omega(\x)\mathcal K_1(\x) \text{d}\x&\widehat{k}_2(\S) = c_\S\int_{\R^d}\omega(\x)\mathcal K_2(\x) \text{d}\x.\label{lovepang}
\end{align}
The above estimators for $k_1(\S)$ and $k_2(\S)$ are positive and coincide with empirical estimators for these quantities whenever defined (see \Cref{appx:3}), which is a crucial realization for our consistency results later. 
Plugging the above estimates into \eqref{loss} and \eqref{optm} yields estimates for $L_{\S}$ and $m^*_\S$:
\begin{align}
&\widehat{L}_\S(z; m) = \frac{\widehat{k}_1(\S)}{z} + \frac{\widehat{k}_2(\S)}{B-c_{\text{epr}}z}&\widehat{m}^*_\S = \frac{B}{c_{\text{epr}}+\sqrt{\frac{c_{\text{epr}} \widehat{k}_2(\S)}{\widehat{k}_1(\S)}}}.\label{lm}
\end{align} 
Note that $\widehat{L}_\S(z; m)$ has a parameter $m$ indicating the number of exploration samples used to compute $\widehat{k}_1(\S)$ and $\widehat{k}_2(\S)$, and a variable $z$ denoting the exploration rate where to evaluate $\widehat{L}_\S$. We define $\widehat{\S}^\ast$ as the optimal model selected by this estimator,
\begin{align}\label{eq:Shatast}
  \widehat{\S}^\ast = \argmin_{\S \subseteq \{1:n\}} \widehat{L}_\S(\widehat{m}^\ast_\S; m),
\end{align}
which parallels the oracle choice \eqref{eq:Sopt-oracle}. We have described all quantities needed to complete the exploration phase of \Cref{fig:flowchart}. What remains is to describe how the CDF estimator $\widetilde{F}_{\widehat{\S}^\ast}$ in the exploitation phase of \Cref{fig:flowchart} is generated.

Our exploitation goal is to generate an estimator for \eqref{panger}, and so we need to estimate $\alpha(\x)$:
\begin{align}\label{myalpha}
&\widehat{\alpha}(\x) =  \frac{\widehat{F}_{Y\vee\widehat{H}_\S}(\x) - \widehat{F}_{Y}(\x)\widehat{F}_{\widehat{H}_\S}(\x)}{\widehat{F}_{\widehat{H}_\S}(\x)(1-\widehat{F}_{\widehat{H}_\S}(\x))}&\x\in\text{supp}(\widehat{F}_{\widehat{H}_\S}(\x))^\circ,
\end{align}
and $\widehat{\alpha}(\x) = 0$ zero otherwise.
By a similar reasoning as in Lemma \ref{kang}, one has 
\begin{align}
|\widehat{\alpha}(\x)|\leq 1. \label{bd2}
\end{align}

Finally, the \textit{exploitation} estimator $\widetilde{F}_\S(\x)$ for $F_Y(\x)$ based on estimated parameters, utilizes $N_\S$ exploitation samples (i.e., exhausts the remaining budget $B$) and is given by,
\begin{align}
  \widetilde{F}_{\S}(\x) \coloneqq
  \widehat{F}_Y(\x) -  \frac{1}{m}\sum_{\ell\in \{1:m\}}\left(\widehat{\alpha}(\x)\widehat{h}_\S(X_{\ex, \ell, \S} ;\x)- \frac{1}{N_\S}\sum_{j\in \{1:N_\S\}}\widehat{\alpha}(\x)\widehat{h}_\S(X_{\ext, \ell, \S}; \x)\right),\label{fug}
\end{align}
where $\S = \widehat{\S}^\ast$ is the selected model based on $\widehat{k}_1(\S)$ and $\widehat{k}_2(\S)$. 
By inspection, we observe that $\widetilde{F}_\S(\x)$ is a piecewise affine correction of $\widehat{F}_Y$, where the correction is based on the control variates $\widehat{h}_\S$.
\begin{Rem}
  The estimator $\widehat{\alpha}(\x)$ is undefined and manually set to zero for $\x$ outside the support of $\widehat{F}_{\widehat{H}_\S}$, as in that case the denominator vanishes. 
   Alternatively, one can define $\widehat{\alpha}(\x)$ for $\x$ outside the support of $\widehat{F}_{\widehat{H}_\S}$ as $\widehat{\alpha}(\x')$ for some $\x'$ inside the support of $\widehat{F}_{\widehat{H}_\S}$ that can be accurately estimated yet remains close to $\x$. To illustrate, consider $d=1$.
Assuming $\alpha(x)$ is a continuous function of $x$ in $\text{supp}(F_{H_\S})$ and $\text{supp}(\widehat{F}_{\widehat{H}_\S}(x)) = [x_{\min}, x_{\max}]$ for some $x_{\min}<x_{\max}$, for $x\in\text{supp}(\widehat{F}_{\widehat{H}_\S}(x))^c$, we may estimate $\alpha(x)$ outside $[x_{\min}, x_{\max}]$ as
\begin{align}\label{myalpha2}
\widehat{\alpha}(x) = \begin{cases}
\frac{\widehat{F}_{Y\vee\widehat{H}_\S}(x(\tau)) - \widehat{F}_{Y}(x(\tau))\tau}{\tau (1-\tau)} & x\leq x_{\min} \\
\frac{\widehat{F}_{Y\vee\widehat{H}_\S}(x(1-\tau)) - \widehat{F}_{Y}(x(1-\tau))(1-\tau)}{\tau (1-\tau)} & x\geq x_{\max},
      \end{cases}
\end{align}
where $x(\tau)$ and $x(1-\tau)$ are the $\tau$ and $1-\tau$ quantiles of $\widehat{F}_{\widehat{H}_\S}$ for some small $\tau\in (0, 1)$:
\begin{align*}
&x(\tau) = \widehat{F}^{-1}_{\widehat{H}_\S}(\tau)&x(1-\tau) = \widehat{F}^{-1}_{\widehat{H}_\S}(1-\tau).
\end{align*}
This allows us to get nontrivial estimates of $F_Y$ outside $[x_{\min}, x_{\max}]$, i.e. in the tail regime. 
When $d\geq 2$, one may generalize the ideas above by projecting the points in the tail regime to some bounded set in $\R^d$ that contains most of the measure in the domain.  
\end{Rem}

\subsection{Monotonicity of the exploitation CDF estimator}\label{ssec:monotonicity}
By construction, $\widetilde{F}_\S(\x)$ is a piecewise constant function on some $d$-dimensional rectangular partition of $\R^d$, but is not necessarily a monotone nondecreasing function in each direction due to the fluctuations of estimators used in the construction.
To address this issue, we introduce a dimension-wise recursive-sorting post-processing procedure on values in the range of $\widetilde{F}_\S$ to recover the desired monotonicity and ensure that we compute an actual distribution function. 
To represent $\widetilde{F}_\S(\x)$ as a $d$-dimensional tensor, we introduce the index set $\bm I = \otimes_{i\in \{1:d\}}(z_{i, 1}, \ldots, z_{i, M_i})$, $-\infty = z_{i, 1}\leq \cdots\leq z_{i, M_i} = +\infty$, where $z_{i, j}$ is the $j$th order statistic of the projected partition points associated with $\widetilde{F}_\S(\x)$, and $M_i$ is the total number of such projected points.
Using this notation, we express $\widetilde{F}_\S(\x)$ as a tensor $\bm T$, where 
\begin{align*}
&\widetilde{F}_\S(\x) = \bm T_{z_{1, s_1}, \cdots, z_{d, s_d}}& \x\in \prod_{i\in \{1:d\}}[z_{i, s_i}, z_{i, s_{i}+1}).
\end{align*}
The desired monotonicity in each dimension can be recovered by an alternating dimension-wise sorting of entries in $\bm T$ until convergence.
The details are given in \Cref{alg-sort}.
\begin{algorithm}
\hspace*{\algorithmicindent} \textbf{Input}: a tensor $\bm T$ that represents the estimated CDF $\widetilde{F}_\S(\x)$\\
    \hspace*{\algorithmicindent} \textbf{Output}: a sorted tensor sorted($\bm T$) with desired monotonicity
 \begin{algorithmic}[1]
   \STATE Initialize sorted($\bm T$) as a all-zeros tensor with the same size as $\bm T$
 \WHILE{sorted($\bm T$)$\neq \bm T$}{
 \STATE $\text{sorted}(\bm T)\gets \bm T$
   \FOR{$i\in \{1: d\}$}{
   \FOR{$\bm z :=(z_{1, s_1}, \ldots, z_{i-1, s_{i-1}}, z_{i+1, s_{i+1}}, \ldots, z_{d, s_{d}})\in\otimes_{j\in \{1:d\}\setminus\{i\}}(z_{j, 1}, \ldots, z_{j, M_i})$}{
   \STATE $\bm T[\bm z, :]\gets\text{sort}(\bm T[\bm z, :])$, where $\bm T[\bm z, :]:=(\bm T_{z_{1, s_1}, \ldots, z_{i, j}, \ldots, z_{d, s_{d}}})_{j\in \{1: M_i\}}$
   }
   \ENDFOR
}
   \ENDFOR}
      \ENDWHILE
     \STATE{Return $\text{sorted}(\bm T)$}
 \end{algorithmic}
\caption{Alternating sorting.} 
\label{alg-sort}
\end{algorithm}

An example when $d=2$ is given below:
\begin{align*}
\begin{bmatrix}
0.7 & 0.4 & 0\\
0.3 & 0.5 & 0.2\\
1 & 0.8 & 0.6
\end{bmatrix}
\xrightarrow{\text{sort rows}}
\begin{bmatrix}
0 & 0.4 & 0.7\\
0.2 & 0.3 & 0.5\\
0.6 & 0.8 & 1
\end{bmatrix}
\xrightarrow{\text{sort columns}}
\begin{bmatrix}
0 & 0.3 & 0.5\\
0.2 & 0.4 & 0.7\\
0.6 & 0.8 & 1
\end{bmatrix}
\\
\begin{bmatrix}
0.7 & 0.4 & 0\\
0.3 & 0.5 & 0.2\\
1 & 0.8 & 0.6
\end{bmatrix}
\xrightarrow{\text{sort columns}}
\begin{bmatrix}
0.3 & 0.4 & 0\\
0.7 & 0.5 & 0.2\\
1 & 0.8 & 0.6
\end{bmatrix}
\xrightarrow{\text{sort rows}}
\begin{bmatrix}
0 & 0.3 & 0.4\\
0.2 & 0.5 & 0.7\\
0.6 & 0.8 & 1
\end{bmatrix}
\end{align*}
As shown above, sorting ends up in some stationary point with desired monotonicity after a finite number of steps (see Theorem \ref{thm:sort}), but different orders of sorting may lead to different sorted CDF representations when $d\geq 2$. 
However, in our case, $\widetilde{F}_\S(\x)$ is itself a perturbation of the CDF of $Y$, so the sorting procedure is often beneficial for stabilizing the algorithm.
A more detailed empirical study on this is given in Section \ref{sec:6}. The sorting procedure described converges (i.e., achieves monotonicity in the values of $\bm{T}$) in a finite number of iterations.
\begin{Th}\label{thm:sort}
Assume that all the entries in $\bm T$ are distinct. 
The alternating sorting algorithm described in \Cref{alg-sort} converges to a stationary point with desired monotonicity within a finite number of iterations. 
\end{Th}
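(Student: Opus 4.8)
The plan is to prove termination by exhibiting a monotone quantity that changes under any non-trivial sorting step and is bounded, forcing convergence in finitely many iterations; then to argue separately that the fixed point reached actually has the desired componentwise monotonicity. First I would set up the key invariant. The natural candidate is a scalar functional of the tensor $\bm T$ that is guaranteed to (weakly) decrease whenever we sort along any axis, and that strictly decreases whenever that sort actually permutes entries. A standard choice is a ``sortedness'' or inversion count: for each one-dimensional fiber $\bm T[\bm z, :]$ obtained by fixing all but one index, count the number of inverted pairs (pairs $j < j'$ with the fiber value at $j$ exceeding the value at $j'$), and sum over all fibers in all $d$ directions. Sorting a fiber in increasing order removes all inversions within that fiber and cannot increase inversions in fibers along the \emph{same} axis; the subtlety is that it may rearrange entries across the other axes. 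The cleaner alternative, which I expect to be the main technical device, is a potential of the form $\Phi(\bm T) = \sum_{\text{indices}} w(\text{index}) \cdot \bm T_{\text{index}}$ with a strictly monotone weight $w$ increasing in each coordinate (e.g. $w(s_1,\dots,s_d) = \sum_i \lambda_i s_i$ with $\lambda_i > 0$, or a product weight). Sorting a fiber to be increasing maximizes the contribution $\sum_j w_j \bm T[\bm z, j]$ of that fiber against an increasing weight, by the rearrangement inequality, and strictly increases it unless the fiber was already sorted.

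The key steps, in order, are: (i) fix the potential $\Phi$ above and verify via the rearrangement inequality that each dimension-wise sort weakly increases $\Phi$, with strict increase precisely when that sort moves at least one entry; (ii) observe that $\Phi$ takes values in a finite set, since the multiset of entries of $\bm T$ is preserved by every sort (each step only permutes entries within fibers) and there are finitely many arrangements of a fixed finite multiset; (iii) conclude that $\Phi$ is a bounded, weakly-monotone-increasing sequence taking finitely many values, hence it must stabilize, and once it stabilizes no sort moves any entry, which is exactly the termination condition \texttt{sorted($\bm T$)$=\bm T$} in \Cref{alg-sort}; (iv) argue that at the fixed point every fiber in every direction is individually sorted in increasing order, which is the definition of componentwise monotonicity for the tensor. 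Step (iv) follows because the algorithm only exits the \texttt{while} loop when a full sweep over all $d$ directions leaves $\bm T$ unchanged, so every fiber is already increasing along every axis.

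The hard part will be handling the interaction between sorts along different axes: sorting along axis $i$ can, in principle, disorder fibers along axis $i'$, so a purely local ``inversions decrease'' argument does not obviously close. This is exactly why I favor the global potential $\Phi$ with a coordinatewise-increasing weight: its monotonicity under a sort along \emph{any} single axis is a clean consequence of the rearrangement inequality applied fiberwise, and it sidesteps the need to track how one axis's sort perturbs another. The distinctness hypothesis on the entries of $\bm T$ guarantees that the rearrangement inequality is strict (no ties), so that any non-identity sort strictly increases $\Phi$; this converts ``weakly increasing and finitely-valued'' into genuine termination rather than mere non-divergence. I would remark that without distinctness one still obtains convergence of $\Phi$, but the fixed point and finite-time termination require the strictness that distinctness provides. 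Finally, the claim in the surrounding text that different orders of sorting may yield different limits for $d \geq 2$ is consistent with this argument: $\Phi$ certifies that \emph{some} monotone fixed point is reached in finite time, but does not single out a unique one, since many distinct componentwise-monotone tensors can share the same multiset of entries.
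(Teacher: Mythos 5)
Your proof is correct, but it takes a genuinely different route from the paper's. The paper argues by induction on the ranked entries $z^{(1)}<\cdots<z^{(M)}$ of $\bm T$: the global minimum migrates monotonically toward the corner index $(1,\ldots,1)$ and then freezes, and, conditional on the $s$ smallest entries having frozen, the $(s+1)$-th smallest entry's index is likewise nonincreasing in each coordinate until it freezes; finiteness of $M$ then gives termination. Your argument instead exhibits a global Lyapunov functional $\Phi(\bm T)=\sum_{s_1,\ldots,s_d} w(s_1,\ldots,s_d)\,\bm T_{s_1,\ldots,s_d}$ with $w$ strictly increasing in each coordinate, and uses the (strict) rearrangement inequality fiberwise: every one-dimensional sort weakly increases $\Phi$, strictly whenever it moves an entry (here is where distinctness enters), while the multiset of entries is invariant, so $\Phi$ ranges over a finite set and can strictly increase only finitely often; termination and fiberwise monotonicity of the fixed point follow. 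Your approach buys a cleaner handling of the cross-axis interaction (a sort along one axis may disorder fibers along another, which your global potential simply absorbs, whereas the paper's entry-tracking argument must implicitly rely on the frozen prefix to control this), and it makes the role of the distinctness hypothesis transparent as the source of strictness. The paper's approach is more elementary (no inequality machinery) and gives a little extra structural information about where each ranked entry ends up. Both proofs are valid; no gap in yours.
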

\begin{proof}
See \Cref{ssec:sorting-proof}.
\end{proof}

\subsection{Exploration sampling}
We next describe the precise action taken when we decide to continue exploring. 
In particular, we need to define the function $Q(m,\widehat{m}_{\widehat{\S}^\ast})$ in \Cref{fig:flowchart}. When the current number $m$ of exploration samples is smaller than the estimated optimal number of samples $\widehat{m}^\ast_{\widehat{\S}^\ast}$, the function $Q$ determines how to increase $m$.
A natural choice for $Q$ is $Q(m,\widehat{m}^\ast_{\widehat{\S}^\ast}) = m+1$, i.e., simply increase by a single additional exploration sample. In practice, we observe that this behavior can be overly conservative and time-consuming when $B$ is large. As an alternative, one could use a more aggressive strategy, say $Q(m,\widehat{m}^\ast_{\widehat{\S}^\ast}) = \frac{1}{2}\left(m + \widehat{m}^\ast_{\widehat{\S}^\ast}\right)$, which more quickly closes the gap between $m$ and $\widehat{m}_{\widehat{\S}^\ast}$. However, there are situations when this is too aggressive. For example, if $m$ is small (such as at initialization) then estimated quantities can be poor approximations, and in some cases $\widehat{m}_{\widehat{\S}^\ast}$ is significantly overestimated, and thus increasing $m$ to $\frac{1}{2}\left(m + \widehat{m}^\ast_{\widehat{\S}^\ast}\right)$ can actually result in substantially overshooting the oracle value of $m^\ast_{\S^\ast}$. The probability of such an event is often positive and does not vanish as $B$ increases. 

As a compromise between these conservative and aggressive behaviors, we choose the following form:
\begin{align}\label{eq:Q-def}
  Q(m, \widehat{m}^\ast_{\widehat{\S}^\ast}) = \left\{
    \begin{array}{rl} 
    2 m, & \ \ \ \ 1 \leq m < \frac{\widehat{m}^\ast_{\widehat{\S}^\ast}}{2} \\
      \frac{1}{2} \left(m + \widehat{m}^\ast_{\widehat{\S}^\ast} \right), & \frac{\widehat{m}^\ast_{\widehat{\S}^\ast}}{2} \leq m < \widehat{m}^\ast_{\widehat{\S}^\ast}
  \end{array}\right.
  .
\end{align}
Since $\widehat{m}^\ast_{\widehat{\S}^\ast}$ is proportional to $B$, the above choice ensures that there is a sufficient amount of time for the algorithm to take exponential exploration whose growth manner is independent of the value of $\widehat{m}^\ast_{\widehat{\S}^\ast}$, which ensures both efficiency and accuracy of the algorithm.  
We note that neither the exponential rate two nor taking the average between $m$ and $\widehat{m}^\ast_{\widehat{\S}^\ast}$ in \eqref{eq:Q-def} is special, and can respectively be replaced with other rates greater than one and nonuniform averaging operations subject to appropriate modifications. Both the theoretical conclusions and numerical simulations in the subsequent sections assume that $Q$ has the form above, but other reasonable choices for $Q$ do not change the theoretical conclusions.

We have completed all technical descriptions of \Cref{fig:flowchart}. 
A more fleshed-out pseudocode version is given in \Cref{alg2-detailed} that provides more details for every step. 
Next, we establish that the proposed algorithm enjoys optimality guarantees relative to model selection and budget allocation strategies produced by an oracle.

\begin{algorithm}
\hspace*{\algorithmicindent} \textbf{Input}: $B$: total budget, model costs $c_0, c_1, \ldots, c_n$\\
    \hspace*{\algorithmicindent} \textbf{Output}: an estimator for $F_Y(\x)$
 \begin{algorithmic}[1]
   \STATE Initialize $\textsf{exploration} = \text{TRUE}$
   \STATE Initialize $m = \sum_{i\in \{1:n\}}d_i + 2$
   \STATE Generate $m$ exploration samples of $(Y, X_{\{1:n\}})$
 \WHILE{\textsf{exploration} = TRUE}{
   \FOR{$\S\subseteq \{1:n\}$}{
     \STATE Compute regression coefficients $\widehat{\bm{\beta}}_{\S^+}^{(i)}$, $i \in\{1:d\}$ from \eqref{eq:betahatS}
     \STATE Construct $\widehat{H}_\S(X_\S)$ and $\widehat{h}(X_\S; \bm{x})$ from \eqref{eq:hShat-HShat}
     \STATE Compute regression coefficients $r_j(\bm{x})$, $j \in \{1:m\}$ from \eqref{my1}
     \STATE Construct $\mathcal{K}_1$ and $\mathcal{K}_2$ from \eqref{my1} and \eqref{eq:calK2}, respectively
     \STATE Evaluate $\widehat{k}_1(\S)$ and $\widehat{k}_2(\S)$ using \eqref{lovepang} and a quadrature rule on $\R^d$
   \STATE Compute $\widehat{m}^*_\S$ and $\widehat{L}_\S(\ \cdot\ ; m)$ from \eqref{lm}
   \STATE Compute the minimal expected loss $\widehat{L}_\S(m\vee \widehat{m}^*_\S; m)$ from \eqref{lm}
}
   \ENDFOR
   \STATE{Choose $\widehat{\S}^* = \argmin_{\S\subseteq \{1:n\}}\widehat{L}_\S(m\vee \widehat{m}^*_\S; m)$};
   \IF{$m < \widehat{m}^\ast_{\widehat{\S}^\ast}$}
     \STATE{Generate $Q(m, \widehat{m}^\ast_{\widehat{\S}^\ast})-m$ additional samples of $(Y, X_{\{1:n\}})$, where $Q$ is given in \eqref{eq:Q-def}}
     \STATE{Increase $m$: $m \gets Q(m,\widehat{m}^\ast_{\widehat{\S}^\ast})$}
   \ELSE
      \STATE{\textsf{exploration} = FALSE}
    \ENDIF
     }
     \ENDWHILE
     \STATE{Generate $N_{\widehat{\S}^\ast}$ samples of $X_{\widehat{\S}^\ast}$, with $N_\S$ given in \eqref{eq:NS-def}}
     \STATE Construct $\widehat{\alpha}(\bm{x})$ for $\S \gets \widehat{\S}^\ast$ using \eqref{myalpha}
     \STATE{Generate $\widehat{\S}^\ast$ exploitation estimator $\widetilde{F}_{\widehat{\S}^\ast}$ using \eqref{fug}.}
 \end{algorithmic}
\caption{The detailed cvMDL algorithm.} 
\label{alg2-detailed}
\end{algorithm}

\subsection{Model consistency and optimality}\label{ssec:theory}

We now provide theoretical guarantees for \Cref{alg2-detailed} (corresponding to the flowchart in \Cref{fig:flowchart}). In summary, we show that as the budget $B$ tends to infinity, the model subset $\widehat{\S}^\ast$ chosen along with the number of exploration samples $m$ taken in \Cref{alg2-detailed}, both converge to the oracle optimal model $\S^\ast$ and the optimal number of exploration samples $m^\ast_{\S^\ast}$, respectively. 

We need some technical assumptions in order to proceed with our results. Since we estimate quadratic moments, we require quadratic moments to exist.
We also require that there are no pairs of low-fidelity QoIs that are perfectly correlated. These are codified in the following two assumptions.
\begin{Ass}\label{ass1}
  The models $X_{\{1:n\}}$ and $Y$ have bounded second moments:
\begin{align}
  \E[\| X_{\{1:n\}} \|_2^2 + \E \|Y\|_2^2]  < \infty.\label{ass}
\end{align}
\end{Ass}

\begin{Ass}\label{ass2}
  The uncentered second moment matrix $\E [X_{\{1:n\}+} X^\top_{\{1:n\}+}]$ is invertible, where $X_{\{1:n\}+} = (1, X_{\{1:n\}}^\top)^\top$.
\end{Ass}

\Cref{ass1} is the minimal moment condition on QoIs that we require to make oracle quantities well-defined. Random variables that violate \Cref{ass2} exhibit perfect multicollinearity and in practice are relatively rare. \Cref{ass2} being violated does not cause any conceptual breakdown of our procedure; the only consequence is that all the linear regression procedures suffer from a lack of identifiability of optimal covariates. 
While there are numerous standard procedures to remedy multicollinearity, such as covariate removal or regularization, violation of this assumption did not surface in our experiments, so we do not utilize any of these remedies.

The model selection procedure requires estimating the average $\omega$-weighted $L^2$ norm. 
This requires us to make certain assumptions about $\omega$.
\begin{Ass}\label{ass:omega}
  The weight $\omega(\x)$ is chosen so that \textit{either} of the following conditions is true:
  \begin{enumerate}[label=(\alph*)]
  \item $\|\omega\|_{L^\infty(\R^d)}<\infty$ (e.g. $\omega(\x)\equiv 1$) and $d=1$; or
  \item $\|\omega\|_{L^1(\R^d)}<\infty$.
  \end{enumerate}
\end{Ass}

The final more technical assumption we require involves some regularity on distribution functions. In particular, we show pointwise convergence in $\x$ of the estimator $\widehat{\alpha}(\x)$ to the oracle parameter $\alpha(\x)$, and to accomplish this we require bounds on the local variations of $F_{H_\S}$ and $F_{H_\S \vee Y}$ constructed in the model selection procedure. More technically, a sufficient assumption is a bounded local variations condition involving CDFs of certain $d$-dimensional sketches of $X_{\{1:n\}}$ and $Y$.

\begin{Ass}\label{ass4}
  Define
  \begin{align*}
    V(\bm A) &:= 
  (X_{\S^+}^\top\bm A)^\top\in\R^d
  &\bm A := [\bm\A^{(1)}, \cdots, \bm\A^{(d)}]\in\R^{d_\S\times d}, 
  \end{align*}
  and recall the optimal coefficient matrix $\bm B_{\S^+}$ in \eqref{eq:betaplus}.
We assume the CDFs of $V(\bm A)$ and $V(\bm A)\vee Y$, denoted by $F_{V(\bm A)}$ and $F_{V(\bm A)\vee Y}$, are globally Lipschitz near $\bm B_{\S^+}$ for all $\S$.
In particular, we assume that there exists $\e>0$ such that 
\begin{align*}
\max_{\S\subseteq \{1:n\}}\sup_{\bm A: \|\bm A-\bm B_{\S^+}\|_F\leq \e}\left\{\|F_{V(\bm A)}\|_{\text{Lip}} + \|F_{V(\bm A)\vee Y}\|_{\text{Lip}}\right\} = C<\infty,
\end{align*}
where $\|\cdot\|_{\text{Lip}}$ is the Lipschitz constant defined as 
\begin{align*}
&\|f\|_{\text{Lip}} = \sup_{\x\neq \x'}\frac{|f(\x)-f(\x')|}{\|\x-\x'\|_2}& f: \R^d\to\R.
\end{align*}
\end{Ass}
This final assumption is less transparent than our previous ones. 
An unnecessarily stronger sufficient condition to ensure that \Cref{ass4} holds is to assume that both $Y$ and all unit linear combinations of $X_{\{1:n\}}$ have uniformly bounded densities, and that every high-fidelity covariate $Y^{(i)}$ is correlated with every low-fidelity covariate $X^{(r)}_j$, i.e., $\min_{i,j,r} |\mathrm{Corr}(Y^{(i)}, X_j^{(r)})| > 0$. Alternatively, one could assume that the same bounded density condition, and the rather reasonable condition that the oracle regression coefficients ${\bm B}_{\S^+}$ select at \textit{least} one non-deterministic covariate for every $\S$.

We can now present our main results regarding applying the cvMDL algorithm with $h(X_\S; \x)$ constructed using linear approximations, with the corresponding loss function parameters estimated from \eqref{lovepang} and \eqref{lm}. In particular, we have that the adaptive exploration rate $m(B)$ asymptotically matches the optimal (oracle) exploration rate $m^*_{\S^*}$ defined in Section \ref{sec:22}, and the selected model $\S(B)$ converges to the optimal (oracle) model $\S^*$ as $B\to\infty$:

\begin{Th}[Uniform consistency and asymptotic optimality of cvMDL in \Cref{alg2-detailed}]\label{main}
  Let $h(X_\S; \x)$ be defined in \eqref{linear:h}, i.e., we use the linear approximation estimators from Section \ref{sec:4}, and assume the model parameters are estimated via \eqref{lovepang} and \eqref{lm}. Then consider \Cref{alg2-detailed} with an input budget $B$, and let
  \begin{itemize}[itemsep=0pt]
    \item $m(B)= \widehat{m}_{\widehat{\S}^*}$ be the number of exploration samples chosen by \Cref{alg2-detailed},
    \item $\S(B)= \widehat{\S}^*$ be the model selected for exploitation,
    \item $\widetilde{F}(\x;B) = \widetilde{F}_{\widehat{\S}^\ast}(\x)$ be the CDF estimator for $F_Y$.
  \end{itemize}
  Under Assumptions \ref{ass1}, \ref{ass2}, \ref{ass:omega}, and \ref{ass4}, then with probability one,
\begin{subequations}
\begin{align}\label{rr1}
  \lim_{B\to\infty}\frac{m(B)}{m^*_{\S^*}} &= 1, \\\label{rr2}
  \lim_{B\to\infty}\S(B) &= \S^*,\\\label{rr3}
  \lim_{B\to\infty} \sup_{\x\in\R^d}|\widetilde{F}(\x; B)-F_Y(\x)|&= 0,
\end{align}
\end{subequations}
  where $\S^*$ and $m^*_{\S^*}$ are the unique optimal (oracle) model choice and exploration sample size defined in Section \ref{sec:22}.
\end{Th}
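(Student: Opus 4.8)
The plan is to establish that every estimated oracle quantity converges almost surely to its population counterpart as the number of exploration samples grows, and then to push this through the adaptive stopping rule. I work throughout on a single almost-sure event carrying all the strong-law limits below; because there are only finitely many subsets $\S\subseteq\{1:n\}$, a.s.\ convergence for each $\S$ aggregates into simultaneous a.s.\ convergence. The first step is coefficient consistency: as the exploration count $t\to\infty$, the least-squares estimators $\widehat{\bm\beta}^{(i)}_{\S^+}(t)$ of \eqref{eq:betahatS} converge a.s.\ to the oracle coefficients $\bm\beta^{(i)}_{\S^+}$ of \eqref{eq:betaplus}, via the SLLN applied to the empirical Gram matrix $\ZZ_\S^\top\ZZ_\S/t$ and cross-moments $\ZZ_\S^\top\bm Y^{(i)}/t$ together with \Cref{ass1} and the asymptotic invertibility from \Cref{ass2}. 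Next, \Cref{ass4} converts coefficient consistency into CDF consistency: the Lipschitz bound on $F_{V(\bm A)}$ and $F_{V(\bm A)\vee Y}$ near $\bm B_{\S^+}$ gives $F_{\widehat H_\S}\to F_{H_\S}$ and $F_{Y\vee\widehat H_\S}\to F_{Y\vee H_\S}$ uniformly in $\x$, and, paired with a Glivenko--Cantelli argument uniform over a coefficient neighborhood, also yields uniform convergence of the empirical surrogates $\widehat F_{\widehat H_\S}$, $\widehat F_{Y\vee\widehat H_\S}$, and $\widehat F_Y$. From \eqref{myalpha} this forces $\widehat\alpha(\x)\to\alpha(\x)$ wherever $\alpha$ is defined, and since $\mathcal K_1,\mathcal K_2$ in \eqref{my1}--\eqref{eq:calK2} are continuous functionals of the same empirical objects, $\mathcal K_1(\x)\to(1-\rho_\S^2(\x))F_Y(\x)(1-F_Y(\x))$ and $\mathcal K_2(\x)\to\rho_\S^2(\x)F_Y(\x)(1-F_Y(\x))$ pointwise. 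Integrating against $\omega$ by dominated convergence --- justified by the uniform bound $|\widehat\alpha(\x)|\le1$ from \eqref{bd2} (cf.\ \Cref{kang}), the restriction on $\omega$ in \Cref{ass:omega}, and the finiteness \eqref{pangwa} --- gives $\widehat k_1(\S)\to k_1(\S)$ and $\widehat k_2(\S)\to k_2(\S)$ a.s., whence $B\,\widehat L_\S(\widehat m^*_\S;m)\to\gamma_\S$ and $\widehat m^*_\S/m^*_\S\to1$ for every $\S$.

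I would then handle the adaptivity. Since $\widehat m^*_{\widehat\S^*}$ in \eqref{lm} is of order $B$, the exploration threshold diverges, so $m(B)\to\infty$ a.s.; evaluating the full-sequence limits above at the random index $t=m(B)$ transfers every convergence to the stopping time. Finiteness of the subset family and uniqueness of $\S^*=\argmin_\S\gamma_\S$ from \eqref{eq:Sopt-oracle}, combined with $B\,\widehat L_\S(\widehat m^*_\S;m)\to\gamma_\S$, force $\widehat\S^*=\S^*$ for all large $B$, giving \eqref{rr2}; a short argument shows the $m\vee\widehat m^*_\S$ correction in the selection rule does not alter the asymptotic $\argmin$, since for $\S^*$ the current $m$ tracks its own optimizer. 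For \eqref{rr1} I would invoke the structure of $Q$ in \eqref{eq:Q-def}: the terminal averaging branch $\tfrac12(m+\widehat m^*_{\widehat\S^*})$ together with the stopping criterion $m\ge\widehat m^*_{\widehat\S^*}$ sandwiches the overshoot, giving $m(B)/\widehat m^*_{\widehat\S^*}\to1$; combined with $\widehat m^*_{\S^*}/m^*_{\S^*}\to1$ and $\widehat\S^*=\S^*$ this yields $m(B)/m^*_{\S^*}\to1$.

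For the uniform CDF consistency \eqref{rr3} I would decompose
\begin{align*}
\widetilde F_\S(\x)-F_Y(\x)=\bigl(\widehat F_Y(\x)-F_Y(\x)\bigr)-\widehat\alpha(\x)\left(\frac1m\sum_{\ell}\widehat h_\S(X_{\ex,\ell,\S};\x)-\frac1{N_\S}\sum_{j}\widehat h_\S(X_{\ext,j,\S};\x)\right).
\end{align*}
The first term vanishes uniformly by the multivariate Glivenko--Cantelli theorem because $m(B)\to\infty$. In the second term the two empirical averages are empirical CDFs of the single plug-in variable $\widehat H_\S$ evaluated on disjoint sample sets: the exploitation average converges uniformly to $F_{H_\S}$ because its $N_\S$ samples are independent of $\widehat{\bm\beta}$, and the exploration average does so by the uniform-in-coefficient Glivenko--Cantelli bound, so their difference tends to zero uniformly while $|\widehat\alpha|\le1$ keeps the product controlled. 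Here $N_\S\to\infty$ follows from \eqref{eq:NS-def} since $c_\ex m(B)/B$ converges to a constant strictly below one.

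The main obstacle I anticipate is the uniform-in-$\x$ control of the empirical CDF of the \emph{plug-in} estimator $\widehat H_\S$, whose coefficients are fit in-sample on the very exploration data used to build that empirical CDF. Decoupling this dependence calls for a Glivenko--Cantelli statement holding uniformly over a neighborhood of coefficient matrices, and the equicontinuity furnished by the Lipschitz bound in \Cref{ass4} is exactly what supplies it; carrying pointwise convergence of $\widehat\alpha$ through to the integrated consistency of $\widehat k_1,\widehat k_2$ while simultaneously respecting the randomness of $m(B)$ is the most delicate piece of bookkeeping.
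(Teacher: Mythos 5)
Your overall route matches the paper's: consistency of all exploration-phase estimators, divergence of $m(B)$, stabilization of the selected model via uniqueness of $\S^*$ and the two-branch structure of $Q$, and a triangle-inequality decomposition of $\widetilde{F}_\S-F_Y$ controlled by $|\widehat{\alpha}|\le 1$. The substantive gap is the passage from pointwise convergence of $\mathcal{K}_1,\mathcal{K}_2$ to convergence of $\widehat{k}_1,\widehat{k}_2$. Dominated convergence works only under \Cref{ass:omega}(b), where $\omega$ itself is an integrable dominating function. Under \Cref{ass:omega}(a) ($d=1$, $\omega$ merely bounded, e.g.\ $\omega\equiv 1$), the natural bound $\omega(x)\mathcal{K}_2(x)\le \omega(x)\widehat{F}_Y(x)(1-\widehat{F}_Y(x))$ is not dominated by any $m$-independent integrable function (the support of the empirical CDF grows with $m$), so DCT does not apply. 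The paper instead invokes the Vitali convergence theorem, establishing uniform integrability of $\{\omega\,\mathcal{K}_2\}_m$ from the a.s.\ Wasserstein-1 convergence $\int_\R|\widehat{F}_Y-F_Y|\,\mathrm{d}x\to 0$ (which uses the moment condition of \Cref{ass1}) together with a Chebyshev tail bound on $F_Y(1-F_Y)$. Without this step your argument for $\widehat{k}_i(\S)\to k_i(\S)$ --- and hence everything downstream, including \eqref{rr1}--\eqref{rr2} --- is incomplete precisely in the constant-weight scalar case used in the experiments.

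A second, smaller issue is where the uniform-in-coefficient Glivenko--Cantelli control actually comes from. You attribute it to the ``equicontinuity furnished by the Lipschitz bound in \Cref{ass4}'', but that assumption only controls the population perturbation $\sup_\x|F_{V(\bm A)}(\x)-F_{V(\bm B_{\S^+})}(\x)|$ (the term $\Lambda_{m,2}$ in the paper's notation); it says nothing about the empirical fluctuation $\sup_{\bm A}\sup_\x|G(\bm A;\x)-\P((X_{\S^+}^\top\bm A)^\top\le\x)|$. The paper controls the latter by noting that the relevant indicator class (intersections of $d$ halfspaces) is a VC class of dimension $O(d_\S d\log d)$ and combining the expectation bound with Massart's concentration inequality and Borel--Cantelli; some argument of this type (VC or bracketing) is needed to decouple the in-sample dependence you correctly flag as the main obstacle. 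The remaining steps --- the induction showing $\widehat{\S}^\ast=\S^\ast$ once $t\ge T(\delta;\omega)$, the sandwich $(1-\delta)m^*_{\S^*}\le m(B)\le(1+\delta)m^*_{\S^*}$ obtained from the two branches of $Q$ and the termination criterion, and the final decomposition for \eqref{rr3} --- are sketched at the right level and agree with the paper's argument in \Cref{aa1} and \Cref{appx:6}.
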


The proof is given in \Cref{aa1}. The result \eqref{rr3} should not come as a surprise since uniform consistency is generally true for empirical CDF estimators. Therefore, while \eqref{rr1} and \eqref{rr2} show that cvMDL in \Cref{alg2-detailed} exhibits optimality (relative to an oracle) for the choice of exploration samples and sample allocation across models, \eqref{rr3} is not evidence that the multifidelity estimator $\widetilde{F}(\x; B)$ is superior to the empirical CDF estimator that uses only the high-fidelity samples, although it confirms that $\widetilde{F}(\x; B)$ behaves as expected. The major difference that distinguishes $\widetilde{F}(\x; B)$ from a standard empirical CDF estimator is a constant term resulting from the mean $\omega$-weighted $L^2$ convergence rate; see the discussion near the end of \Cref{sec:22}.

The statements in Theorem \ref{main} and \cite[Theorem 5.2]{xu2021budget} are similar, but in the former, the requisite assumptions are \textit{much} weaker and the guarantees are stronger. In fact, for \cite[Theorem 5.2]{xu2021budget} to hold, one must assume that $\E[Y|X_\S]$ is a linear function of $X_\S$ and $(Y-\E[Y|X_\S])\indep X_\S$ for all $\S\subseteq \{1:n\}$.
However, none of these assumptions is needed in Theorem \ref{main}. Additionally, \Cref{main} ensures convergence for a \textit{multivariate} distribution function instead of the univariate convergence statements in \cite[Theorem 5.2]{xu2021budget}.

\subsection{A brief view into proving \Cref{main}}

While we leave the technical parts of proving \Cref{main} to \Cref{aa1}, we can summarize the crucial intermediate results that allow the proof to succeed. The major results we need revolve around the consistency of various estimators as $m$ and/or $N_\S$ approach infinity. The following two sets of results leverage the assumptions to conclude the consistency of intermediate computations in the algorithm.

The first collection of results shows that the finite-sample estimators for quantities computed in the exploration phase are consistent as the number of exploration samples $m$ tends to infinity.
\begin{Lemma}[Asymptotic consistency of exploration estimators]\label{lemma:parameter-consistency}
  We have the following technical estimates and consistency results for all $\S\subseteq \{1:n\}$:
  \begin{enumerate}[label={(\roman*)},itemsep=0pt]
    \item\label{lemma:item:0} 
      Under Assumptions \ref{ass1} and \ref{ass4}, then with probability one,
      \begin{subequations}\label{0020}
      \begin{align}
      \sup_{\|\bm A-\bm B_{\S^+}\|_F<\e}\sup_{\x\in\R^d}|F_{V(\bm A)}(\x) - F_{V(\bm B_{\S^+})}(\x)|\lesssim\|\bm A - \bm B_{\S^+}\|_F^{2/3}\lesssim\e^{2/3}\label{0021}\\
      \sup_{\|\bm A-\bm B_{\S^+}\|_F<\e}\sup_{\x\in\R^d}|F_{V(\bm A)\vee Y}(\x) - F_{V(\bm B_{\S^+})\vee Y}(\x)|\lesssim\|\bm A - \bm B_{\S^+}\|_F^{2/3}\lesssim\e^{2/3}\label{0022}
      .
      \end{align}
      \end{subequations}
    \item \label{lemma:item:1} 
      Under Assumptions \ref{ass1} and \ref{ass2}, then with probability one, 
            \begin{align*}
        \lim_{m \to \infty} \widehat{\bm B}_{\S^+}&= \bm B_{\S^+}. 
      \end{align*}
    \item \label{lemma:item:2} Under Assumptions \ref{ass1}, \ref{ass2}, and \ref{ass4},  then with probability one,
      \begin{align*}
        \lim_{m\to\infty} \sup_{\x\in\R^d}|\widehat{F}_{\widehat{H}_\S}(\x)- F_{H_\S}(\x)| &= 0 & \lim_{m \to \infty} \sup_{\x\in\R^d}|\widehat{F}_{Y\vee\widehat{H}_\S}(\x)- {F}_{Y\vee{H}_\S}(\x)| &=  0.
      \end{align*}
    \item \label{lemma:item:3}
      Under Assumptions \ref{ass1}, \ref{ass2}, and \ref{ass4}, then almost surely as $m \to \infty$ we have that,
      \begin{align*}
      &\mathcal K_1(\x)\to (1-\rho_\S^2(\x))F_Y(\x)(1-F_Y(\x))\ \ \ \mathcal K_2(\x)\to\rho_\S^2(\x)F_Y(\x)(1-F_Y(\x))
      \end{align*}
      for all $\x\in\R^d$. 
    \item \label{lemma:item:4}
      Under Assumptions \ref{ass1}, \ref{ass2}, \ref{ass:omega}, and \ref{ass4}, then with probability one, $\lim_{m\to\infty} \widehat{k}_1(\S) = k_1(\S)$ and $\lim_{m\to\infty} \widehat{k}_2(\S) = k_2(\S)$. 
    \item\label{lemma:item:5}
      Under Assumptions \ref{ass1}, \ref{ass2}, and \ref{ass4}, for $\x\in (\text{supp}(F_{H_\S}))^\circ$, $\widehat{\alpha}(\x)$ is a consistent estimator of $ \alpha(\x) $ almost surely, i.e., $\lim_{m\to\infty} \widehat{\alpha}(\x) = \alpha(\x)$ for every $\x \in (\text{supp}(F_{H_\S}))^\circ$.
  \end{enumerate}
\end{Lemma}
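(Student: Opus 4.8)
I would establish the six parts in the order listed, since each later claim builds on earlier ones; the two substantive steps are \ref{lemma:item:0} and \ref{lemma:item:2}, after which \ref{lemma:item:1}, \ref{lemma:item:3}, \ref{lemma:item:4}, and \ref{lemma:item:5} follow from the strong law of large numbers (SLLN), the continuous mapping theorem, and dominated convergence. For \ref{lemma:item:0} the plan is a truncate-and-optimize argument. Writing the $i$th component of $V(\bm A)-V(\bm B_{\S^+})$ as $X_{\S^+}^\top(\bm A^{(i)}-\bm B_{\S^+}^{(i)})$, on the event $\{\|X_{\S^+}\|_2\le R\}$ the two sketches differ by at most $\delta_R:=R\|\bm A-\bm B_{\S^+}\|_F$ in each coordinate. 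I would split $|F_{V(\bm A)}(\x)-F_{V(\bm B_{\S^+})}(\x)|$ into a tail part $\P(\|X_{\S^+}\|_2>R)\lesssim R^{-2}$ (Markov's inequality and the second-moment bound of \Cref{ass1}) and a bulk part, sandwiching the two orthant events and bounding their symmetric difference by the probability that $V(\bm B_{\S^+})$ falls in a boundary slab of width $\delta_R$, which the Lipschitz hypothesis of \Cref{ass4} controls by $\lesssim C\delta_R$. This gives $|F_{V(\bm A)}(\x)-F_{V(\bm B_{\S^+})}(\x)|\lesssim R^{-2}+R\|\bm A-\bm B_{\S^+}\|_F$ uniformly in $\x$, and choosing $R=\|\bm A-\bm B_{\S^+}\|_F^{-1/3}$ balances the two terms and yields the exponent $2/3$. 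The same computation with the orthant event $\{V(\bm A)\vee Y\le\x\}=\{V(\bm A)\le\x\}\cap\{Y\le\x\}$ and the Lipschitz bound on $F_{V(\bm A)\vee Y}$ gives \eqref{0022}.

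Part \ref{lemma:item:1} is the standard consistency of least squares: writing $\widehat{\bm\beta}^{(i)}_{\S^+}=(m^{-1}\ZZ_\S^\top\ZZ_\S)^{-1}(m^{-1}\ZZ_\S^\top\bm Y^{(i)})$, the SLLN gives $m^{-1}\ZZ_\S^\top\ZZ_\S\to\E[X_{\S^+}X_{\S^+}^\top]$ and $m^{-1}\ZZ_\S^\top\bm Y^{(i)}$ converging to the population cross-moment of \eqref{eq:betaplus} (all products integrable by \Cref{ass1} and Cauchy--Schwarz); the limiting matrix is invertible because it is a principal submatrix of the positive-definite matrix of \Cref{ass2}, so continuity of inversion gives $\widehat{\bm B}_{\S^+}\to\bm B_{\S^+}$. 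For \ref{lemma:item:2} I would pair a \emph{uniform} Glivenko--Cantelli estimate with \ref{lemma:item:0}. The orthant-sketch events $\{(X_{\S^+}^\top\bm A)^\top\le\x\}$, indexed by $(\bm A,\x)$, form a Vapnik--Chervonenkis class (a finitely parametrized family of intersections of halfspaces), so $\sup_{\|\bm A-\bm B_{\S^+}\|_F\le\e}\sup_{\x}|\widehat F_{V(\bm A)}(\x)-F_{V(\bm A)}(\x)|\to0$ almost surely, where $\widehat F_{V(\bm A)}(\x)=m^{-1}\sum_{\ell}\mathbf 1_{\{(X_{\ex,\ell,\S^+}^\top\bm A)^\top\le\x\}}$. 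Since $\widehat{\bm B}_{\S^+}$ lies in the $\e$-ball eventually by \ref{lemma:item:1}, the triangle inequality bounds $|\widehat F_{\widehat H_\S}(\x)-F_{H_\S}(\x)|$ by this uniform empirical-process term plus $|F_{V(\widehat{\bm B}_{\S^+})}(\x)-F_{V(\bm B_{\S^+})}(\x)|$, the latter $\to0$ uniformly by \ref{lemma:item:0} and \ref{lemma:item:1}; the $Y\vee\widehat H_\S$ statement is identical with the intersected VC class.

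The remaining parts are bookkeeping. For \ref{lemma:item:3} I would write $\mathcal K_1,\mathcal K_2$ in closed form through the empirical moments $\widehat F_Y(1-\widehat F_Y)$, $\widehat F_{Y\vee\widehat H_\S}-\widehat F_Y\widehat F_{\widehat H_\S}$, and $\widehat F_{\widehat H_\S}(1-\widehat F_{\widehat H_\S})$; each converges pointwise by the SLLN and \ref{lemma:item:2}, and the continuous mapping theorem delivers the limits wherever $F_{H_\S}(\x)(1-F_{H_\S}(\x))>0$, the degenerate case ($F_{H_\S}(\x)\in\{0,1\}$, so $\rho_\S(\x)=0$) being covered by the pseudoinverse convention in \eqref{my1} that returns $\mathcal K_2(\x)=0$. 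Part \ref{lemma:item:4} passes these pointwise limits through the $\omega$-integral: in case (b) of \Cref{ass:omega} the bound $0\le\mathcal K_i\le\tfrac14$ makes $\tfrac14\omega\in L^1$ a dominating function, while in case (a) ($d=1$) I would use a generalized dominated convergence (Pratt's lemma) with the dominating sequence $\omega\widehat F_Y(1-\widehat F_Y)$, whose integral converges to $\int\omega F_Y(1-F_Y)$ under \eqref{pangwa} and the first-moment bound. Finally \ref{lemma:item:5} is immediate: on $(\mathrm{supp}(F_{H_\S}))^\circ$ the denominator $F_{H_\S}(1-F_{H_\S})$ is bounded away from zero, so the empirical denominator is eventually positive and the continuous mapping theorem applied to the convergences of \ref{lemma:item:2} and the SLLN gives $\widehat\alpha(\x)\to\alpha(\x)$.

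The main obstacle is \ref{lemma:item:0}: the boundary-slab estimate together with the $R^{-2}+R\|\cdot\|_F$ optimization producing the $2/3$ exponent is where \Cref{ass1} and \Cref{ass4} are genuinely exploited, and it is precisely this estimate that lets the plug-in coefficient error be absorbed in \ref{lemma:item:2}. The secondary difficulties are verifying the VC property underpinning the uniform Glivenko--Cantelli step and the tail control for case (a) of \ref{lemma:item:4}; everything downstream reduces to the SLLN and the continuous mapping theorem.
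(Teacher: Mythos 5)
Your proposal is correct and follows essentially the same route as the paper's proof: the same Lipschitz-plus-Chebyshev balancing that produces the $2/3$ exponent in statement (i), the same decomposition into a VC-class uniform empirical-process term plus a plug-in coefficient term for statement (iii), and the same passage to the weighted integrals in statement (v) resting on the a.s.\ $L^1$ (Wasserstein-$1$) convergence of $\widehat{F}_Y$ when $d=1$. The only deviations are cosmetic --- truncating on $\{\|X_{\S^+}\|_2\le R\}$ rather than thresholding $\|\Delta\x\|_\infty$, invoking the packaged uniform Glivenko--Cantelli theorem rather than Massart's inequality plus Borel--Cantelli, and using Pratt's lemma in place of the Vitali convergence theorem --- none of which changes the substance.
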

\noindent The proof is given in \Cref{sec:lemma-proof}. 
\begin{Rem}
Note that $\widehat{\alpha}(\x)$ may not be consistent outside $(\text{supp}(F_{H_\S}))^\circ$, where the value of $\alpha(\x)$ is set to be zero in the definition for convenience; see \eqref{here}.  
However, this has no impact on the accuracy of the exploitation estimator as $\mathbf 1_{\{H_\S\leq \x\}}$ is constant in this region. 
\end{Rem}

Our second intermediate result shows that the exploitation estimator for the CDF of $Y$ is consistent asymptotically in both the exploration sample count $m$ and the exploitation sample count $N_{\S}$.

\begin{Lemma}[Uniform asymptotic consistency of the exploitation CDF estimator]\label{unifexploi}
Under Assumptions \ref{ass1}, \ref{ass2}, and \ref{ass4}, then with probability one, $\sup_{\x\in\R^d}|\widetilde{F}_\S(\x)-F_Y(\x)|\to 0$ as $m, N_\S \rightarrow \infty$. 
\end{Lemma}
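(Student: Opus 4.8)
The plan is to use the explicit affine structure of the exploitation estimator \eqref{fug} to reduce the claim to the uniform consistency of a handful of empirical CDFs, each of which is handled either by the multivariate Glivenko--Cantelli theorem or by \Cref{lemma:parameter-consistency}. Writing $\widetilde{F}_{\widehat{H}_\S}(\x) := \frac{1}{N_\S}\sum_{j\in\{1:N_\S\}}\mathbf 1_{\{\widehat{H}_\S(X_{\ext,j,\S})\leq \x\}}$ for the empirical CDF of the sliced surrogate on the exploitation samples, and noting that the inner average in \eqref{fug} is independent of the outer index $\ell$, the estimator collapses to
\begin{align*}
  \widetilde{F}_\S(\x) = \widehat{F}_Y(\x) + \widehat{\alpha}(\x)\bigl(\widetilde{F}_{\widehat{H}_\S}(\x) - \widehat{F}_{\widehat{H}_\S}(\x)\bigr).
\end{align*}
Subtracting $F_Y(\x)$ and invoking the bound $|\widehat{\alpha}(\x)|\leq 1$ from \eqref{bd2} yields
\begin{align*}
  \sup_{\x\in\R^d}|\widetilde{F}_\S(\x) - F_Y(\x)| \leq \sup_{\x\in\R^d}|\widehat{F}_Y(\x) - F_Y(\x)| + \sup_{\x\in\R^d}|\widetilde{F}_{\widehat{H}_\S}(\x) - \widehat{F}_{\widehat{H}_\S}(\x)|,
\end{align*}
so it suffices to show that each summand vanishes almost surely.

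The first summand is the sup-deviation of a genuine empirical CDF of i.i.d. copies of $Y$, hence tends to $0$ a.s. as $m\to\infty$ by the multivariate Glivenko--Cantelli theorem. For the second summand I would introduce the oracle population CDF $F_{H_\S}$ and the population CDF $F_{\widehat{H}_\S}$ of the sliced surrogate built with the \emph{estimated} coefficients, and split by the triangle inequality into
\begin{align*}
  \sup_{\x}|\widetilde{F}_{\widehat{H}_\S} - F_{\widehat{H}_\S}| + \sup_{\x}|F_{\widehat{H}_\S} - F_{H_\S}| + \sup_{\x}|\widehat{F}_{\widehat{H}_\S} - F_{H_\S}|.
\end{align*}
The third term tends to $0$ a.s. as $m\to\infty$ directly by part~\ref{lemma:item:2} of \Cref{lemma:parameter-consistency}. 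The second term is controlled by combining the Hölder-type modulus estimate \eqref{0021} with the coefficient consistency $\widehat{\bm B}_{\S^+}\to\bm B_{\S^+}$ from part~\ref{lemma:item:1}: once $m$ is large enough that $\|\widehat{\bm B}_{\S^+} - \bm B_{\S^+}\|_F < \e$, one has $\sup_\x|F_{\widehat{H}_\S}(\x) - F_{H_\S}(\x)|\lesssim \|\widehat{\bm B}_{\S^+} - \bm B_{\S^+}\|_F^{2/3}\to 0$.

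The remaining term $\sup_{\x}|\widetilde{F}_{\widehat{H}_\S}(\x) - F_{\widehat{H}_\S}(\x)|$ is the crux and the main obstacle. It compares an empirical CDF over the exploitation samples against its \emph{own} population target $F_{\widehat{H}_\S}$, but that target is random and drifts with $m$ through $\widehat{\bm B}_{\S^+}$; consequently the classical Glivenko--Cantelli theorem (which fixes the distribution) does not apply term-by-term, and a naive Borel--Cantelli argument over $N_\S$ fails when $N_\S$ grows slowly. The resolution I would use is a \emph{uniform-over-coefficients} Glivenko--Cantelli statement, exploiting two facts: the exploitation samples $\{X_{\ext,j,\S}\}_{j\geq 1}$ form a single i.i.d. sequence that does not change with the exploration stage, and the defining events $\{\bm A^\top X_{\S^+}\leq \x\} = \bigcap_{i\in\{1:d\}}\{(\bm A^{(i)})^\top X_{\S^+}\leq \x^{(i)}\}$ are intersections of $d$ affine halfspaces in the sample space and therefore range over a Vapnik--Chervonenkis class as $(\bm A,\x)$ vary. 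Since every VC class is a strong (uniform) Glivenko--Cantelli class, almost surely
\begin{align*}
  \sup_{\bm A}\,\sup_{\x\in\R^d}\Bigl|\tfrac{1}{N_\S}\textstyle\sum_{j\in\{1:N_\S\}}\mathbf 1_{\{\bm A^\top X_{\ext,j,\S^+}\leq \x\}} - F_{V(\bm A)}(\x)\Bigr| \xrightarrow{N_\S\to\infty} 0,
\end{align*}
where $X_{\ext,j,\S^+} = (1, X_{\ext,j,\S}^\top)^\top$. Taking $\bm A = \widehat{\bm B}_{\S^+}$ shows the term of interest is dominated by the left-hand side irrespective of the fluctuations of $\widehat{\bm B}_{\S^+}$, which is exactly what defeats the moving-target difficulty and delivers the joint $m,N_\S\to\infty$ limit. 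Assembling the pieces, the three $m$-driven terms vanish as $m\to\infty$ and the uniform-GC term vanishes as $N_\S\to\infty$, so $\sup_\x|\widetilde{F}_\S(\x) - F_Y(\x)|\to 0$ almost surely, as claimed.
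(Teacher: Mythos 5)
Your proof is correct and follows essentially the same route as the paper's: the same collapse of \eqref{fug} to $\widehat{F}_Y + \widehat{\alpha}\,(\widetilde{F}_{\widehat{H}_\S}-\widehat{F}_{\widehat{H}_\S})$, the bound $|\widehat{\alpha}|\leq 1$, the multivariate Glivenko--Cantelli theorem for $\widehat{F}_Y$, and \Cref{lemma:parameter-consistency} for the surrogate terms. Your explicit uniform-over-coefficients VC-class argument for the exploitation-sample term is exactly the content the paper defers to with ``a similar argument as in the proof of statement \ref{lemma:item:2}'' (the $\Lambda_{m,1}$/$\Lambda_{m,2}$ split), just spelled out rather than cited.
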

See \Cref{appx:6} for the proof. The proof of our main result, \Cref{main}, is in \Cref{aa1}, which leverages the results in \Cref{lemma:parameter-consistency} and \Cref{unifexploi}. One additional high-level step needed to prove \Cref{main} is to show that cvMDL in \Cref{alg2-detailed} for asymptotically large budget $B$ results in both $m$ and $N_\S$ going to infinity. This is the first part of the proof presented in \Cref{aa1}.

\section{Numerical simulations}\label{sec:6}

In this section, we compare cvMDL and its variants with other algorithms on three forward uncertainty quantification scenarios. In Section~\ref{sec:num1}, we examine a scalar-valued parametric linear elasticity PDE problem, followed by a vector-valued stochastic differential equation problem concerning the extrema of a geometric Brownian motion over a finite interval in Section~\ref{sec:num2}. Lastly, in Section~\ref{sec:num3} we evaluate the cvMDL method on a scalar-valued practical engineering problem of brittle fracture in a fiber-reinforced matrix. 
We label algorithms under consideration as follows: 
\begin{itemize}[leftmargin=3.2cm,itemsep=-0pt]
  \item[(ECDF)] The empirical CDF estimator for $F_Y$ using the high-fidelity samples only;
   \item[(AETC-d)] The AETC-d algorithm from \cite{xu2021budget};
   \item[(cvMDL)] The cvMDL algorithm in \Cref{alg2-detailed}; 
   \item[(cvMDL-sorted)] cvMDL with the exploitation CDF monotonicity fix using \Cref{alg-sort};
  \item[(cvMDL*)] cvMDL that estimates $\widehat{\alpha}(\x)$ in the tail using \eqref{myalpha2} with $\tau = 0.05$ when $d=1$; 
  \item[(cvMDL*-sorted)] cvMDL*  with the CDF monotonicity fix using \Cref{alg-sort}.
\end{itemize}

For the weight function in the cvMDL algorithm and its variants, we choose $\omega(x)\equiv 1$ for all $x\in\R$ when $Y$ is scalar-valued, but in a case-dependent manner when $Y$ is vector-valued. 
Since the estimators produced by the cvMDL-type and AETC-d algorithms are random (depending on the exploration data), for every budget value $B$, we repeat the experiment $100$ times and report both the average of the mean $\omega$-weighted $L^2$ error and the corresponding $5\%$-$95\%$ quantiles to measure the uncertainty.

\subsection{Linear elasticity}\label{sec:num1}
We consider a suite of models with varying fidelities associated with a parametric elliptic equation, where lower-fidelity models are identified through mesh coarsening. 
The setup is taken from \cite[Section 7.1]{xu2022bandit}.
The elliptic PDE governs displacement in linear elasticity over a square spatial domain $\mathcal D = [0,1]^2$, with $\bm x = (x_1, x_2)^\top$; see \Cref{fig:struct}.
The parametric version of this problem equation seeks the displacement field $\bm{u} = (u, v)^\top$ that satisfies the PDE 
\begin{align*}
  -\nabla \cdot \left(\kappa(\bm{p}, \bm{x})\; \bm{\sigma}(\bm{x})\right) = \bm{F}(\bm x), \hskip 10pt \forall (\bm p,\bm x) \in \mathcal{P} \times \mathcal D 
   \end{align*}
where $\bm p$ is a random input vector that parameterizes the scalar $\kappa$.
Moreover, $\bm{\sigma}$ is the Cauchy stress tensor, given as 
\begin{align*}
       \bm{\sigma}(\bm x) = \begin{bmatrix}
       \sigma_1(\bm x)& \sigma_{12}(\bm x)\\
      \sigma_{12}(\bm x)& \sigma_2(\bm x)
       \end{bmatrix}, \hskip 20pt
         \begin{bmatrix}
          \sigma_1(\bm x) \\ \sigma_2(\bm x) \\ \sigma_{12}(\bm x) 
          \end{bmatrix} = \frac{1}{1 - \nu^2} \begin{bmatrix}
          \frac{\partial u(\bm x)}{\partial x_1} + \frac{\partial v(\bm x)}{\partial x_2} \\ \frac{\partial v(\bm x)}{\partial x_2} + \nu \frac{\partial u(\bm x)}{\partial x_1} \\ \frac{1-\nu}{2} (\frac{\partial u(\bm x)}{\partial x_1} + \frac{\partial v(\bm x)}{\partial x_2}) 
     \end{bmatrix}
   \end{align*}
where we set the Poisson ratio to $\nu = 0.3$. Here, $\kappa(\bm p, \bm x)$ is a scalar modeled with a Karhunen-Lo\`{e}ve expansion with four modes, given by 
\begin{align*}
  \kappa(\bm p, \bm x) = 1 + 0.5 \sum_{i=1}^{4} \sqrt{\lambda_i} \phi_i(\bm x) p_i,
\end{align*}
where $(\lambda_i, \phi_i)$ are ordered eigenpairs of the exponential covariance kernel $K$ on $\mathcal D$, i.e., 
\begin{align*}
  K(\bm x, \bm y) = 
  \exp(-\| \bm x - \bm y\|_1/\eta),
\end{align*}
where $\|\cdot\|_1$ is the $\ell^1$-norm on vectors, and we choose $\eta = 0.7$. 
Hence, $\bm p \in \R^{4}$ is a random input vector with independent components uniformly distributed on $[-1,1]$.

The displacement $\bm u$ is used to compute a scalar QoI, the structural \emph{compliance} or energy norm of the solution, which is the measure of elastic energy absorbed in the structure as a result of loading:
\begin{align*}
  E(\bm u; \bm p) \coloneqq \int_D (\bm{u}(\bm x) \cdot \bm{F}(\bm x)) \ \text{d}\bm x.
 \end{align*}
We solve the above system for each fixed $\bm p$ via the finite element method with standard bilinear square isotropic finite elements on a rectangular mesh \cite{andreassen2011efficient}. 

\begin{figure}[htbp]
\begin{minipage}{0.30\textwidth} 
 \includegraphics[width=\textwidth]{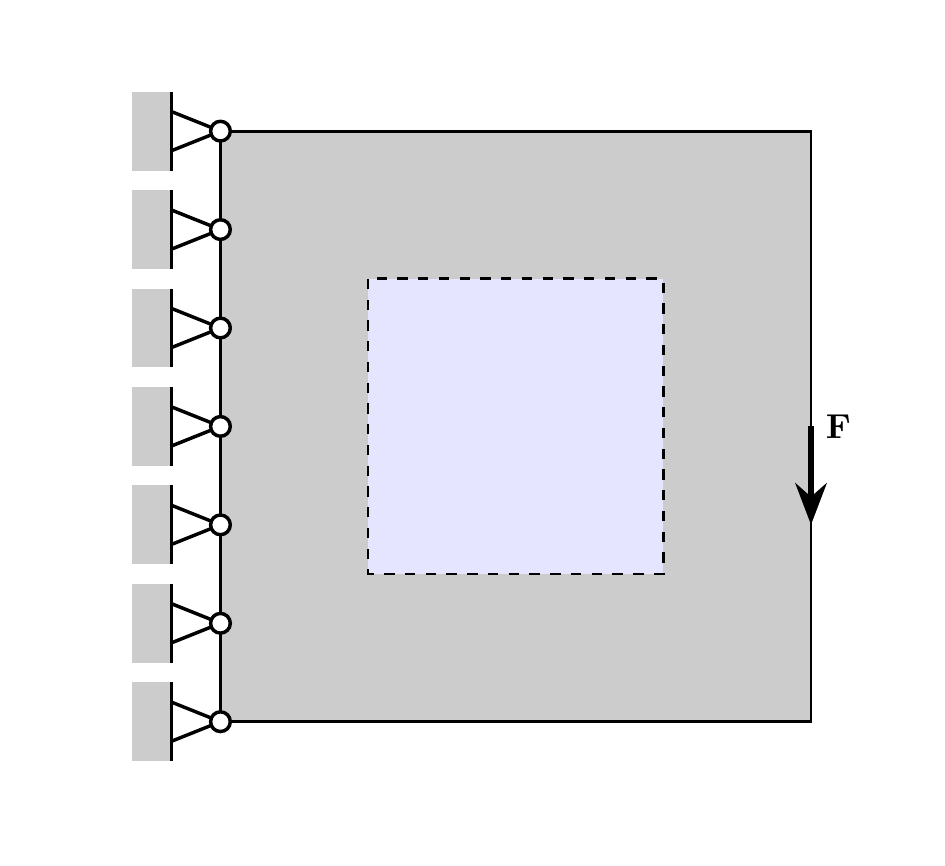} 
\end{minipage}
\begin{minipage}{0.60\textwidth} 
  \resizebox{\textwidth}{!}{
  \begin{tabular}{l*{7}{c}r}
  Model  $\S$             & $\{1\}$ &$\{2\}$ & $\{3\}$ & $\{4\}$ & $\{1, 2\}$& $\{1, 3\}$ & $\{1,4\}$ & $\{2,3\}$ \\
  \hline
  $\gamma_\S$  & 123.8 & 149.3 & 203.9 & 304.8 & 25.2 &  48.6 & 93.7  & 62.2 \\
  \hline
  $m^*_\S$ when $B=10^7$  & 1998 & 2231 & 2337 & 2390 & 1253 &  1657 & 1909  & 2054
  \end{tabular} 
}
  \\[12pt]
  \resizebox{\textwidth}{!}{
  \begin{tabular}{l*{7}{c}r}
  Model $\S$            & $\{2,4\}$ & $\{3,4\}$ & $\{1,2,3\}$ & $\{1,2,4\}$ & $\{1,3,4\}$ & $\{2,3,4\}$ & $\textbf{\{1,2,3,4\}}$  \\
  \hline
  $\gamma_\S$  & 107.6 & 129.7& 11.8 & 11.7 & 14.3 &11.9& \textbf{11.5}  \\
  \hline
  $m^*_\S$ when $B=10^7$   & 2175 & 2292 & 669 & 734 & 976 &1540 & \textbf{638} 
  \end{tabular} 
}
\end{minipage}
\caption{\small Linear elasticity. Left: Geometry and boundary conditions for the linear elastic structure. Right: Oracle scaled loss $\gamma_\S$ \eqref{optm} and optimal exploration sample count $m_\S^*$ \eqref{optm} for different choices of subsets of low-fidelity model indices $\S$. The optimal model subset $\S$ is typed in boldface. Oracle statistics are computed with 50,000 samples.}
\label{fig:struct}
\end{figure}

In this example, we form a multifidelity hierarchy through mesh coarsening via the mesh parameter $h$.
The compliance $E$ is our scalar-valued QoI for every model, i.e., $d = d_i = 1$ for $i=1,\ldots, 4$.
A mesh parameter of $h =2^{-7}$ corresponds to the high-fidelity model $Y$, and $h = 2^{-4}, 2^{-3}, 2^{-2}, 2^{-1}$ correspond to lower-fidelity models $X_1, \ldots, X_4$, respectively.

The cost for each model is the computational time, which we take to be inversely proportional to the mesh size squared, i.e., $h^2$. This corresponds to using a linear solver of optimal linear complexity.
We normalize cost so that the model with the lowest fidelity has unit cost, i.e., $(c_0, c_1, c_2, c_3, c_4)= (4096, 64, 16, 4, 1)$. 
The correlations between the QoIs of $Y$ and $X_1, X_2, X_3, X_4$ are $0.976$, $0.940$, $0.841$, $-0.146$, respectively.  
The total budget $B$ is taken on the interval $[10^5, 10^7]$. 

\begin{figure}[htbp]
  \centering 
\begin{subfigure}{0.325\textwidth}{\includegraphics[width=\linewidth, trim={0.2cm 0.2cm 1cm 2cm},clip]{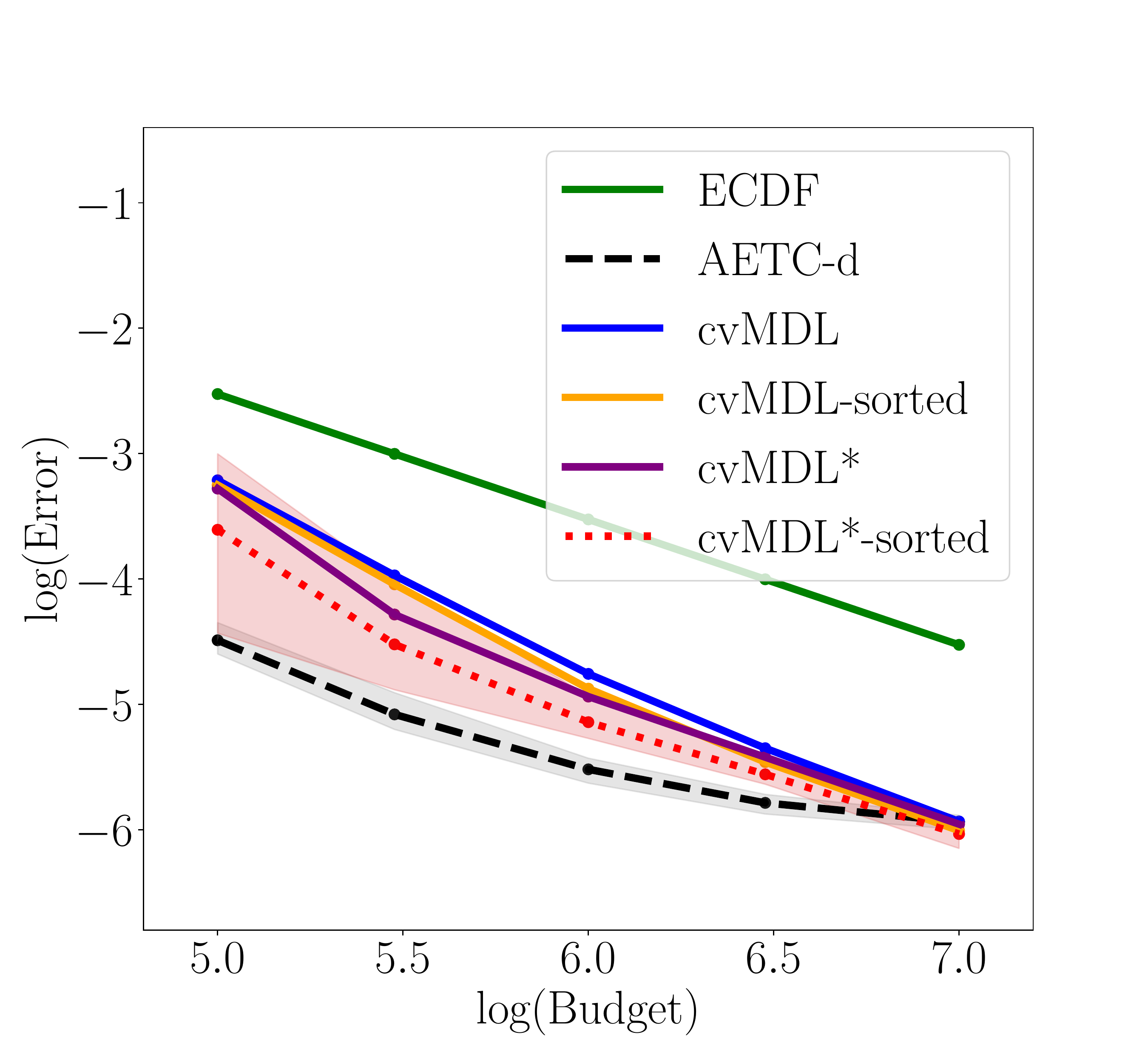}}\caption{}\label{num1:(a)}
\end{subfigure} 
\begin{subfigure}{0.325\textwidth}{\includegraphics[width=\linewidth, trim={0.2cm 0.2cm 1cm 2cm},clip]{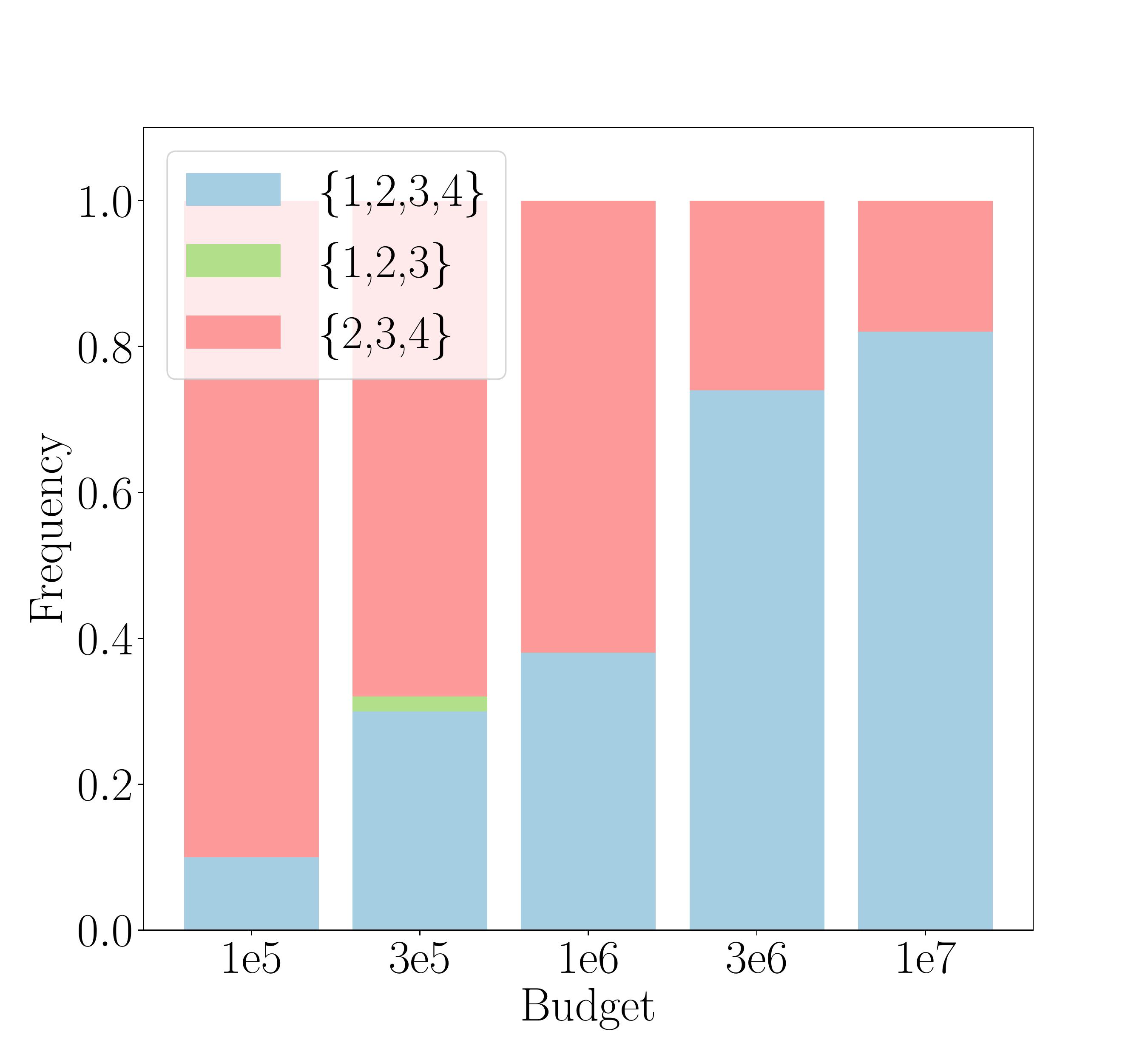}}\caption{}\label{num1:(b)}
\end{subfigure} 
\begin{subfigure}{0.325\textwidth}{\includegraphics[width=\linewidth, trim={0.05cm 0.2cm 1cm 2cm},clip]{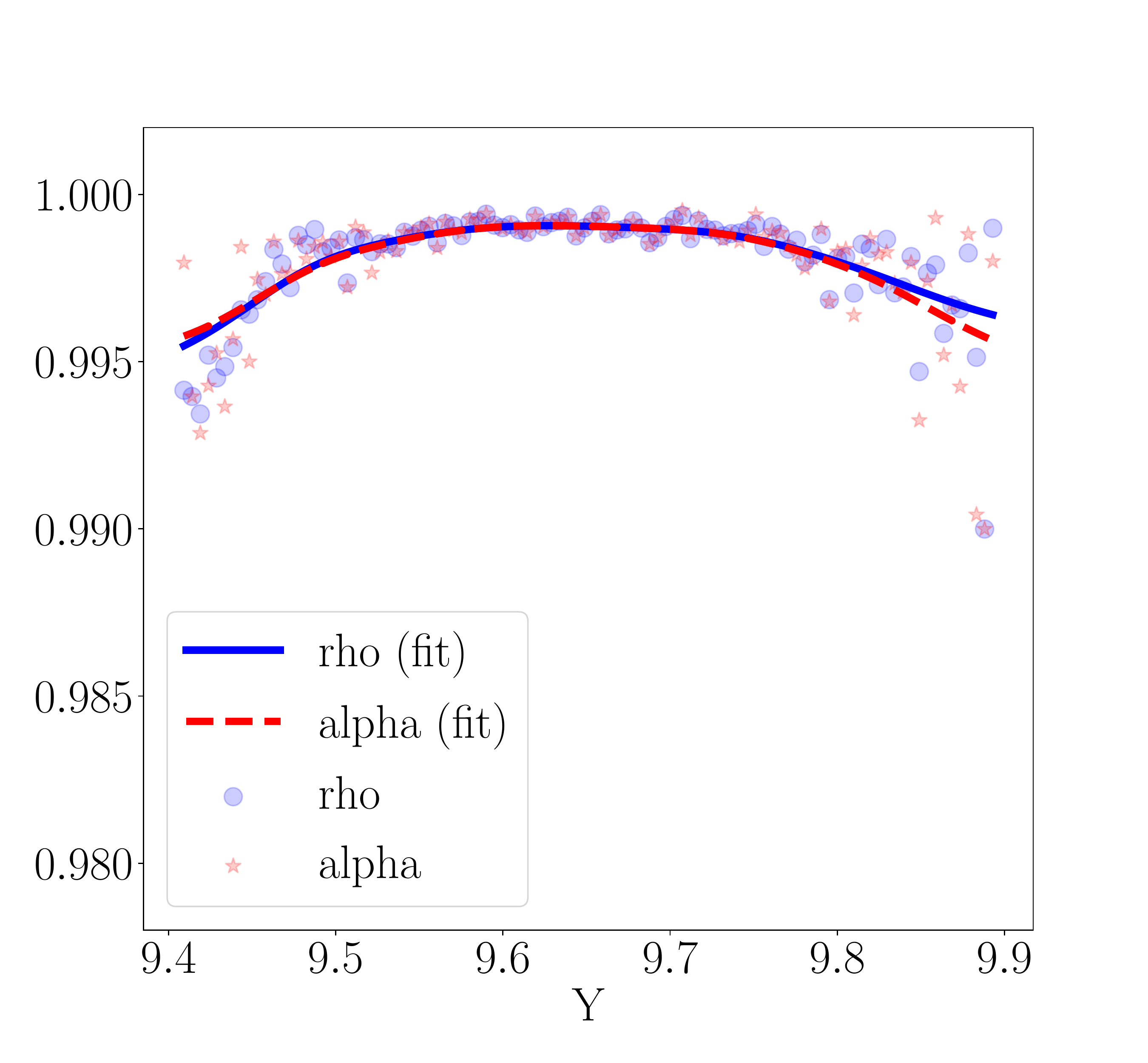}}\caption{}\label{num1:(c)}
\end{subfigure} 
\caption{Linear elasticity. (a). Mean $\omega(x)$-weighted $L^2$ error between $F_Y$ and the estimated CDFs given by ECDF, AETC-d, cvMDL, and its variants, with the $5\%$-$95\%$ quantiles (for ease of visualization, we only plot the quantiles for AETC-d and cvMDL*-sorted) to measure the uncertainty.
(b). Frequency of different models selected by cvMDL.  
(c). Scatter plot of the estimated $\alpha(x)$ and $\rho(x)$ when $\S = \{1,2,3,4\}$ using 50,000 i.i.d. samples in the $1\%$-$99\%$-quantile regime of $Y$. Gaussian kernel smoothing is applied to both data Gaussian kernel with bandwidth $\text{bd} = 0.0358$ chosen using $5$-fold cross-validation. } \label{pde-comp}
\end{figure} 

\subsubsection{Results for estimating the CDF}

\Cref{pde-comp} shows the simulation results given by different multifidelity estimators as well as more refined statistics of the cvMDL-related algorithms.  
In \Cref{pde-comp}(a), we see that AETC-d has the smallest error for smaller budgets but its asymptotic convergence is constrained by the model misspecification effects (associated with theoretical assumptions on the applicability of AETC-d), i.e., the error curve starts to plateau when $B$ exceeds $10^6$. 
Although this can be mitigated by including additional nonlinear (e.g. polynomial) terms as additional covariates, trustworthy practical guidance is still lacking for this approach. 
On the other hand, both cvMDL and its variants demonstrate superior performance over ECDF, with cvMDL*-sorted achieving a result competitive to AETC-d without the plateau effect.

In \Cref{pde-comp}(b), we note that as the budget increases, the model $\widehat{\S}^\ast$ selected by cvMDL converges to $\{1,2,3,4\}$, which is the same as the optimal model computed under oracle statistics in Figure \ref{fig:struct} (right). 
We note that the suboptimal model $\S = \{2, 3, 4\}$ is selected often by cvMDL, but 
not other models whose $\gamma_\S$ is close to that of $\{1,2,3,4\}$.
We believe this occurrence is due to the aggressive exploration steps taken by cvMDL, in particular when we double exploration samples causing suboptimal models $\S$  with large values of $m^\ast_{\S}$ (e.g., $\S = \{2, 3, 4\}$) become the preferred model.

The significant error reduction achieved by cvMDL is indicated by near-unity values of $\rho_{\S}(x) = \text{Corr}[\mathbf 1_{\{Y\leq x\}}, h(X_{\S}; x)]$ where $\S = \{1,2,3,4\}$, shown in \Cref{pde-comp}(c).
For cvMDL variants, either estimating $\alpha(x)$ in the tail regime through \eqref{myalpha} (cvMDL*) or sorting CDF values to ensure monotonicity (cvMDL/cvMDL*-sorted) can help further reduce the errors.
The former is particularly helpful in the small-budget regime where exploration data are not sufficient to estimate the full support of the QoI.

The weight function $\omega(x)$ in this scenario is constant on $\R$ thus the estimates produced by cvMDL-type estimators are expected to capture the global structure of $F_Y$ (e.g. bulk and tails).  
To inspect this, we fix $B = 10^7$ and plot the estimated CDFs in the tail and bulk regimes separately. 
The CDFs of the pointwise errors (at $1000$ discretization points) in the three regimes are shown in Figure \ref{fig:1r}.
It can be seen that cvMDL*-sorted has the smallest errors in all three regimes.
 
\begin{figure}[htbp]
\centering 
\begin{subfigure}{0.325\textwidth}{\includegraphics[width=\linewidth, trim={0.2cm 0.2cm 1cm 2cm},clip]{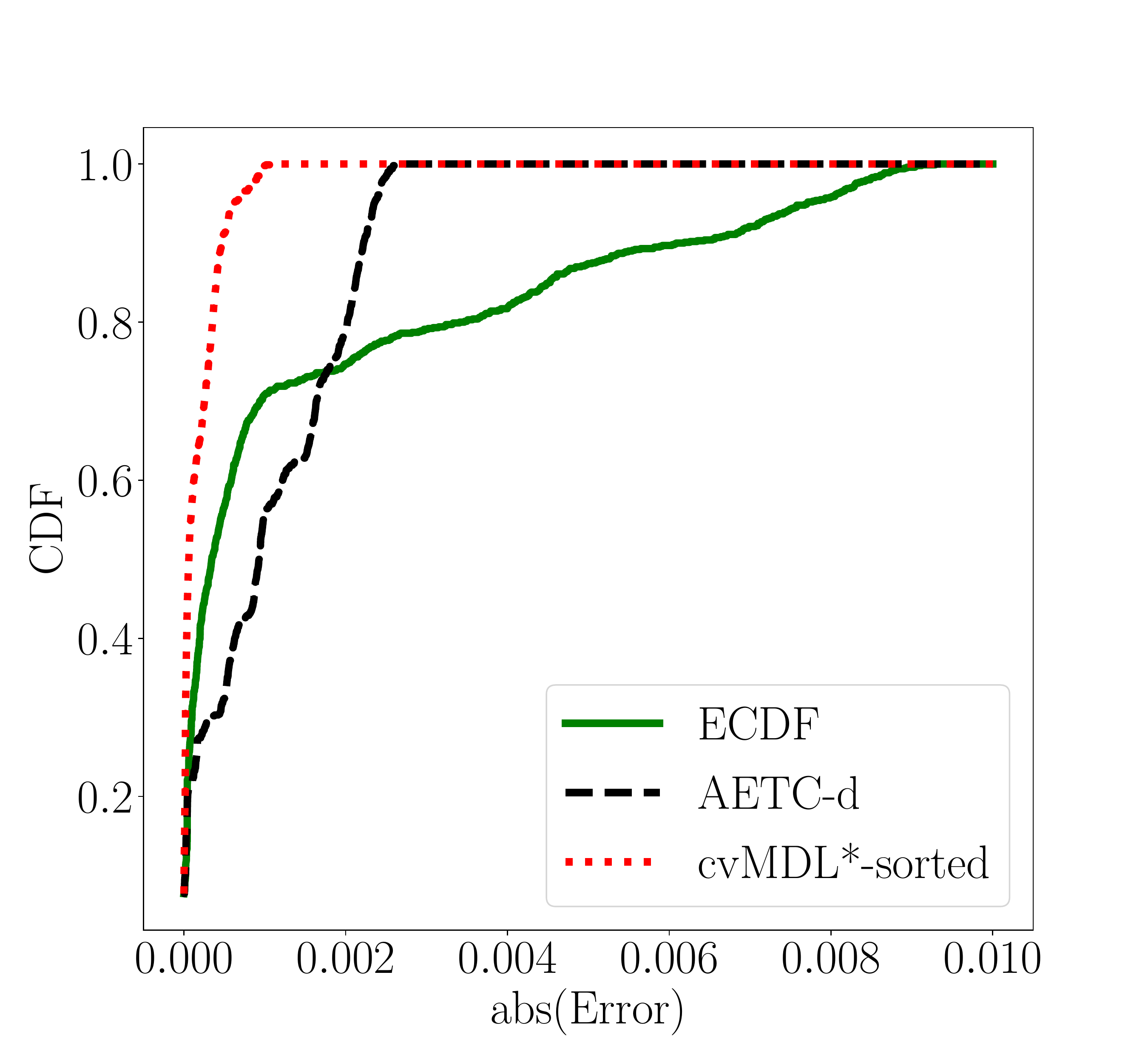}}\caption{}\label{Pe:(a)}
\end{subfigure} 
\begin{subfigure}{0.325\textwidth}{\includegraphics[width=\linewidth, trim={0.2cm 0.2cm 1cm 2cm},clip]{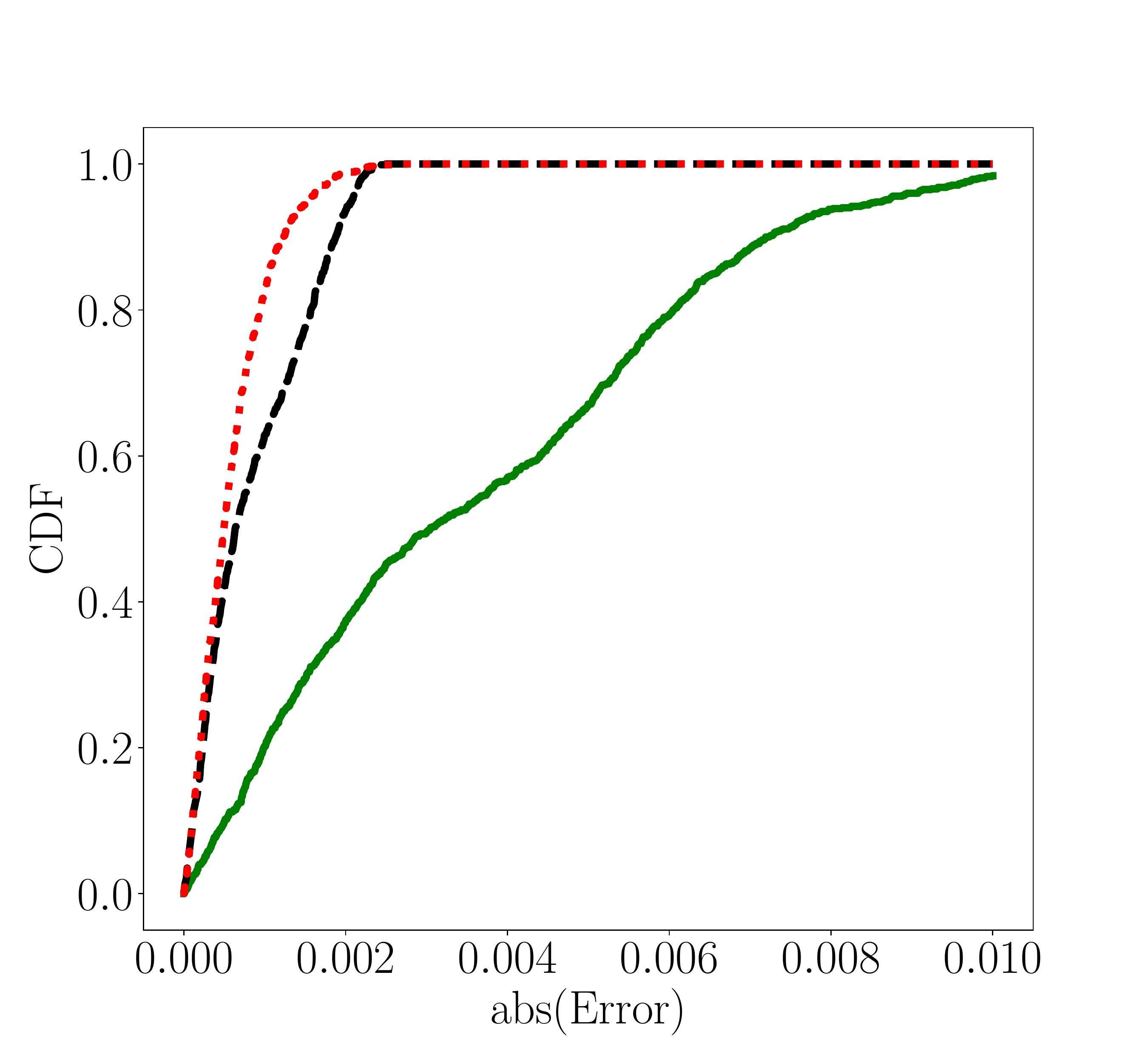}}\caption{}\label{Pe:(b)}
\end{subfigure} 
\begin{subfigure}{0.325\textwidth}{\includegraphics[width=\linewidth, trim={0.2cm 0.2cm 1cm 2cm},clip]{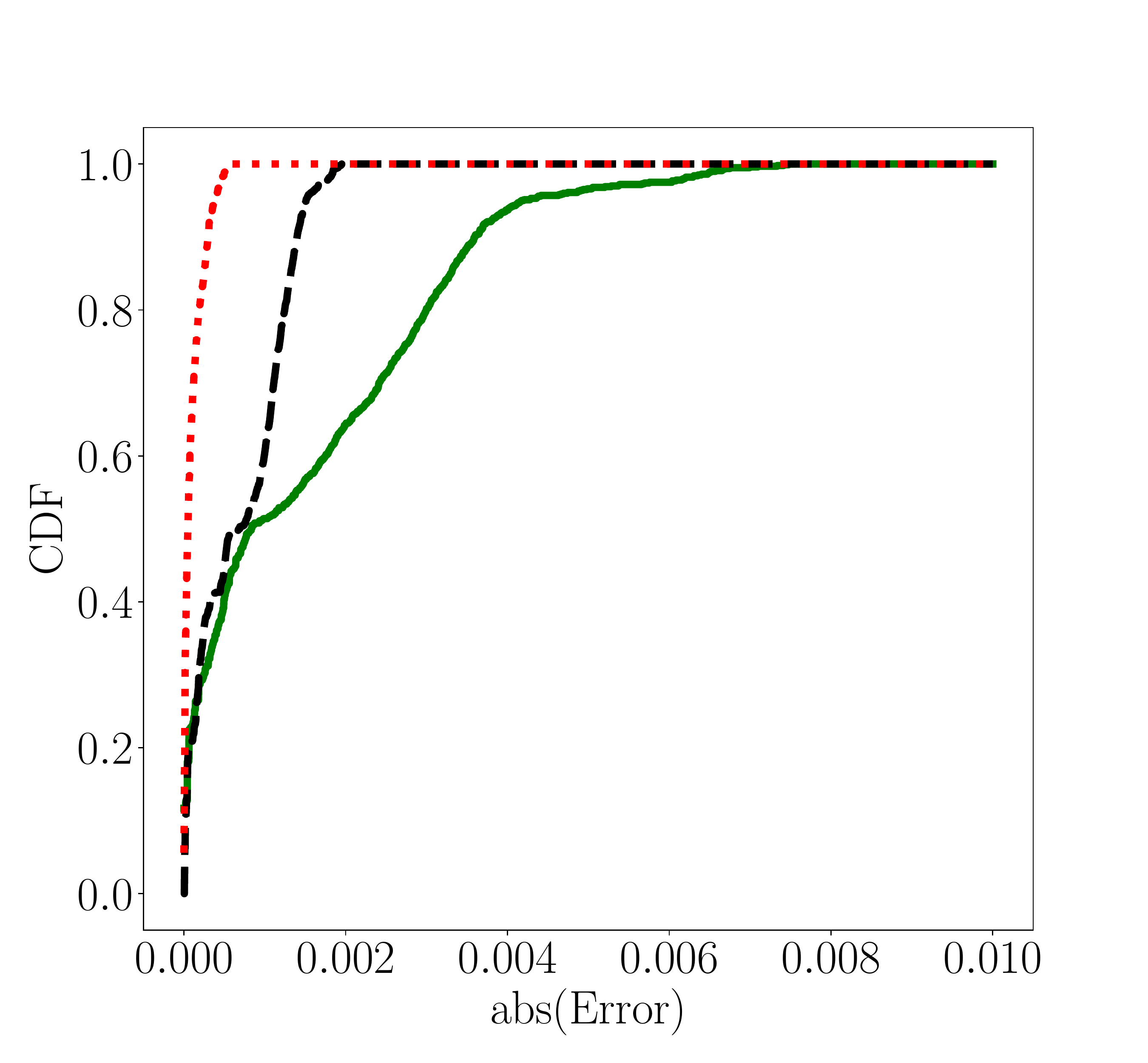}}\caption{}\label{Pe:(c)}
\end{subfigure} 
\caption{Linear elasticity. One realization of the distribution of pointwise CDF errors computed by cvMDL*-sorted, AETC-d, and ECDF for budget $B = 10^7$. We plot CDFs of errors in three different regimes: (a) the lower tail of $Y$ defined by the $0-0.05$ quantile region, (b) the bulk defined by the $0.05 - 0.95$ quantile region, (c) the upper tail defined by the $0.95 - 1.00$ quantile region.}
\label{fig:1r}
\end{figure} 

\subsubsection{Risk metrics}
We now compare some risk metrics of the estimated CDFs.
For example, one frequently used metric is the conditional value-at-risk (CVaR), also called the expected shortfall, which is defined as the conditional expectation of $Y$ in a tail regime (here, $Y$ being large):
\begin{align*}
&\textsf{CVaR}_a(Y) := \E[Y|F_Y(Y)\geq a] = \frac{1}{1-a}\int_{a}^1 F_Y^{-1}(x)\text{d}x & 0<a<1,
\end{align*}
where $a$ is the quantile level. 
Assuming $F_Y$ is known, $\textsf{CVaR}$ can be numerically computed using linear interpolation of $F^{-1}_Y$.
Fixing $B = 10^7$ as before, we use the estimated CDFs by ECDF, AETC-d, and cvMDL*-sorted to compute the CVaR of $Y$ for $a = 0.95$ and $0.99$, respectively. 
The experiment is repeated $50$ times, and the corresponding statistics are summarized using boxplots in Figure \ref{optimal} (a)-(b). 
For both choices of $a$, cvMDL*-sorted outperforms the other methods by a noticeable margin. 
It is worth noting that although AETC-d and cvMDL*-sorted have similar errors under the tested budget globally (Figure \ref{pde-comp} (a)), the model misspecification effects result in the former estimates being biased upward. 
The cvMDL*-sorted estimates, on the other hand, remain unbiased.

\begin{figure}[htbp]
 \begin{center}
\begin{subfigure}{0.327\textwidth}{\includegraphics[width=\linewidth, trim={0.01cm 0.2cm 1cm 2cm},clip]{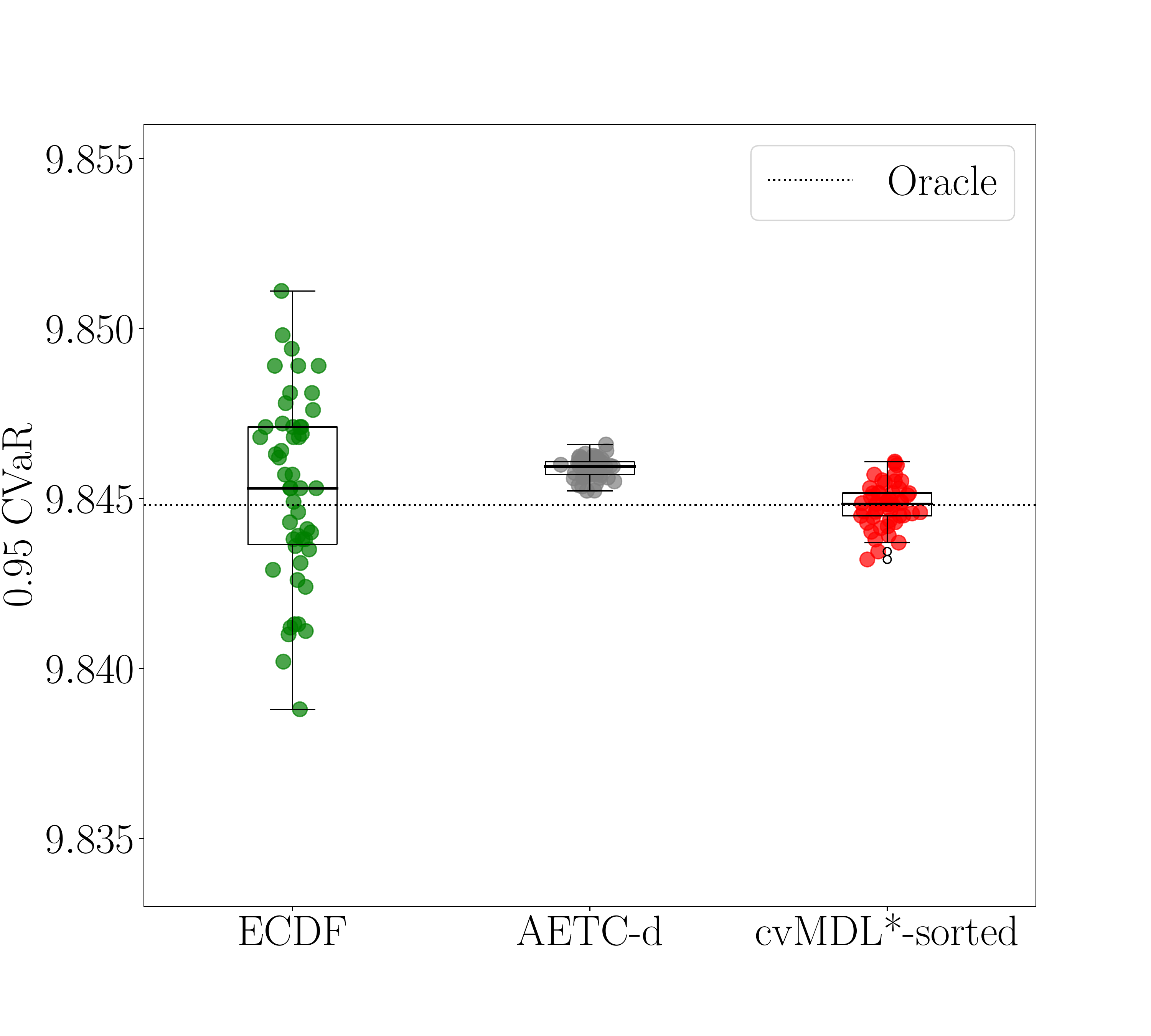}}\caption{}\label{cvar_1:(a)}
\end{subfigure} 
\begin{subfigure}{0.327\textwidth}{\includegraphics[width=\linewidth, trim={0.01cm 0.2cm 1cm 2cm},clip]{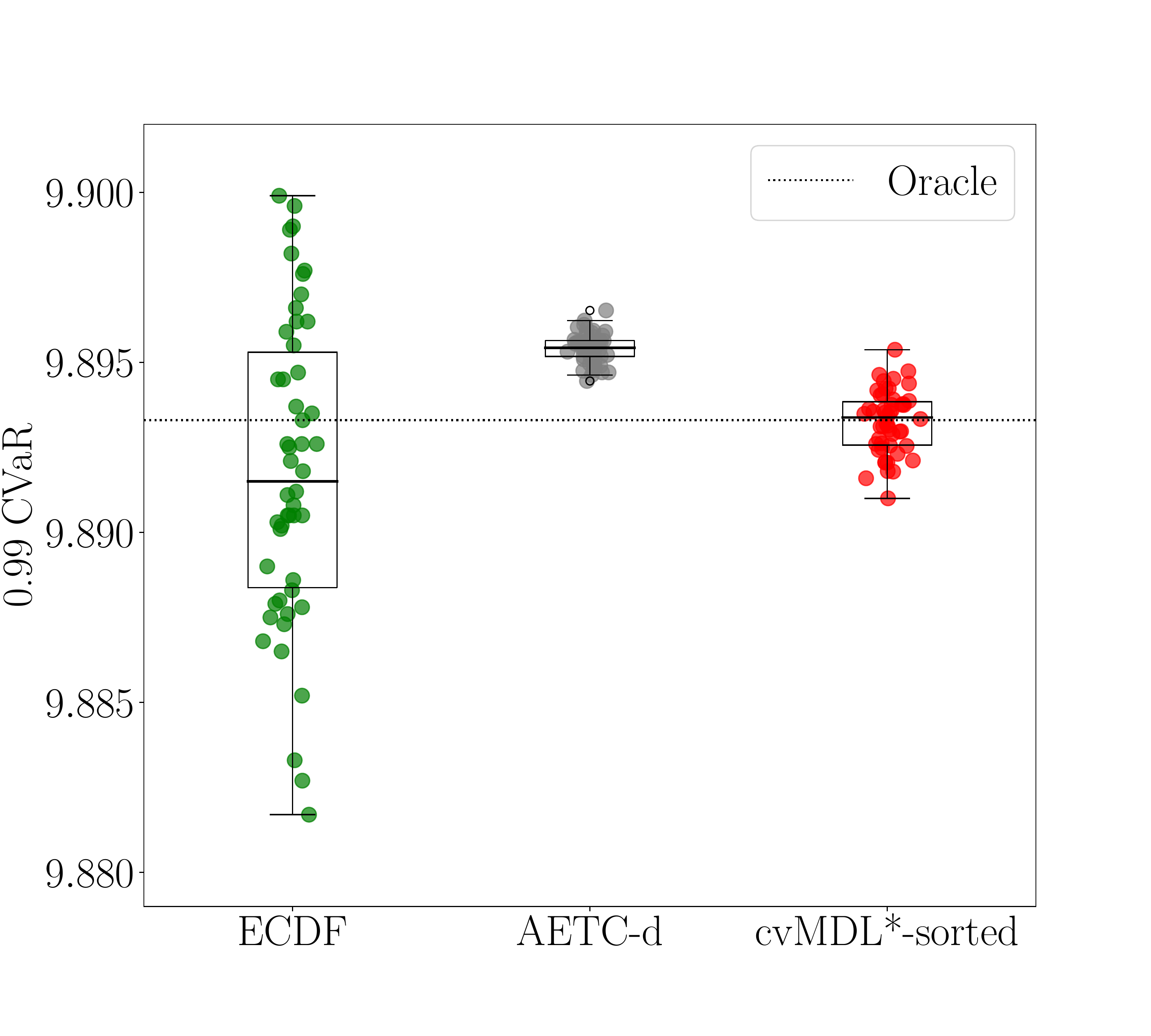}}\caption{}\label{cvar_2:(b)}
\end{subfigure} 
\begin{subfigure}{0.327\textwidth}{\includegraphics[width=\linewidth, trim={0.2cm 0.2cm 1cm 2cm},clip]{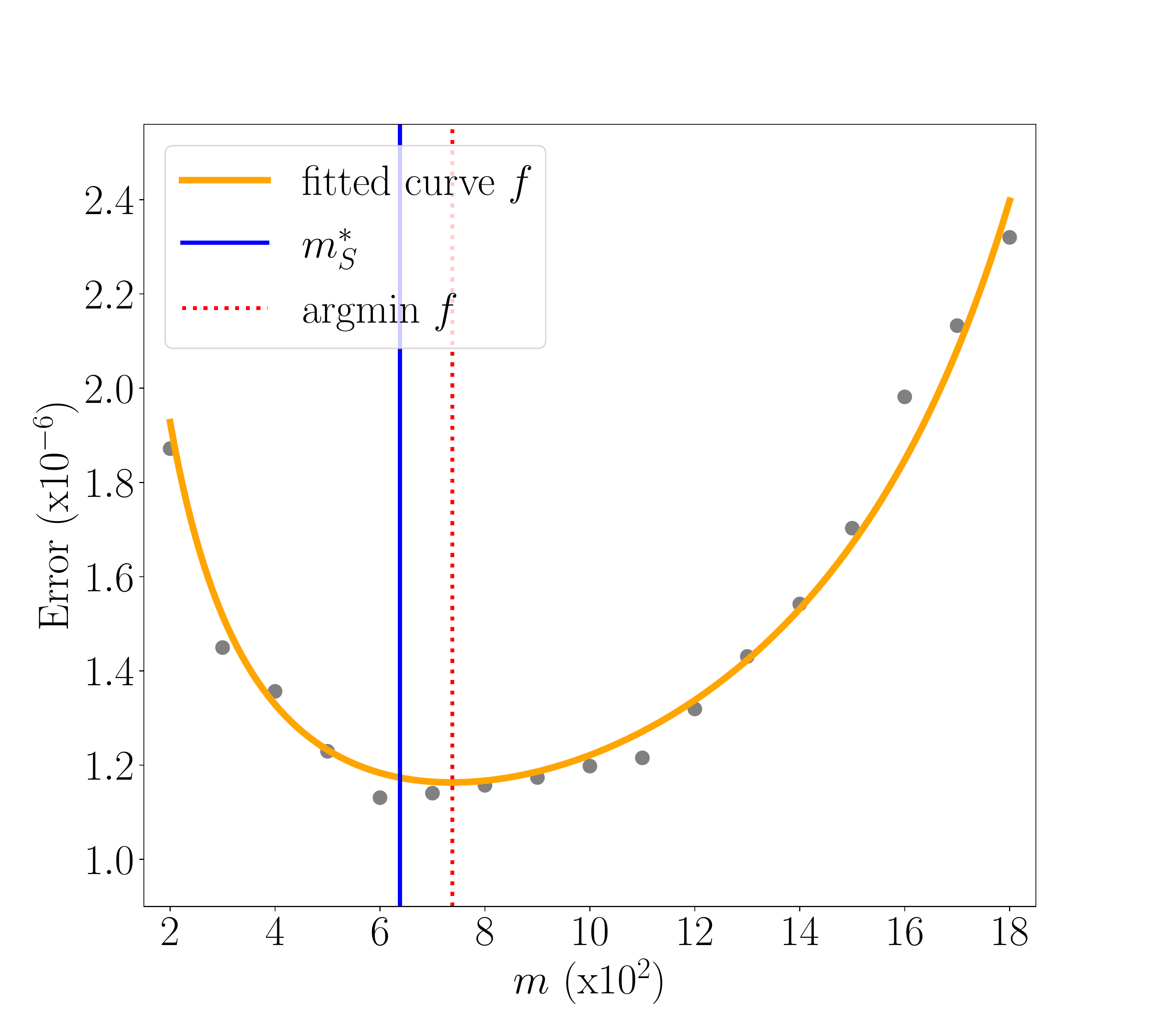}}\caption{}\label{pde_opt:(c)}
\end{subfigure} 
\caption{Linear elasticity. (a): Boxplots of the $\textsf{CVaR}_{0.95}(Y)$ computed using the estimated CDFs given by ECDF, AETC-d, and cvMDL*-sorted when $B = 10^7$ with 50 experiments. (b): Same but for $\textsf{CVaR}_{0.99}(Y)$.  (c): Inspection of how well the estimated loss $\widehat{L}_\S$ mimics the oracle loss curve as a function of $m$.  The discrete data are fitted using a function $f(m; a, b)$ of the form $\frac{a}{m} + \frac{b}{B-c_\ex m}$, with fitted values for $a$ and $b$ given by $2.64\times 10^{-4}$ and $5.57$, respectively. }\label{optimal}
\end{center}
\end{figure}

\subsubsection{Oracle versus estimated loss}
We investigate the model selection criteria used in cvMDL. 
Note for each $\S\subseteq \{1:n\}$, there is a discrepancy between the exact loss function versus the estimator $\widehat{L}_\S$ constructed with empirical data. We inspect if this approximation is reasonable. 
To numerically determine if the exploration-exploitation trade-off is optimal, we fix $B = 10^7$ and $\S = \{1,2,3,4\}$.
For a given value of $m$, we first take $m$ exploration samples to estimate the control variates parameters and then use them to build an estimator $\widetilde{F}_\S$ for $F_Y$ as in \eqref{fug}.  
We then compute the (exact) mean weighted $L^2$ loss associated with this value of $m$. 
We repeat the experiment $10$ times and compute the average loss value.
We compile results of the above for $m$ in the range from $200$ to $1800$.
The results are reported in Figure \ref{optimal} (c). 
It can be seen that the optimal exploration rate under oracle loss $L_\S$, $638$ (see \Cref{fig:struct}, right), almost matches the empirically identified optimal exploration rate, which is $736$. The small gap can be attributed to the underestimation of exploration error committed due to the finite-sample estimation of parameters.

\subsection{Extrema of Geometric Brownian Motion}\label{sec:num2}

Geometric Brownian motion is a continuous-time stochastic process that is widely used in financial modeling.    
In a simple setting, a geometric Brownian motion $S_t$ is a random process with a constant initial state $s_0 > 0$ whose evolution is described using the stochastic differential equation
\begin{align*}
  dS_t &= \mu S_t dt + \sigma S_t dW_t & t& \geq 0, & S_0 &= s_0,
\end{align*}
where both $\mu$ and $\sigma>0$ are constants, and $W_t$ is a standard Brownian motion process. 
A unique explicit solution for $S_t$ exists and can be written as 
\begin{align*}
S_t = s_0\exp\left(\left(\mu-\frac{\sigma^2}{2}\right)t + \sigma W_t\right).
\end{align*}
Set $\mu = 0.05, \sigma = 0.2, s_0 = 1$.
We are interested in estimating the joint distribution of the extreme values of $S_t$ over the time interval $[0,1]$:
\begin{align*}
&(S_{\min}, S_{\max})^\top\in\R^2 & S_{\min}:=\min_{0\leq t\leq 1}S_t, \;\; S_{\max}:=\max_{0\leq t\leq 1}S_t.
\end{align*}
We thus choose as the QoI the random vector $(S_{\min}, S_{\max})^\top$. We evaluate these quantities by discretizing the stochastic differential equation in time using the Euler--Maruyama scheme with time step $\Delta t$ and computing the discrete extrema.
The computational complexity (cost) of the corresponding procedure is proportional to the number of grid points used for discretization. 

We construct a multifidelity model for this problem based on time discretization. In particular, we consider four different time scales $\Delta t\in \{2^{-4}, 2^{-6}, 2^{-8}, 2^{-14}\}$, with the high-fidelity model $Y$ corresponding to $\Delta t = 2^{-14}$ and $X_1, X_2, X_3$ corresponding to $\Delta t = 2^{-8}, 2^{-6}, 2^{-4}$, respectively. This results in (normalized) model costs $(c_0, c_1, c_2, c_3) = (1024,16,4,1)$. The total budget $B$ takes values in $[10^4, 10^6]$. To generate joint samples, the randomness of $W_t$ is simulated from the same realization used in the high-fidelity model. The oracle CDF of the high-fidelity model is computed using MC with $10^5$ samples. The oracle correlations between the QoIs of the high- and low-fidelity models in Table \ref{1980}. 

\begin{table}[htbp]
\begin{center}
\small
\caption{Geometric Brownian motion. Oracle correlations between the high-fidelity and low-fidelity model QoIs computed using 50,000 samples. }\label{1980}
\begin{tabular}{l*{7}{c}r}
Model QoIs            & $S_{\min} (2^{-8})$ & $S_{\max} (2^{-8})$ & $S_{\min} (2^{-6})$ & $S_{\max} (2^{-6})$ & $S_{\min} (2^{-4})$ & $S_{\max} (2^{-4})$   \\
\hline
$S_{\min} (2^{-14})$  & 0.999 & 0.682 & 0.997 & 0.682 & 0.984 &0.680 \\
\hline
$S_{\max} (2^{-14})$     & 0.681 & 0.999 & 0.681 & 0.998 & 0.674 & 0.988 \\
\end{tabular} 
\end{center}
\end{table}
In this example, all model QoIs are two-dimensional random vectors so AETC-d cannot be directly applied. 
For cvMDL and its variants, setting $\omega(\x)\equiv 1$ violates \Cref{ass:omega}.
Instead, since $S_{\min}\leq s_0 = 1\leq S_{\max}$, we choose $\omega(\x) = \mathbf 1_{\mathcal T}(\x)$ as an indicator function on a two-dimensional bounded region $\mathcal T\subset \R^2$ where the most likely outcomes reside. For instance, here we take $\mathcal T = [0.5, 1]\times [1, 3]$. 
The statistics of the estimation errors and the selected models by cvMDL are reported in Figure \ref{sde-comp}(a),(b). Panel (c) shows that the correlation coefficient $\rho_\S(\x)$, is close to unity over the entire domain, suggesting that our chosen control variate \eqref{eq:rhoS-def} is a good choice.

\begin{figure}[htbp]
\centering 
\begin{subfigure}{0.325\textwidth}{\includegraphics[width=\linewidth, trim={0.02cm 0.2cm 1cm 2cm},clip]{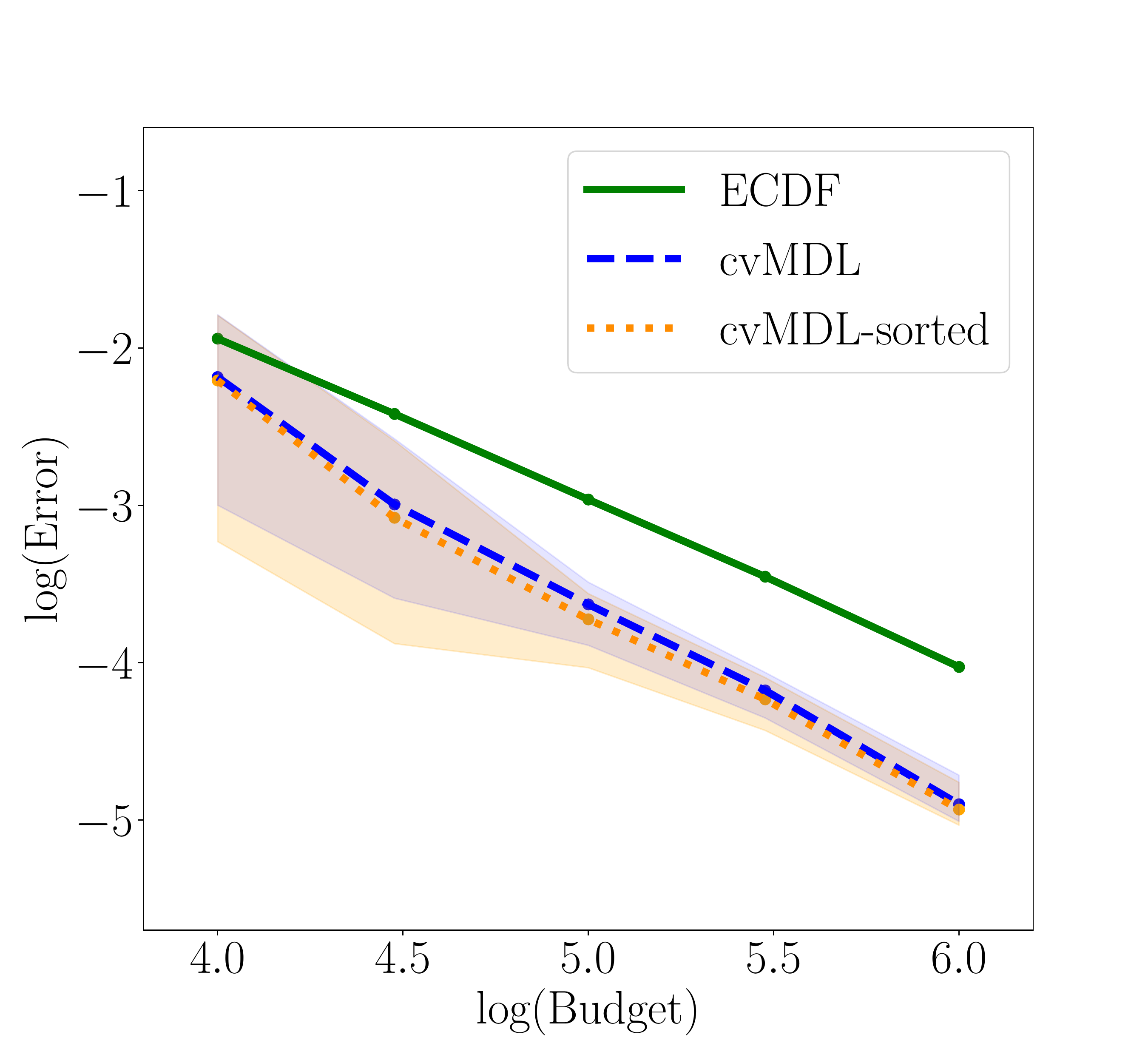}}\caption{}\label{sde_quantile:(a)}
\end{subfigure} 
\begin{subfigure}{0.325\textwidth}{\includegraphics[width=\linewidth, trim={0.02cm 0.2cm 1cm 2cm},clip]{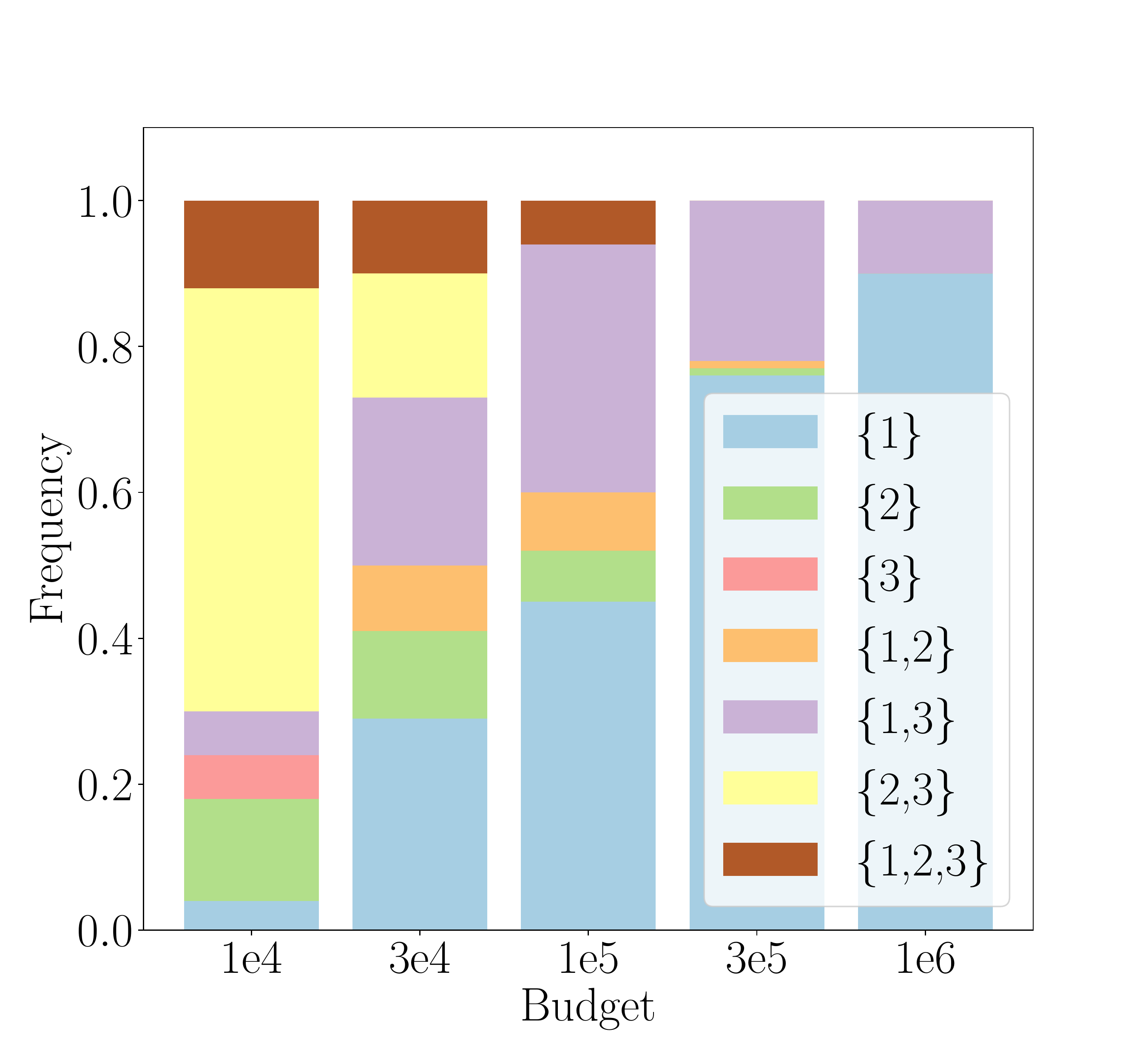}}\caption{}\label{sde_model:(b)}
\end{subfigure} 
\begin{subfigure}{0.325\textwidth}{\includegraphics[width=\linewidth, trim={0.02cm 0.2cm 1cm 2cm},clip]{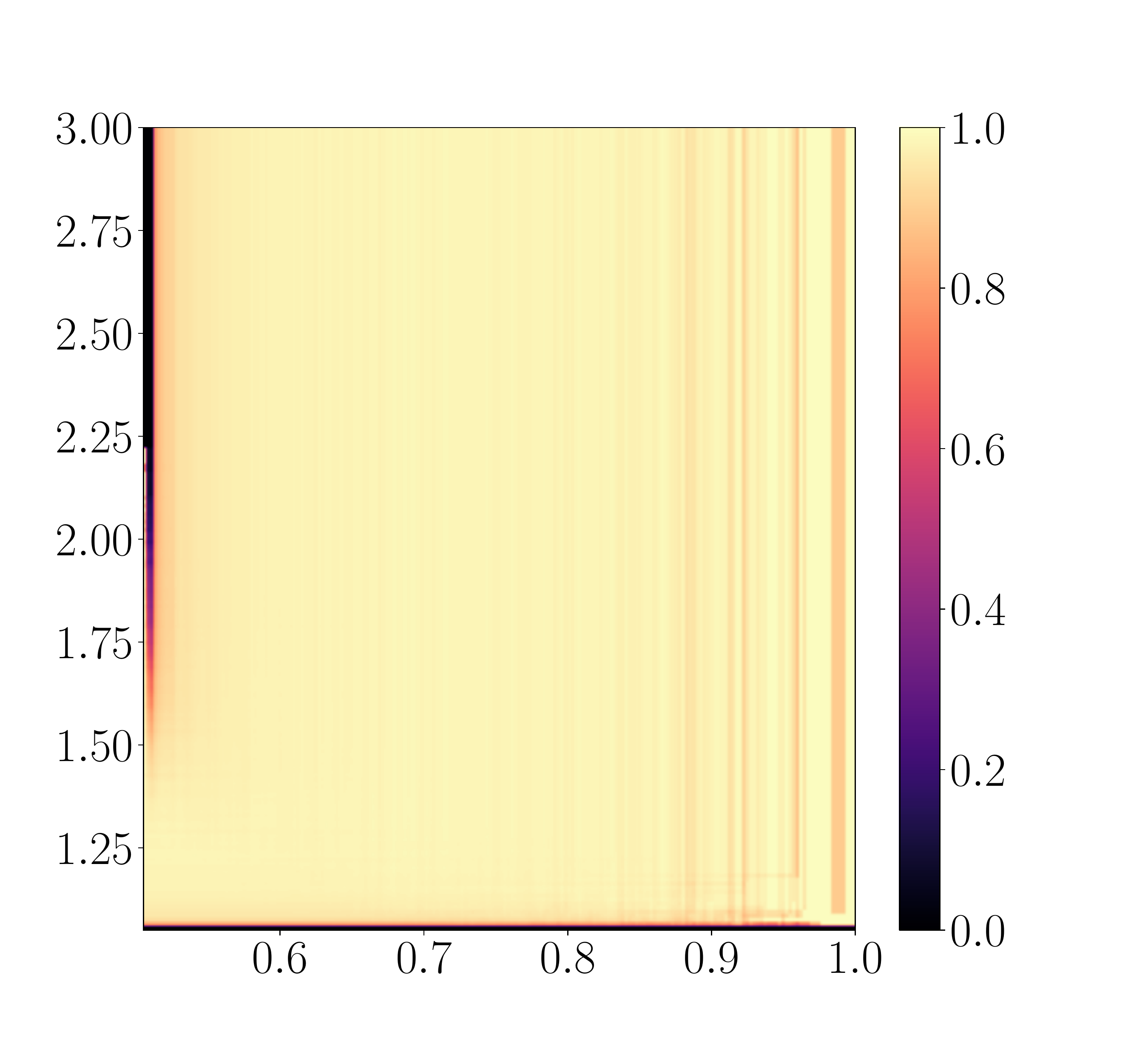}}\caption{}\label{sde_rho:(c)}
\end{subfigure} 
\caption{Geometric Brownian motion. (a): Mean $\omega(\x)$-weighted $L^2$ error between $F_Y$ and the estimated CDFs given by ECDF, cvMDL, and cvMDL-sorted, with the $5\%$-$50\%$-$95\%$ quantiles to measure the uncertainty.
  (b): Frequency of different models selected by cvMDL; cf. optimal model losses in \Cref{oracle}. (c) Estimated $\rho_\S(\x)$ from \eqref{eq:rhoS-def} when $\S = \{1\}$ using 50,000 i.i.d. samples for $\x\in \mathcal T$.} \label{sde-comp}
\end{figure}

\Cref{sde-comp} shows that cvMDL is consistent on the region $\mathcal T$, and the corresponding estimation error is on average much lower than that of ECDF. 
As the budget goes to infinity, the model selected by cvMDL converges to the single low-fidelity model $\{1\}$, which coincides with the optimal model computed using oracle statistics.  
With additional sorting to stabilize the algorithm, cvMDL-sorted slightly further reduces the errors of cvMDL, which is consistent with the observations in the 1d case. 
In the pre-asymptotic regime when the budget is small, the models selected by cvMDL have relatively large fluctuations, but these stabilize for larger budgets.
More results from this experiment are presented in \Cref{sec:num2-details}.

\subsection{Brittle fracture behavior of a fiber-reinforced matrix}\label{sec:num3}

We investigate a two-dimensional fiber-reinforced matrix, a subject commonly explored in the field of fracture mechanics. Our QoI is the maximum load that induces brittle fracture within the matrix region adjacent to the fiber. To obtain the QoI, we solve a quasi-static, two-dimensional finite element problem. 

\begin{figure}[htbp]
  \begin{minipage}[b][][b]{0.30\textwidth}
    {\includegraphics[width=\textwidth]{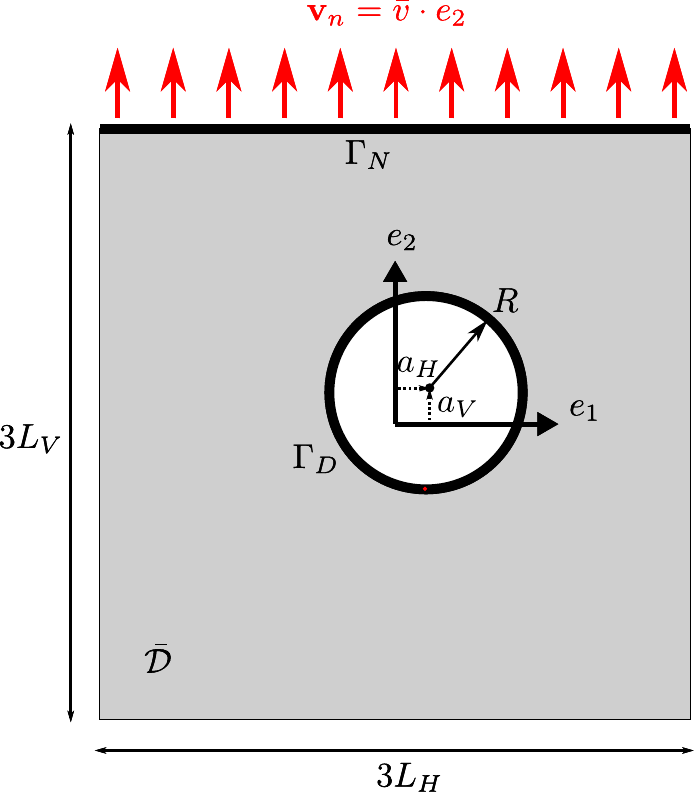}}
  \end{minipage}
  \begin{minipage}[b][][c]{0.68\textwidth}
    \resizebox{\textwidth}{!}{
    \begin{tabular}{ccccccc}
    \toprule 
    Random  & \multirow{2}{*}{Property} & \multirow{2}{*}{Mean} & \multirow{2}{*}{$\mathrm{COV}_i$ (\%)$^{\mathrm{a}}$} & Lower  & Upper & Probability \tabularnewline
    variable &  &  &  & boundary & boundary & distribution\tabularnewline
    \midrule
    $p_{1}$ & $\kappa$ (GPa) & $55$ & $5$ & $45$ & $60$ & Truncated normal\tabularnewline
    $p_{2}$ & $\nu$ & $0.25$ & $5$ & $0.20$ & $0.30$ & Truncated normal\tabularnewline
    $p_{3}$ & $G_{c}$ (GPa) & $1$ & $10$ & 0 & $\infty$ & Lognormal\tabularnewline
    $p_{4}$ & $L_{H}$(cm) & $1$ & $-$ & $0.9$ & $1.1$ & Uniform\tabularnewline
    $p_{5}$ & $L_{V}$ (cm) & $1$ & $-$ & $0.9$ & $1.1$ & Uniform\tabularnewline
    $p_{6}$ & $R$ (cm) & $0.5$ & $-$ & $0.4$ & $0.6$ & Uniform\tabularnewline
    $p_{7}$ & $a_{H}$ (cm)  & $0$ & $-$ & $-0.05$ & $0.05$ & Uniform\tabularnewline
    $p_{8}$ & $a_{V}$ (cm) & $0$ & $-$ & $-0.05$ & $0.05$ & Uniform\tabularnewline
    \bottomrule
    \end{tabular}
  }
    \begin{tablenotes}
    \scriptsize\smallskip 
  \item{a.} Coefficient of variation $\mathrm{COV}_i=100\times\sigma_i/\mathbb{E}[p_i]$ for $i=1,\ldots,8.$\\[3pt]
    \end{tablenotes}
  \end{minipage}
  \caption{Fiber-reinforced matrix. Left: Geometry, loading, and boundary condition. We consider the domain $\mathcal{D}=[-1.5L_H, 1.5L_H]\times[-1.5 L_V, 1.5L_V]\in\mathbb{R}^2$ including a circular hole of radius $R$ at $(a_H,a_V)$ in the ${e_1}$ and ${e_2}$ directions of the center lines. Right: Properties of the eight random inputs in the fiber-reinforced matrix. Here, $\kappa$ is the Young's modulus and $\nu$ is the Poisson ratio, see \Cref{sec:num3-details} for details.}
\label{fig:ex3.1}
\end{figure} 
Figure~\ref{fig:ex3.1} (left) shows a square plate of length $3L_H$ in the $e_1$ direction and $3L_V$ in  the direction with a circular inclusion of radius $R$. In the domain $\mathcal{D}=[-1.5L_H, 1.5L_H]\times[-1.5 L_V, 1.5L_V]\in\mathbb{R}^2$, the loading is given by an applied normal displacement ${\mathbf{v}_n}=\bar{v}\cdot e_2$ on the boundary $\Gamma_N$. The other boundaries, denoted $\Gamma_D$, are free, corresponding to zero displacement on $\Gamma_D$. 
The closure of the domain is $\bar{\mathcal{D}}\equiv\ \mathcal{D}\cup\Gamma$, with $\Gamma=\Gamma_D\cup\Gamma_N$. The unknowns are the displacement field $\mathbf{u}=(u,v)^{\top}\in\mathbb{R}^2$ and the scalar damage variable $\phi_d\in\mathbb{R}$ in the domain $\bar{\mathcal{D}}$ of the elastic body. 
This setup models the traction experiment of a fiber-reinforced matrix \cite{amor2009regularized,bourdin2000numerical}, with the corresponding boundary value problem as described in \cite{natarajan2019fenics}. The PDE formulation is: find $\mathbf{u}(\mathbf{x})$ and $\phi_d(\mathbf{x})$ for $\mathbf{x}=(x_1,x_2)^{\top}\in\bar{\mathcal{D}}$, such that
\begin{align} 
[(1-\phi_d(\mathbf{x}))^2+q]\nabla\cdot \bm{\sigma}(\mathbf{x})&=\bf{0},\label{disp_gov} \\
-G_c\ell_0\nabla^2\phi_d(\mathbf{x})+\left[\dfrac{G_c}{\ell_0}+2H(\mathbf{x})\right]\phi_d(\mathbf{x})&=2H(\mathbf{x}),\label{damage_gov}
\end{align} 
with corresponding boundary conditions on $\Gamma_N$ and $\Gamma_D$. The full model details, including definitions for $\ell_0$, $\bm{\sigma}$, $q$, and $H$ are shown in \Cref{sec:num3-details}. We consider eight input random variables, $p_1, \ldots p_8$, which are stemming from material properties and geometries, see Figure~\ref{fig:ex3.1} (right).

\subsubsection{High-fidelity and low-fidelity models}
\begin{figure*}
\centering
{\includegraphics[width=0.9\textwidth]{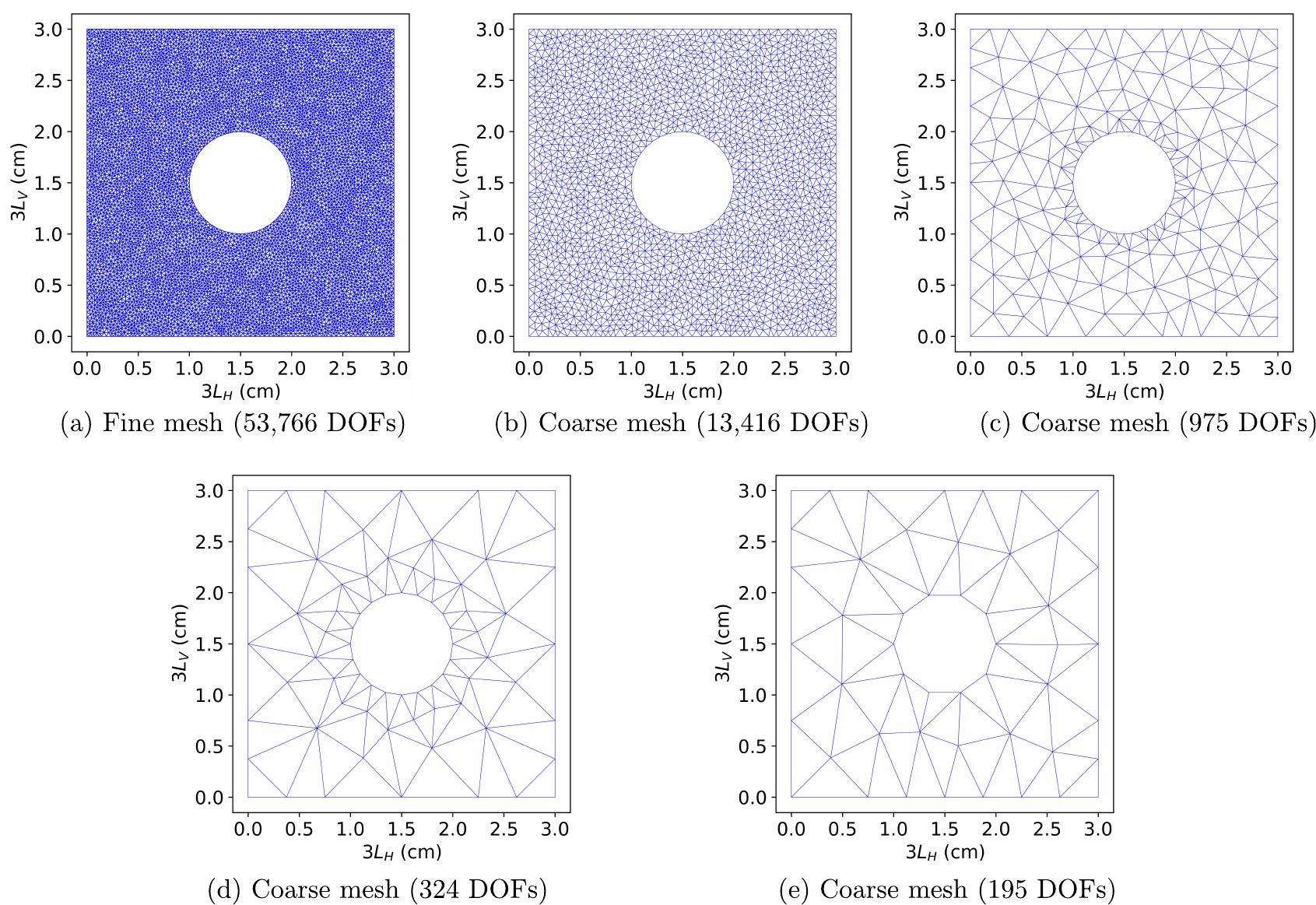}}\hfill
\caption{Fiber-reinforced matrix. The fine finite-element mesh in (a) is used to generate the high-fidelity QoI $Y$, while the coarse meshes in (b)--(e) are used to generate the low-fidelity 
QoIs $X_1,X_2,X_3,X4$, respectively.}
\label{fig:ex3.2}
\end{figure*} 
The model~\eqref{disp_gov} and \eqref{damage_gov} for brittle fracture analysis is solved using an iterative solver, wherein we solve for the scalar damage variable ($\phi_d$) using the displacement fields ($\mathbf{u}$). Subsequently, the updated damage variable is used to solve for the displacement field, and the process is repeated until the difference between the current and previous iterates becomes less than the user-defined tolerance $\delta \ll 1$. We set $\delta=5\times 10^{-3}$ for the high-fidelity model, and $\delta = 5\times 10^{-2},~0.2,~0.4$ for the lower-fidelity models. Figure~\ref{fig:ex3.2} shows a fine mesh and several coarse meshes used for the high and low-fidelity models, respectively. The details of the high and low-fidelity models, including CPU times to implement finite element analysis, are reported in Table~\ref{table:model}, which also reports (normalized) model costs. The oracle correlations between the QoI ($Y$) of the high-fidelity model and its low-fidelity QoIs $X_1$, $X_2$, $X_3$, $X_4$ are $0.96$, $0.93$, $0.87$, and $0.74$, respectively. 
\begin{table}[htbp]
\begin{center}
\caption{Fiber-reinforced matrix. Comparison of the high-fidelity and four different low-fidelity finite element models to compute the QoI.}
\small
\begin{tabular}{ccccc}
\toprule 
Model type & Tolerance ($\delta$) & DOFs & CPU time (s)$^\mathrm{a}$ & \multirow{1}{*}{Normalized cost$^\mathrm{b}$}\tabularnewline
\midrule
High-fidelity, $Y$ & $5\times10^{-3}$ & $\num[group-separator={,}]{53766}$ & $250.51$ & $108.9$\tabularnewline
Low-fidelity 1, $X_1$ & $5\times10^{-2}$ & $\num[group-separator={,}]{13416}$ & $20.95$ & $9.1$\tabularnewline
Low-fidelity 2, $X_2$ & $0.2$ & $975$ & $2.97$ & $1.3$\tabularnewline
Low-fidelity 3, $X_3$ & $0.2$ & $324$ & $2.50$ & $1.1$\tabularnewline
Low-fidelity 4, $X_4$ & $0.4$ & $195$ & $2.30$ & $1$\tabularnewline
\bottomrule
\end{tabular}
\begin{tablenotes}
\scriptsize\smallskip
\item{a.} The CPU time is averaged over $5$ trials.
\item{b.} The cost is normalized so that sampling $X_4$ has unit cost.
\end{tablenotes}
\label{table:model}
\end{center}
\end{table} 
\begin{figure}[h]
\centering
\begin{subfigure}{0.4\textwidth}{\includegraphics[width=\linewidth]{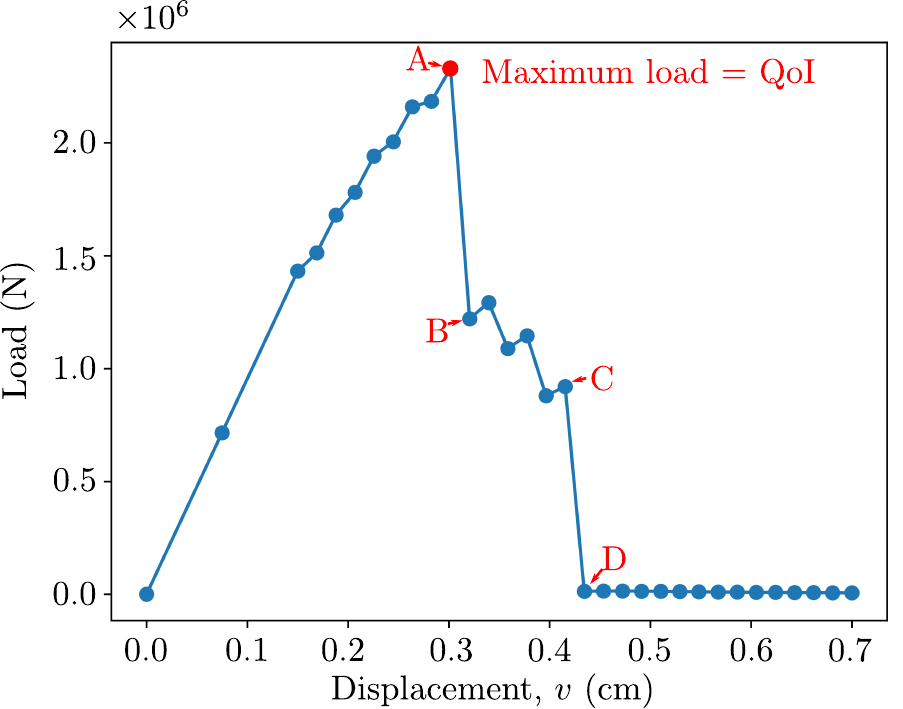}}\caption{Load-displacement curve}\label{brittle-QoI:(a)}
\end{subfigure}
\hspace{0.5cm}
\begin{subfigure}{0.4\textwidth}{\includegraphics[width=\linewidth]{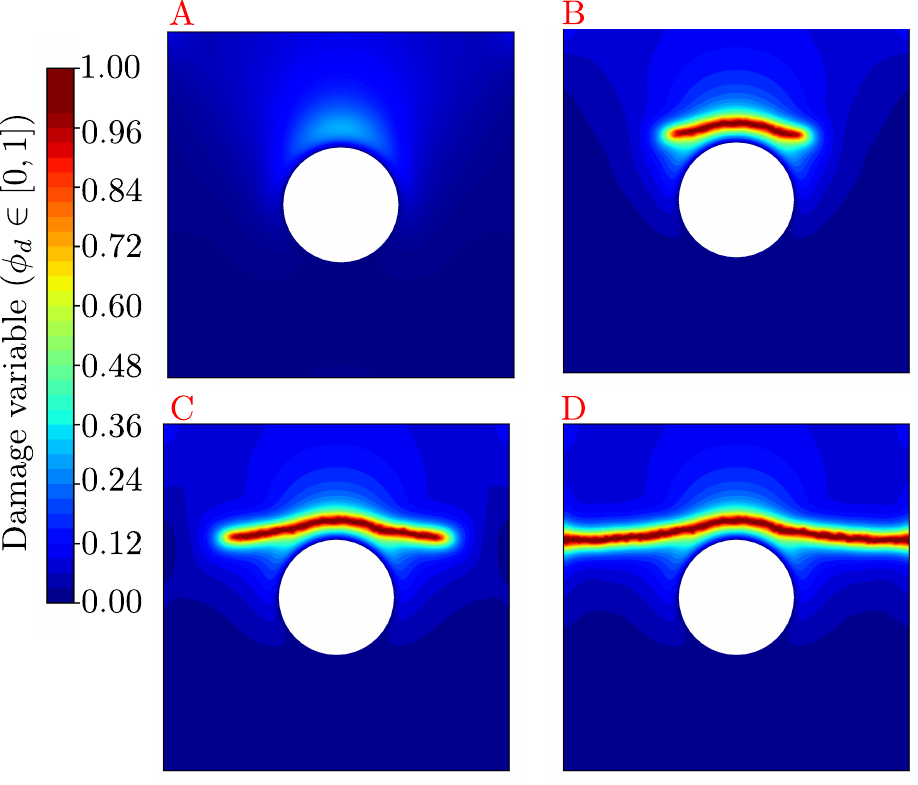}}\caption{Damage variable contour}\label{brittle-QoI:(b)}
\end{subfigure}
\caption{Fiber-reinforced matrix. Finite element analysis results: (a) The ultimate tensile load in the load-displacement curve is recorded as the QoI. (b) the damage variable contour shows the degree of damage ($0<\phi_d \leq 1$) that occurred in regimes `A'--`D' of the load-displacement curve, indicating that brittle fracture occurred at the top of circular hole advances in the regime `A' to `B' before a complete fracture occurs in regime `D'. }
\label{fig:ex3.3}
\end{figure} 

We measure the maximum tensile load as a QoI from the load-displacement curve. Figure~\ref{brittle-QoI:(a)} presents the relationship between the resulting load and the imposed displacement on the top of the fiber-reinforced matrix. As the applied displacement $\bar{v}$ at the top increases from $0$ cm to $7.5\times10^{-2}$~cm, the resulting load exhibits an almost linear increase until the structure begins to sustain damage. Upon reaching a peak load, the rate of change of the resulting load over displacement significantly decreases. This behavior is observed from regimes `A' to `B' in Figure~\ref{brittle-QoI:(a)}. The regimes occur due to the partial fracturing of the matrix, as indicated by a damage variable of value $\phi_d = 1$ in Figure~\ref{brittle-QoI:(b)}. There is another substantial drop in load from regimes `C' to `D', presenting complete fracture throughout the entire domain of the matrix. The maximum tensile load at `A' is represented as a scalar; thus, for the high-fidelity QoI we have $d=1$, while for the low-fidelity QoIs we have $d_1=d_2=d_3=d_4=1$.

\subsubsection{Results for CDF, mean, standard deviation, and CVaR estimation}
\begin{table*}[htbp]
\caption{
Fiber-reinforced matrix. Comparison of the accuracy of cvMDL{*}-sorted and ECDF in estimating CDF, mean, standard deviation, and $\mathrm{CVaR}$ at $\beta=0.99$ for the QoI (the ultimate tensile load).  We present the mean error of these estimates relative to oracle estimates obtained from $\num[group-separator={,}]{5500}$ i.i.d. high-fidelity samples ($B=\num[group-separator={,}]{577170}$) over 100 trials.}
\begin{center}
\vspace{0.1in}
\small
\begin{tabular}{cccccccc}
\toprule 
\multirow{1}{*}{Method} &  & CDF (error)$^{\mathrm{a}}$ &  & Mean (-)$^{\mathrm{b}}$ & Standard deviation (-)$^{\mathrm{b}}$ &  & $\mathrm{CVaR_{0.99}}$ (-)$^{\mathrm{b}}$\tabularnewline
\midrule 
\multicolumn{8}{c}{\textbf{Budget $B=\num[group-separator={,}]{20000}$}}\tabularnewline
cvMDL{*}-sorted &  & $2.192\times10^{-4}$ &  & $1.559\times10^{-3}$ & $2.291\times10^{-2}$ &  & $6.698\times10^{-3}$\tabularnewline
ECDF &  & $3.916\times10^{-4}$ &  & $2.991\times10^{-3}$ & $3.837\times10^{-2}$ &  & $9.435\times10^{-3}$\tabularnewline
\midrule
\multicolumn{8}{c}{\textbf{Budget $B=\num[group-separator={,}]{35000}$}}\tabularnewline
cvMDL{*}-sorted &  & $1.194\times10^{-4}$ &  & $1.168\times10^{-3}$ & $1.575\times10^{-2}$ &  & $5.531\times10^{-3}$\tabularnewline
ECDF &  & $2.269\times10^{-4}$ &  & $2.311\times10^{-3}$ & $2.955\times10^{-2}$ &  & $7.577\times10^{-3}$\tabularnewline
\midrule
\multicolumn{8}{c}{\textbf{Budget $B=\num[group-separator={,}]{50000}$}}\tabularnewline
cvMDL{*}-sorted &  & $8.441\times10^{-5}$ &  & $8.957\times10^{-4}$ & $1.340\times10^{-2}$ &  & $4.846\times10^{-3}$\tabularnewline
ECDF &  & $1.627\times10^{-4}$ &  & $1.983\times10^{-3}$ & $2.670\times10^{-2}$ &  & $7.102\times10^{-3}$\tabularnewline
\bottomrule
\end{tabular}
\label{brittle:result}
\end{center}
\vspace{0.1in}
\begin{tablenotes}
\scriptsize\smallskip 
\item{a.} We determine the mean $\omega(x)$-weighted $L^2$ error between $F_Y$ and the estimated CDFs given by cvMDL{*}-sorted and ECDF. The mean $\omega(x)$-weighted $L^2$ errors are averaged over independent $100$~trials.   
\item{b.} We use the mean relative error of the estimates in the comparison of the oracle estimates over independent $100$ trials.
\end{tablenotes}
\end{table*}
The high-fidelity simulations are costly enough here that we must approximate the oracle solution with limited samples: 6000 high-fidelity simulations are generated, and we randomly select 5500 to estimate a quantity. We generate an ensemble of 100 such instances and use the average as the oracle. For the multi-fidelity procedure, 6000 joint high- and low-fidelity samples are used as the pool from which model samples are drawn. We investigate three budget values as reported in \Cref{brittle:result}. Over the corresponding 100 trials, cvMDL*-sorted predominantly selects the model subset $\mathcal{S}=\{2,4\}$ (selected 95, 96, and 98 times for the 3 budget values, respectively) and less frequently selects the model subset $\mathcal{S}=\{4\}$ (selected 5, 4, and 2 times, respectively) from the model set $\{1, 2, 3, 4\}$. This process yields averaged optimal exploration sample numbers $m^*_{\mathcal{S}}=140$, $245$, and $350$ for each respective budget $B$.

In Table~\ref{brittle:result}, cvMDL{*}-sorted surpasses ECDF in terms of mean errors for CDF, mean, standard deviation, and CVaR at $a=0.99$ for the QoI. The second column of Table~\ref{brittle:result} reports the mean-weighted $L^2$ error for $F_Y$ over $Y\in[1.5\times 10^6,3\times10^6]$ (N). The last four columns of that table show that the proposed cvMDL approach yields nearly twice the accuracy compared to the ECDF method, and this increased accuracy extends to the estimated statistics and risk metric. 
\begin{figure}[htbp]
\centering 
\begin{subfigure}{0.325\textwidth}{\includegraphics[width=\linewidth]{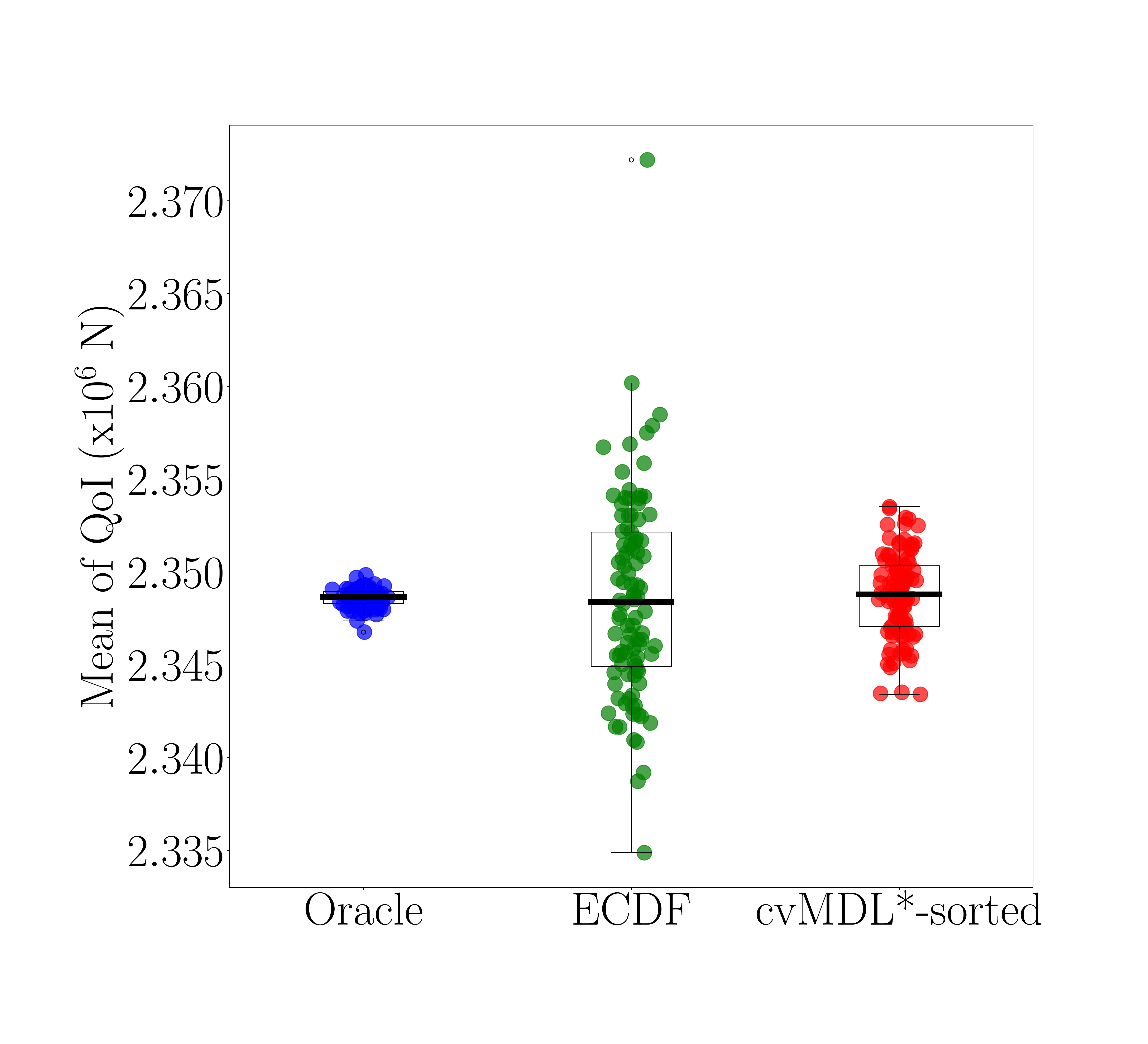}}\caption{Mean}\label{brittle:(a)}
\end{subfigure} 
\begin{subfigure}{0.325\textwidth}{\includegraphics[width=\linewidth]{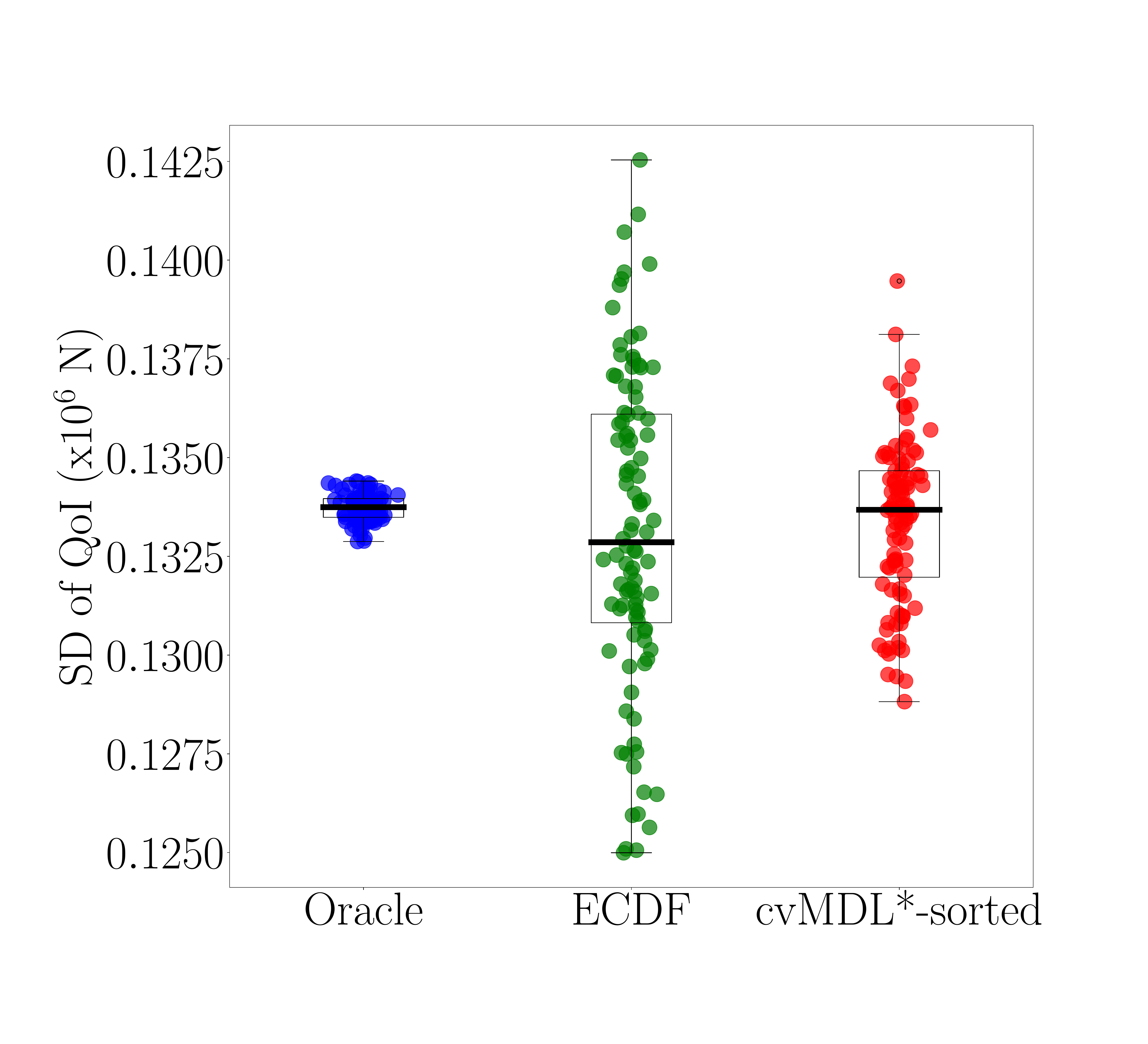}}\caption{Standard deviation}\label{brittle:(b)}
\end{subfigure} 
\begin{subfigure}{0.325\textwidth}
{\includegraphics[width=\linewidth]{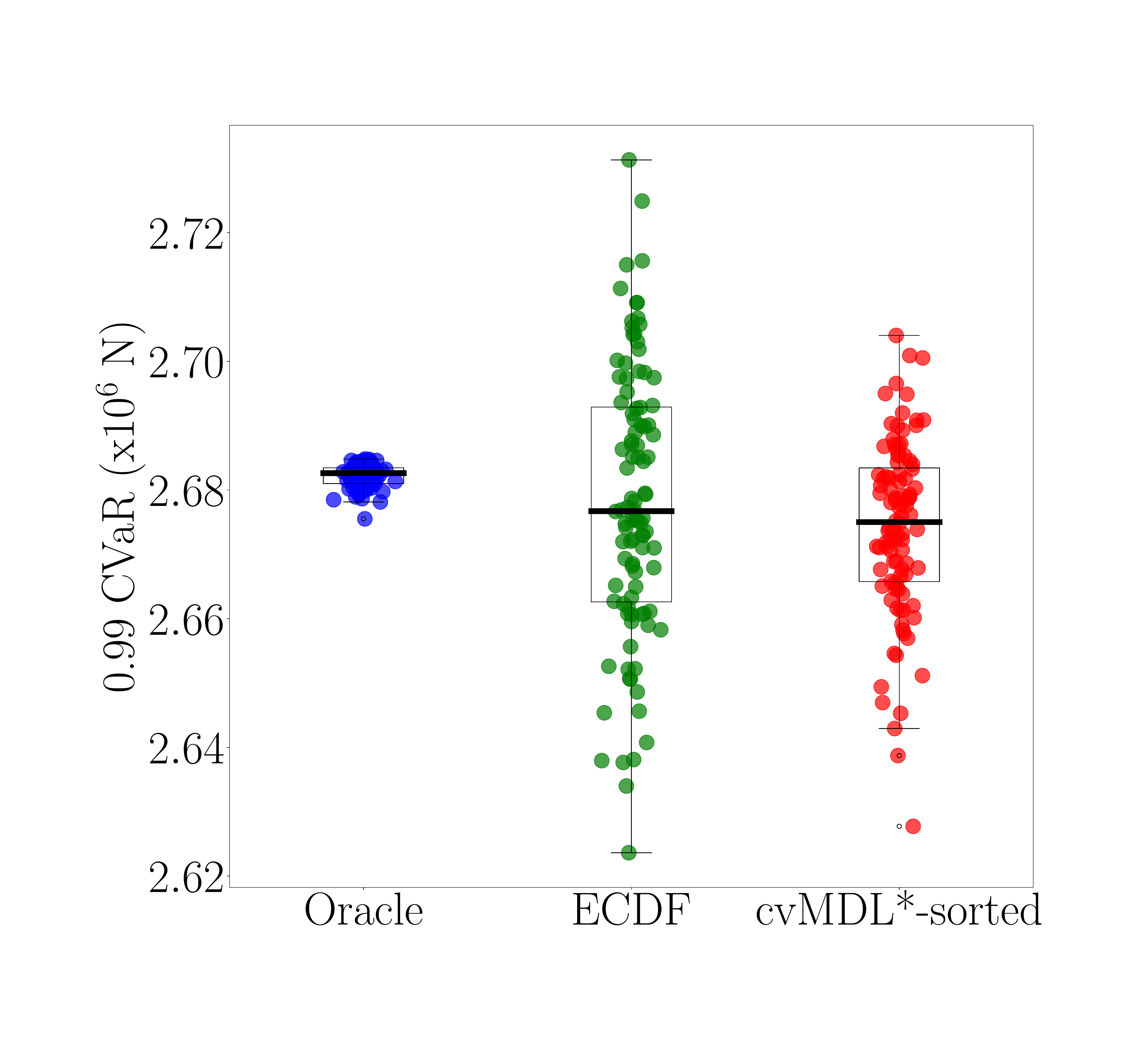}}\caption{$\mathrm{CVaR}_{0.99}$}\label{brittle:(c)}
\end{subfigure} 
\caption{Fiber-reinforced matrix. (a): Boxplots of the mean of QoI, computed by ECDF and cvMDL*-sorted when $B=\num[group-separator={,}]{50000}$ with 100 experiments. (b): Same boxplots for the standard deviation of QoI. (c): Same boxplots for CVaR (a=0.99).}\label{brittle_results}
\end{figure} 

Figures~\ref{brittle:(a)}--\ref{brittle:(c)} present the results of the statistical mean, standard deviation, and CVaR at $a=0.99$ via boxplots. The cvMDL*-sorted method achieves higher accuracy than the ECDF by a significant margin when compared to the oracle results. The box plots demonstrate that the statistical estimates by the cvMDL{*}-sorted method exhibit a lower spread than those of the ECDF, showing that cvMDL*-sorted estimates have smaller variance for this example.
\section{Conclusions}\label{sec:conc}
We developed a versatile framework for efficiently estimating the CDFs of QoI subject to a budget constraint.  
To implement this framework, we constructed a set of binary control variables based on linear surrogates and used them in an adaptive meta algorithm (cvMDL) that estimates the CDFs.
We established both uniform consistency and trade-off optimality for the corresponding algorithm as the budget tends to infinity. 

Although the proposed framework is built upon an existing bandit-learning paradigm, our treatment of exploration and exploitation distinguishes itself from the previous works.
In particular, the new approach employed in our framework leads to innovative estimators that dramatically relaxes the reliance on underlying model assumptions. Furthermore, the approach allows for the treatment of different types of QoIs, both vector- and scalar-valued.
To the best of our knowledge, our framework provides the first robust multifidelity CDF estimator under a budget constraint that can deal with both heterogeneous model sets and multi-valued QoIs at the same time, meanwhile requiring no a priori cross-model statistics.

\begin{appendix}

\section{Proofs of the main results}

\subsection{Proof of \Cref{thm:sort}}\label{ssec:sorting-proof}
Sort the entries of $\bm T$ in increasing order: $z^{(1)}<\cdots <z^{(M)}$, where $M$ is the product of each dimension $M_i$ of $\bm T$: $M = M_1\cdots M_d$. 
At the beginning of the algorithm, the index of $z^{(1)}$ is strictly decreasing in each direction. As a result, $z^{(1)}$ arrives at the entry of $\bm T$ with index $(1, \ldots, 1)$ after a finite number of iterations, and after that, it remains unchanged in the subsequent iteration.
In fact, for every $s< M$, assuming $z^{(1)}, \ldots, z^{(s)}$ have reached their final positions after which no change occurs, the index of $z^{(s+1)}$ is decreasing in each direction if the algorithm has not converged yet. 
The result follows by noting that $M$ is finite, and a stationary point must possess the desired monotonicity. 

\subsection{Proof of \Cref{lemma:parameter-consistency}}\label{sec:lemma-proof}
  This section contains the proofs of statements \ref{lemma:item:0} through \ref{lemma:item:5} in \Cref{lemma:parameter-consistency}. The proof of statement \ref{lemma:item:1}, the asymptotic consistency of $\widehat{\bm B}_{\S^+}$, is a direct result of the strong law of large numbers, so we omit this proof.

\subsubsection{Proof of Statement \ref{lemma:item:0}}\label{appx:1}

We only prove \eqref{0022} as the proof for \eqref{0021} is similar. 
Denote by $\bm e_i$ the $i$th unit vector in $\R^d$, i.e., $\bm e_d^{(j)} = \delta_{ij}, j\in \{1:d\}$, where $\delta$ is the Kronecker notation, and $\bm e := \sum_{i\in \{1:d\}}\bm e_i$ is the all-ones vector in $\R^d$.  
For fixed $\x\in\R^d$, without loss of generality, we assume $F_{V(\bm A)\vee Y}(\x)\leq F_{V(\bm B_{\S^+})\vee Y}(\x)$, as the other case is similar by reversing $F_{V(\bm A)\vee Y}(\x)$ and $F_{V(\bm B_{\S^+})\vee Y}(\x)$. 
Meanwhile, 
\begin{align*}
&V(\bm B_{\S^+})\vee Y\leq \x\Rightarrow V(\bm A)\vee Y \leq \x +\Delta \x\Rightarrow V(\bm A)\vee Y -\|\Delta \x\|_\infty\bm e \leq \x,
\end{align*}
where 
\begin{align*}
\Delta \x = \sum_{i\in \{1:d\}}|X_{\S^+}^\top(\bm A^{(i)}-\bm B_{\S^+}^{(i)})|\bm e_i.
\end{align*} 
Hence, under Assumptions \ref{ass1} and \ref{ass4}, for $t>0$, 
\begin{align*}
|F_{V(\bm A)\vee Y}(\x) - F_{V(\bm B_{\S^+})\vee Y}(\x)| &= F_{V(\bm B_{\S^+})\vee Y}(\x) - F_{V(\bm A)\vee Y}(\x)\\
&\leq F_{V(\bm A)\vee Y-\|\Delta \x\|_\infty\bm e}(\x) - F_{V(\bm A)\vee Y}(\x)\\
&\leq F_{V(\bm A)\vee Y- t\bm e}(\x) - F_{V(\bm A)\vee Y}(\x) + \P\left(\|\Delta \x\|_\infty\geq t\right)\\
& = F_{V(\bm A)\vee Y}(\x + t\bm e) - F_{V(\bm A)\vee Y}(\x) + \P\left(\|\Delta \x\|_\infty\geq t\right)\\
&\lesssim C\sqrt{d} t + \sum_{i\in \{1:d\}}\P\left(\|X_{\S^+}^\top(\bm A^{(i)}-\bm B_{\S^+}^{(i)})\|_2\geq t\right)\\
&\stackrel{\eqref{ass}}{\lesssim} C\sqrt{d} t + \frac{1}{t^2}\sum_{i\in \{1:d\}}\|\bm A^{(i)}-\bm B_{\S^+}^{(i)}\|^2_2,
\end{align*}
where the penultimate inequality follows from the Lipschitz condition on $F_{V(\bm A)\vee Y}$ and a union bound, 
and the last inequality follows from Markov's inequality. 
Taking $t = \|\bm A - \bm B_{\S^+}\|_F^{2/3}$ yields the desired result.

\subsubsection{Proof of Statement \ref{lemma:item:2}}\label{appx:2}

We only prove the first statement; the second can be proved similarly.
Recall that
\begin{align*}
	&\widehat{F}_{\widehat{H}_\S}(\x)  = G(\widehat{\bm B}_{\S^+}; \x)& G(\bm\A; \x) := \frac{1}{m}\sum_{\ell\in \{1:m\}} \mathbf 1_{\{(X^\top_{\ex, \ell, \S^+}\bm\A)^\top\leq \x\}}\ \ \A\in\R^{(d_\S+1)\times d},
\end{align*}
where $X_{\ex, \ell, \S^+}$ denotes the $\ell$th exploration sample of $X_\S$ with intercept. 
It follows from the direct computation that
\begin{align*}
&\sup_{\x\in\R^d}|\widehat{F}_{\widehat{H}_\S}(\x) - {F}_{{H}_\S}(\x)|\\
\leq&\   \sup_{\x\in\R^d}\biggl| G(\widehat{\bm B}_{\S^+}; \x) - \mathbb{P}((X_{\S^+}^\top\widehat{\bm B}_{\S^+})^\top \leq \x) \biggr| +  \sup_{\x\in\R^d}\biggl|  \mathbb{P}((X_{\S^+}^\top\widehat{\bm B}_{\S^+})^\top \leq \x) - \mathbb{P}((X_{\S^+}^\top{\bm B}_{\S^+})^\top \leq \x) \biggr|\\
\leq&\ \sup_{\bm\A\in\R^{(d_\S+1)\times d}}\sup_{\x\in\R^d}  \biggl| G(\bm A; \x) - \mathbb{P}((X_{\S^+}^\top\bm A)^\top \leq \x) \biggr| +  \sup_{\x\in\R^d}\biggl|  \mathbb{P}((X_{\S^+}^\top\widehat{\bm B}_{\S^+})^\top \leq \x) - \mathbb{P}((X_{\S^+}^\top{\bm B}_{\S^+})^\top \leq \x) \biggr|\\
 =&\  \underbrace{\sup_{\bm\A\in\R^{(d_\S+1)\times d}}  \biggl| G(\bm A; \bm 0) - \E[G(\bm A; \bm 0)] \biggr|}_{\Lambda_{m,1}} +  \underbrace{\sup_{\x\in\R^d}\biggl|  \mathbb{P}((X_{\S^+}^\top\widehat{\bm B}_{\S^+})^\top \leq \x) - \mathbb{P}((X_{\S^+}^\top{\bm B}_{\S^+})^\top \leq \x) \biggr|}_{\Lambda_{m,2}}
\end{align*}
where $X_{\S^+}$ a general notation that is independent of $\widehat{\bm B}_{\S^+}$ and $\bm 0$ is the all-zeros vector.
Note that $\Lambda_{m,1}$ has no supremum over $\bm x$ since one is able to alter the intercept coefficients in $\bm A$ to yield different values of $\bm x\in\R^d$ without changing the coefficients of $X_{\S}$. 
In what follows, we show that both $ \Lambda_{m,1} $ and $ \Lambda_{m,2} $ converge to $ 0 $ a.s.

To bound $ \Lambda_{m,1} $, we appeal to the empirical process theory.
For any $\bm\A \in \R^{(d_\S+1)\times d}$, the indicator function $ \mathbf 1_{\{(X^\top_{\ell, \S^+}\bm\A)^\top\leq \bm 0\}} \leq 1.$ 
According to Massart concentration inequality \cite[Theorem 14.2]{buhlmann2011statistics}, we have for any $ t > 0 $ such that
\begin{equation*}
	\mathbb{P}(\Lambda_{m,1} > \mathbb{E}[\Lambda_{m,1}] + t) \leq \exp(-mt^2/8).
\end{equation*}
Taking $ t = 4\sqrt{\log m/m} $, 
\begin{equation}\label{concentration}
	\mathbb{P}\left(\Lambda_{m,1} > \mathbb{E}[\Lambda_{m,1}] + 4\sqrt{\frac{\log m}{m}}\right) \leq m^{-2}.
\end{equation}
Since $ \sum_{m = 1}^{\infty} m^{-2} < \infty, $ by the Borel-Cantelli lemma, we conclude that 
\begin{align}
\Lambda_{m,1} \leq \mathbb{E}[\Lambda_{m,1}] + 4\sqrt{\frac{\log m}{m}}\label{hutf}
\end{align}
for all sufficiently large $m$ a.s.
To bound $\mathbb{E}[\Lambda_{m,1}]$, note that the supremum in $\mathbb{E}[\Lambda_{m,1}]$ is taken over all indicator functions defined on $d$ intersections of hyperplanes in $\R^{d_\S}$ (the constant dimension is only a shift), which has a finite Vapnik--Chervonenkis (VC) dimension of order $d_{\S}d\log d$ \cite{blumer1989learnability}. 
According to \cite[Theorem 8.3.23]{Vershynin2018},  there exists a universal constant $C'$ such that $\mathbb{E}[\Lambda_{m,1}]\leq C'\sqrt{d_{\S}d\log d/m}$. 
This combined with \eqref{hutf} shows that $\Lambda_{m,1}\to 0$ a.s.

To bound $\Lambda_{m,2}$,  note that by Statement \ref{lemma:item:1} in \Cref{lemma:parameter-consistency}, 
a.s., for all sufficiently large $m$, $\|\widehat{\bm B}_{\S^+}-\bm B_{\S^+}\|_F< \e$ where $\e$ is the same as in \Cref{ass4}. 
Since $X_{\S^+}$ is independent of $\widehat{\bm B}_{\S^+}$, conditioning on $\|\widehat{\bm B}_{\S^+}-\bm B_{\S^+}\|_F< \e$ and applying 
Statement \ref{lemma:item:0} of \Cref{lemma:parameter-consistency},
$\Lambda_{m,2}\lesssim\|\widehat{\bm B}_{\S^+}-\bm B_{\S^+}\|^{2/3}_F$.
Now taking $m\to\infty$ shows $\Lambda_{m,2}\to 0$ a.s.

\subsubsection{Proof of Statement \ref{lemma:item:3}}\label{appx:3}

Note $\mathcal K_1(\x) + \mathcal K_2(\x) = \widehat{F}_Y(\x)(1-\widehat{F}_Y(\x))$, which is a consistent estimator for $F_Y(\x)(1-F_Y(\x))$ for all $\x\in\R^d$ a.s. as a result of the strong law of large numbers. 
Therefore, it suffices to prove the consistency for $\mathcal K_2(\x)$ only. 

Note $\mathcal K_2(\x)$ in \eqref{my1} can be rewritten as 
\begin{align}
\mathcal K_2(\x) &= \widehat{\rho}^2_\S(\x)\widehat{F}_{Y}(\x)(1 -\widehat{F}_{Y}(\x))= \begin{cases}
\frac{(\widehat{F}_{Y\vee\widehat{H}_\S}(\x) - \widehat{F}_{Y}(\x)\widehat{F}_{\widehat{H}_\S}(\x))^2}{\widehat{F}_{\widehat{H}_\S}(\x)(1- \widehat{F}_{\widehat{H}_\S}(\x))} & \x\in (\text{supp}(\widehat{F}_{\widehat{H}_\S}))^\circ\\
0 & \text{otherwise}
\end{cases}\label{daidai}
\end{align}
where 
\begin{align*}
\widehat{\rho}^2_\S(\x) = 
\begin{cases}
\frac{(\widehat{F}_{Y\vee\widehat{H}_\S}(\x) - \widehat{F}_{Y}(\x)\widehat{F}_{\widehat{H}_\S}(\x))^2}{\widehat{F}_{\widehat{H}_\S}(\x)(1- \widehat{F}_{\widehat{H}_\S}(\x))\widehat{F}_{Y}(\x)(1 -\widehat{F}_{Y}(\x))} & \x\in (\text{supp}(\widehat{F}_{Y}))^\circ\cap(\text{supp}(\widehat{F}_{\widehat{H}_\S}))^\circ\\
0 & \text{otherwise}
\end{cases}
\end{align*}
is the empirical estimator for $\rho_\S^2(\x)$. 
On the other hand,
\begin{align}\label{taitai}
\rho_\S^2(\x)F_Y(\x)(1-F_Y(\x)) = \begin{cases}
\frac{(F_{Y\vee H_\S}(\x) - F_{Y}(\x)F_{H_\S}(\x))^2}{F_{H_\S}(\x)(1- F_{H_\S}(\x))} & \x\in (\text{supp}(F_{H_\S}))^\circ\\
0 & \text{otherwise}
\end{cases}
\end{align}
Comparing \eqref{daidai} and \eqref{taitai}, the desired result follows from statement \ref{lemma:item:2} in \Cref{lemma:parameter-consistency}.

\subsubsection{Proof of statement \ref{lemma:item:4}}\label{appx:4}
	We prove the consistency of $\widehat{k}_2(\S)$; the consistency of $\widehat{k}_1(\S)$ can be proved similarly.
        By statement \ref{lemma:item:3} in \Cref{lemma:parameter-consistency}, 
        $\mathcal K_2(\x)$ converges to $\rho_\S^2(\x)F_Y(\x)(1-F_Y(\x))$ for all $\x\in\R^d$ as $m\to\infty$ a.s.  
     
     We first prove the first case where $d=1$ and $\|\omega\|_{L^\infty(\R)} = C<\infty$, and we change the notation $\x$ to the lowercase $x$. 
     Under the moment condition in Assumption \ref{ass1}, according to \cite[Theorem 2.13]{bobkov2019one},
    \begin{align*}
    &W_1\left(F_Y, \widehat{F}_Y\right) = \int_\R|\widehat{F}_Y(x)-F_Y(x)|\text{d}x\to 0& m\to\infty, 
    \end{align*} 
    where $W_1$ is the Wasserstein-1 metric. 
    Fix an arbitrary trajectory in the sample space such that $\mathcal K_2(x) \to\rho_\S^2(x)F_Y(x)(1-F_Y(x))$ and $\int_\R|\widehat{F}_Y(x)-F_Y(x)|\text{d}x\to 0$. 
    In the following, we treat $\mathcal K_2(x)$ as a deterministic sequence. 
    
    To show the consistency of $\widehat{k}_2(\S)$, it remains to justify the change of order of taking limit and integration:
    \begin{align*}
    \lim_{m\to\infty}\widehat{k}_2(\S) = \lim_{m\to\infty}c_\S\int_{\R} \omega(x)\mathcal{K}_2(x) \text{d}x &=  c_\S\int_{\R} \lim_{m\to\infty}\omega(x)\mathcal{K}_2(x) \text{d}x\\
    & = c_\S\int_{\R}\omega(x)\rho_\S^2(x)F_Y(x)(1-F_Y(x))\text{d}x = k_2(\S), 
    \end{align*}
    for which we appeal to the Vitali convergence theorem. 
    To apply the Vitali convergence theorem, we need to verify that the sequence $\omega(x)\mathcal K_2(x)$ is uniformly integrable and has absolutely continuous integrals. 
    To this end, recall the representation $\mathcal K_2(x)$ in \eqref{daidai}.
    Since the square of the empirical correlation estimator is bounded by $1$, a.s., 
    \begin{align*}
    &\omega(x)\mathcal K_2(x)\leq \omega(x)\widehat{F}_{Y}(x)(1 -\widehat{F}_{Y}(x))\leq \frac{C}{4}<C.
    \end{align*}
    The absolutely continuous integrals part follows immediately from the uniform boundedness.
    For uniform integrability, we first observe  
    \begin{align*}
    \int_{\R}|\omega(x)F_Y(x)(1-F_Y(x)) - \omega(x)\widehat{F}_{Y}(x)(1 -\widehat{F}_{Y}(x))| \text{d}x&\leq\int_{\R}\omega(x)|\widehat{F}_{Y}(x)-F_Y(x)| \text{d}x\\
    &\leq C\int_{\R}|\widehat{F}_{Y}(x)-F_Y(x)|  \text{d}x\to 0.
    \end{align*}
    Thus, 
    \begin{align*}
    \int_{|x|>M}\omega(x)\mathcal K_2(x) \text{d}x &\leq \int_{|x|>M}\omega(x)\widehat{F}_{Y}(x)(1 -\widehat{F}_{Y}(x)) \text{d}x\\
    &\leq \int_{|x|>M}\omega(x) F_{Y}(x)(1 -F_{Y}(x)) \text{d}x + C\int_{|x|>M}|\widehat{F}_{Y}(x) - F_Y(x)| \text{d}x\\
    &\lesssim \int_{|x|>M}\frac{C}{x^2} \text{d}x + C\int_{\R}|\widehat{F}_{Y}(x) - F_Y(x)| \text{d}x,
    \end{align*}
    where the last step follows from Assumption \ref{ass1} and Chebyshev's inequality. 
    For every $\e>0$, we can choose $m$ and $M$ sufficiently large so the right-hand side is less than $\e$. 
    The uniform integrability follows by enlarging $M$ to accommodate the first $m$ terms.
    
    The proof for $(b)$ is similar. It suffices to verify the change of order for the sequence $\mathcal \omega(\x)\mathcal K_2(\x)$. Since $\mathcal \omega(\x)\mathcal K_2(\x)\leq \omega(\x)$ and the latter is integrable and independent of $m$, the dominated convergence does the rest. 
    
\subsubsection{Proof of statement \ref{lemma:item:5}}\label{appx:5}

For $\x\in (\text{supp}(F_{H_\S}))^\circ$, it is easy to show via a contradiction argument that $\x\in\text{supp}(\widehat{F}_{\widehat{H}_\S})$ for all sufficiently large $m$ a.s. 
By statement \ref{lemma:item:2} in \Cref{lemma:parameter-consistency},  
	$\widehat{F}_{Y\vee\widehat{H}_\S}$ and $ \widehat{F}_{\widehat{H}_\S} $ are consistent estimators. Meanwhile, $\widehat{F}_{Y}(\x)$ is a consistent estimator for ${F}_{Y}(\x)$ due to the strong law of large numbers. 
 Therefore, we obtain
\begin{equation*}
\widehat{\alpha}(\x) =  \frac{\widehat{F}_{Y\vee\widehat{H}_\S}(\x) - \widehat{F}_{Y}(\x)\widehat{F}_{\widehat{H}_\S}(\x)}{\widehat{F}_{\widehat{H}_\S}(\x)(1-\widehat{F}_{\widehat{H}_\S}(\x))} \to {\alpha}(\x) = \frac{{F}_{Y\vee{H}_\S}(\x) - {F}_{Y}(\x){F}_{{H}_\S}(\x)}{{F}_{{H}_\S}(\x)(1-{F}_{{\H}_\S}(\x))}
\end{equation*}
as $ m \to \infty $ almost surely.

\subsection{Proof of \Cref{unifexploi}}\label{appx:6}

	Recall in \eqref{fug} that
	\begin{align*}
	\widetilde{F}_\S(\x) &= 	\widehat{F}_Y(\x) -  \frac{1}{m}\sum_{\ell\in \{1:m\}}\left(\widehat{\alpha}(\x)\widehat{h}_\S(X_{\ex, \ell, \S} ;\x)- \frac{1}{N_\S}\sum_{j\in \{1:N_\S\}}\widehat{\alpha}(\x)\widehat{h}_\S(X_{\ext, \ell, \S}; \x)\right)\\
	& = \widehat{F}_Y(\x) - \widehat{\alpha}(\x)\widehat{F}_{\widehat{H}_\S}(\x) + \widehat{\alpha}(\x)\left(\frac{1}{N_\S}\sum_{j\in \{1:N_\S\}}\mathbf 1_{\{(X^\top_{\ext, j, \S^+}\widehat{\bm B}_{\S^+})^\top\leq \x\}}\right).
	\end{align*}
	Thus, 
	\begin{align*}
	\sup_{\x\in\R^d}|\widetilde{F}_\S(\x) - F_Y(\x)|&\leq \sup_{\x\in\R^d}|\widehat{F}_Y(\x)-F_Y(\x)| + \sup_{\x\in\R^d}|\widehat{\alpha}(\x)(\widehat{F}_{\widehat{H}_\S}(\x)-F_{H_\S}(\x))|\\
	&\ \ \ \ + \sup_{\x\in\R^d}\left|\widehat{\alpha}(\x)\left(\frac{1}{N_\S}\sum_{j\in \{1:N_\S\}}\mathbf 1_{\{(X^\top_{\ext,j,\S^+}\widehat{\bm B}_{\S^+})^\top\leq \x\}}-F_{H_\S}(\x)\right)\right|\\
	&\stackrel{\eqref{bd2}}{\leq}\underbrace{\sup_{\x\in\R^d}|\widehat{F}_Y(\x)-F_Y(\x)|}_{(i)} + \underbrace{\sup_{\x\in\R^d}|\widehat{F}_{\widehat{H}_\S}(\x)-F_{H_\S}(\x)|}_{(ii)}\\
	&\ \ \ \ + \underbrace{\sup_{\x\in\R^d}\left|\left(\frac{1}{N_\S}\sum_{j\in \{1:N_\S\}}\mathbf 1_{\{(X^\top_{\ext,j,\S^+}\widehat{\bm B}_{\S^+})^\top\leq \x\}}-F_{H_\S}(\x)\right)\right|}_{(iii)}.
	\end{align*}
	Note $(i)$ converges to $0$ as $m\to\infty$ due to the multivariate Glivenko-Cantelli theorem.
        $(ii)$ converges to 0 as $m\to\infty$ due to statement \ref{lemma:item:2} in \Cref{lemma:parameter-consistency}.
        A similar argument as in the proof of statement \ref{lemma:item:2} of \Cref{lemma:parameter-consistency} 
        can be used to prove that $(iii)$ converges to $0$ as $N_\S\to\infty$, which is not repeated here.

\subsection{Proof of Theorem \ref{main}}\label{aa1}

To reduce notational confusion with $m$, we use $t$ to denote the number of exploration samples. 
The exploration rate $m$ grows nonlinearly with respect to an index that counts the iterations of the exploration loop in \Cref{alg2-detailed}.
We let $q$ denote the loop iteration index, and $t_q$ the corresponding exploration rate, i.e., $t_1 = n+2$. 
Let $q(B)$ be the total number of exploration iteration steps in \Cref{alg2-detailed}, which is random. 
It follows from the definition that $t_{q(B)} = m(B)$ and 
\begin{align}
&n+1\leq t_q\leq t_{q+1}\leq 2t_q& 1\leq q< q(B).\label{myq}
\end{align}

We first show that $m(B)$ diverges as $B\to\infty$ a.s. 
According to statement \ref{lemma:item:4} in \Cref{lemma:parameter-consistency}, 
$\widehat{k}_1(\S)\to k_1(\S), \widehat{k}_2(\S)\to k_2(\S)$ for $\S\subseteq \{1:n\}$ a.s. 
As a result, for almost every realization $\omega\in\Omega$, where $\Omega$ denotes the product space of exploration samples, there exists an $0<L(\omega), L'(\omega)<\infty$,
\begin{align*}
&\sup_{t>n+1}\max_{\S\subseteq \{1:n\}}\widehat{k}_1(\S; \omega)<L(\omega)<\infty& \inf_{t>n+1}\min_{\S\subseteq \{1:n\}}\widehat{k}_2(\S; \omega)>L'(\omega)>0
\end{align*} 
The exploration stopping criterion of cvMDL in \Cref{alg2-detailed} requires that 
\begin{align*}
&m(B; \omega)\geq \widehat{m}^*_{\S(B; \omega)} \geq \frac{B}{c_{\ex}+\sqrt{\frac{c_{\ex}L(\omega)}{L'(\omega)}}}\to\infty&B\to\infty.
\end{align*} 
Thus, 
\begin{align}
&\lim_{B\to\infty}m(B; \omega)=\infty. 
\label{ksl1}
\end{align}

We now work with a fixed realization $\omega$ along which $m(B; \omega)\to\infty$ as $B\to\infty$, and $\widehat{k}_1(\S), \widehat{k}_2(\S)$ converge to the true parameters as $t\to\infty$. 
We prove that both \eqref{rr1} and \eqref{rr2} hold for such an $\omega$. 
Fix $\delta<1/2$ sufficiently small. 
Since $\S^*$ is assumed unique, a continuity argument implies that there exists a sufficiently large $T(\delta; \omega)$, such that for all $t\geq T(\delta; \omega)$, 
\begin{align}
\max_{(1-\delta)m^*_{\S^*}\leq m\leq (1+\delta)m^*_{\S^*}}{\widehat{L}}_{\S^*}(m; t)&< \min_{\S\subseteq \{1:n\}, \S\neq \S^*}\widehat{L}^*_\S(t).\label{112}\\
1-\delta&\leq \frac{\widehat{m}^*_\S(t; \omega)}{m^*_\S}\leq 1+\delta&\forall \S\subseteq \{1:n\}\label{113},
\end{align}
where $\widehat{L}_{\S^*}(\cdot;  t)$ is the estimated loss function for $\S^*$ using $t$ exploration samples, and $\widehat{L}_\S^*(t)$ is the  estimated $L_\S^*$ in \eqref{optm} using $t$ exploration samples.

Since $m^*_\S$ scales linearly in $B$ and $m(B;\omega)$ diverges as $B\to\infty$, there exists a sufficiently large $B(\delta;\omega)$ such that for $B>B(\delta;\omega)$, 
\begin{align}
\min_{\S\subseteq \{1:n\}}m^*_\S&> 4T(\delta;\omega)\label{2345}\\
t_{q(B)} = m(B;\omega) &> 4T(\delta;\omega).\label{3456}
\end{align}
  Consider $q'< q(B)$ that satisfies $t_{q'-1}< T(\delta;\omega) \leq t_{q'}$. Such a $q'$ always exists due to \eqref{3456}, and satisfies 
\begin{align*}
t_{q'}\stackrel{\eqref{myq}}{\leq} 2t_{q'-1}<2T(\delta;\omega)\stackrel{\eqref{2345}}{\leq}\frac{1}{2}\min_{\S\subseteq \{1:n\}}m^*_\S\stackrel{\eqref{113}, \delta<1/2}{\leq}\widehat{m}^*_\S(t_{q'}; \omega).
\end{align*} 
This inequality tells us that in the $q'$-th loop iteration, for all $\S\subseteq \{1:n\}$, the corresponding estimated optimal exploration rate is larger than the current exploration rate. 
In this case, 
\begin{align*}
&\widehat{L}_\S(t_{q'}\vee \widehat{m}^*_\S(t_{q'}; \omega); t_{q'}) = \widehat{L}_\S(\widehat{m}^*_\S(t_{q'}; \omega); t_{q'}) = \widehat{L}_\S^*(t_{q'})&\forall \S\subseteq \{1:n\}.
\end{align*}
This, along with \eqref{112} and \eqref{113}, tells us that $\S^*$ is the estimated optimal model in the current step, and more exploration is needed.  

To see what $t_{q'+1}$ should be, we consider two separate cases.
If $2t_{q'}\leq \widehat{m}^*_{\S^*}(t_{q'}; \omega)$, then
\begin{align*}
T(\delta;\omega)<t_{q'+1} = 2t_{q'}\leq \widehat{m}^*_{\S^*}(t_{q'}; \omega)\leq (1+\delta)m^*_{\S^*},
\end{align*}
which implies 
\begin{align}
(1-\delta)m^*_{\S^*}\stackrel{\eqref{113}}{\leq} t_{q'+1}\vee \widehat{m}^*_{\S^*}(t_{q'+1}; \omega)\leq (1+\delta)m^*_{\S^*}.\label{myku}
\end{align}
If $t_{q'}\leq \widehat{m}^*_{\S^*}(t_{q'}; \omega)<2t_{q'}$, then
\begin{align*}
t_{q'+1} = \left\lceil\frac{t_{q'} + \widehat{m}^*_{\S^*}(t_{q'}; \omega)}{2}\right\rceil\leq \widehat{m}^*_{\S^*}(t_{q'}; \omega)\stackrel{\eqref{113}}{\leq} (1+\delta)m^*_{\S^*},
\end{align*}
which also implies \eqref{myku}. 
But \eqref{myku} combined with \eqref{112} and \eqref{113} implies that $\S^*$ is again the estimated optimal model in the $(q'+1)$-th loop iteration. 
Applying the above argument inductively proves $\S(B) = \S^*$, i.e. \eqref{rr2}.
Note \eqref{myku} holds true until the algorithm terminates, which combined with the termination criteria $t_{q(B)}\geq \widehat{m}^*_{\S^*}(t_{q(B)}; \omega)\geq (1-\delta)m^*_{\S^*}$ implies 
\begin{align*}
1-\delta\leq\frac{m(B; \omega)}{m^*_{\S^*}} = \frac{t_{q(B)}}{m^*_{\S^*}}\leq 1+\delta.
\end{align*}
\eqref{rr1} now follows by noting that $\delta$ can be set arbitrarily small. 

Finally, let $\widetilde{F}'(\x; B)$ be chosen as in \eqref{fug} with $\S =  \S^*$, $m = m^*_{\S^*}$ and $N_\S = (B-c_{\ex}m^*_{\S^*})/c_{\S^*}$. 
Note both $m, N_\S$ are deterministic and diverge as $B\to\infty$. 
By the triangle inequality,
\begin{align*}
\sup_{\x\in\R^d}|\widetilde{F}(\x; B)-F_Y(\x)|\leq\sup_{\x\in\R^d}|\widetilde{F}(\x; B)-\widetilde{F}'(\x; B)| + \sup_{\x\in\R^d}|\widetilde{F}'(\x; B)-F_Y(\x)|.
\end{align*}
As $B\to\infty$, 
the first term on the right-hand side converges to $0$ due to \eqref{rr1} and \eqref{rr2} in Theorem \ref{main}, 
and the second term on the right-hand side converges to $0$ due to Theorem \ref{unifexploi}. This proves \eqref{rr3}.

\section{Additional numerical results}

\subsection{Additional results for geometric Brownian motion in \Cref{sec:num2}}\label{sec:num2-details}

We present two figures that provide experimental results to supplement those presented in \Cref{sec:num2}. A plot of the oracle CDF is visualized in Figure \ref{oracle} (left, middle). The oracle model loss and exploration sample count are in \Cref{oracle} (right). 
\Cref{sde-heat} shows an instance of a heatmap of the absolute estimation errors of ECDF, cvMDL, and cvMDL-sorted when $B = 10^6$, providing supporting evidence that cvMDL is more accurate than ECDF on $\mathcal T$.

\begin{figure}[htbp]
  \begin{minipage}[b][][b]{0.33\textwidth}
\begin{subfigure}{1\textwidth}{\includegraphics[width=\textwidth, trim={2cm 2.5cm 1cm 3cm},clip]{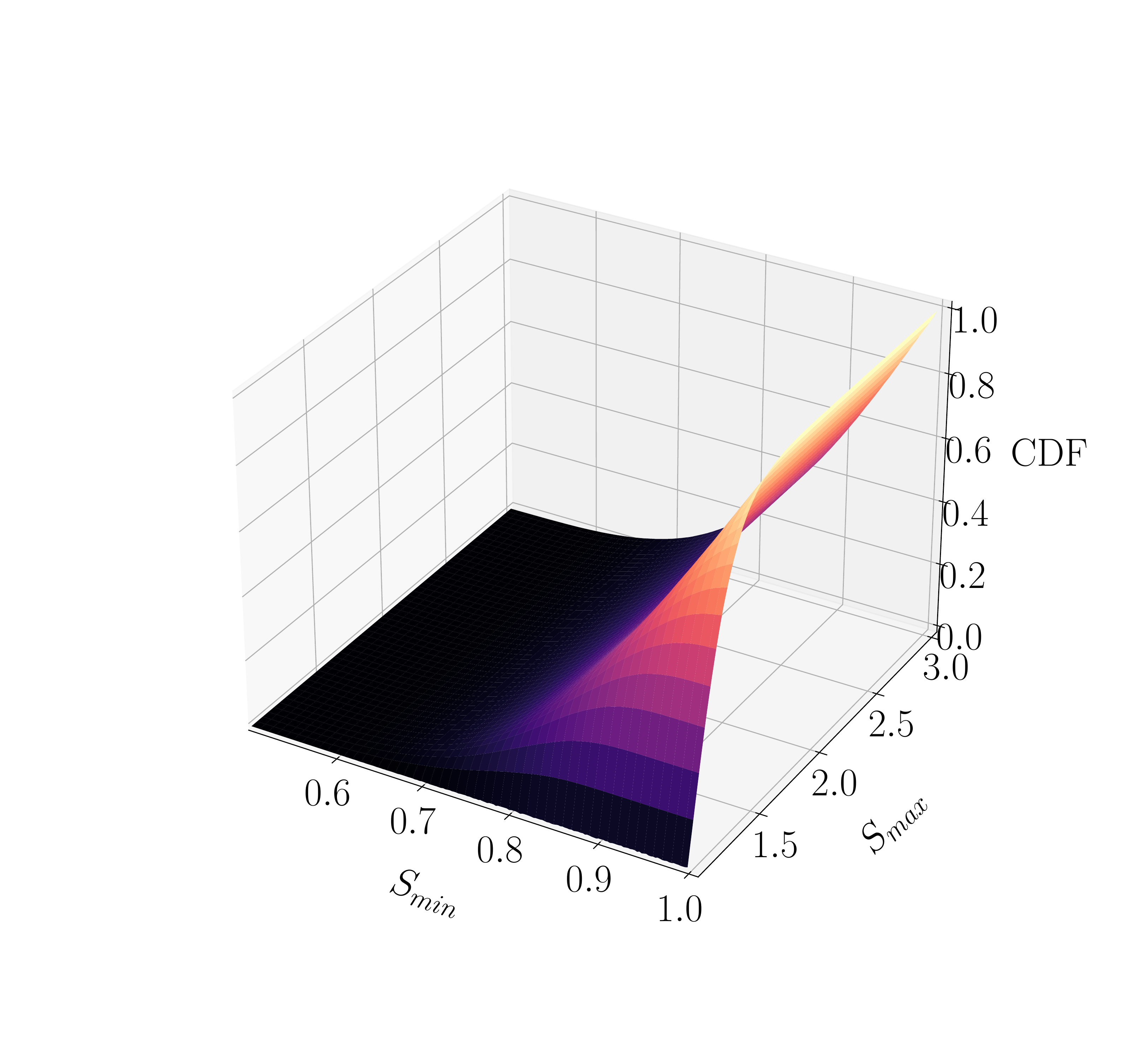}}\caption{}
\end{subfigure} 
  \end{minipage}
  \begin{minipage}[b][][b]{0.31\textwidth}
  \begin{subfigure}{1\textwidth}{\includegraphics[width=\textwidth, trim={0.2cm 0.2cm 1cm 1cm},clip]{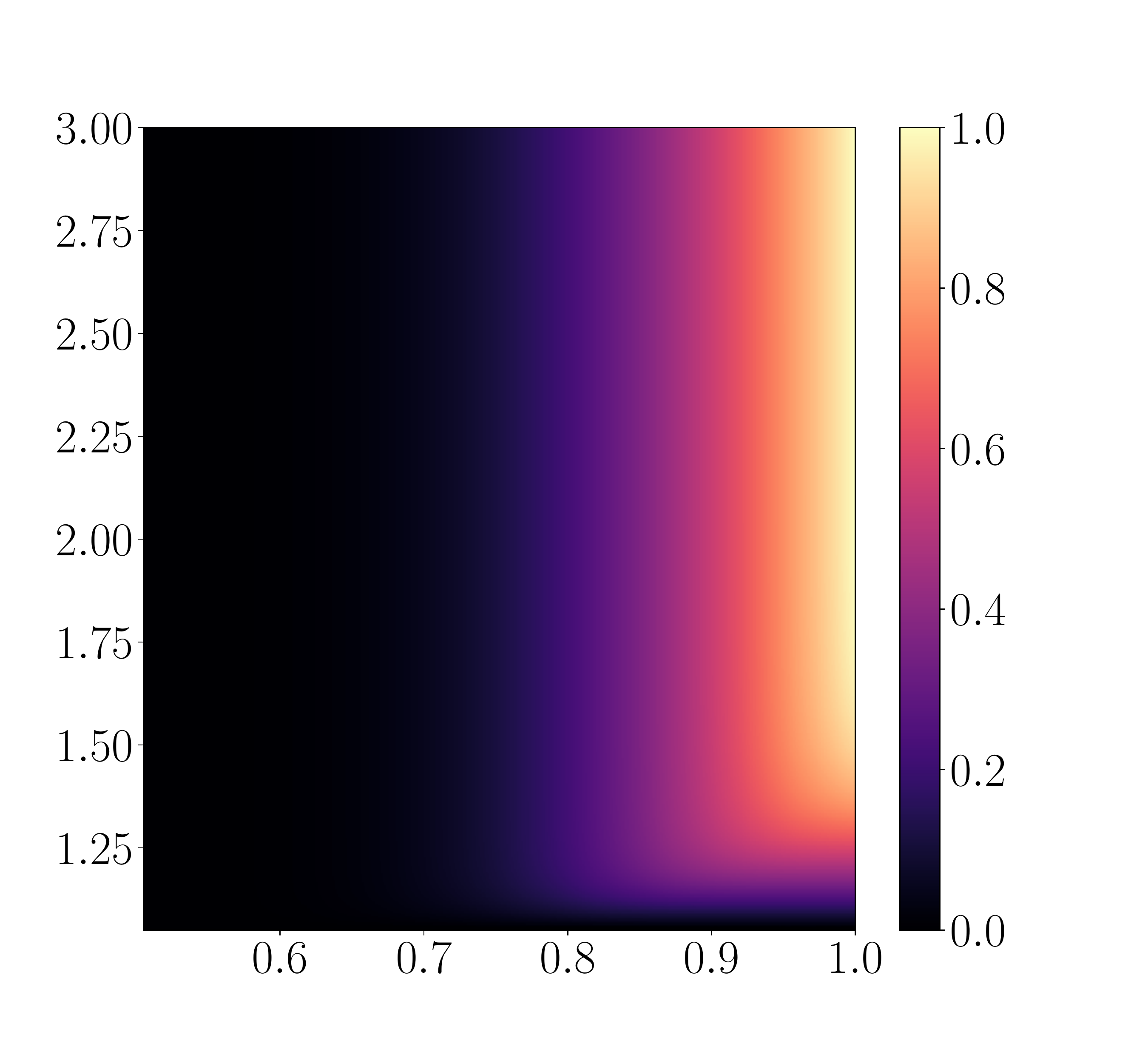}}\caption{}
  \end{subfigure}
  \end{minipage}
  \begin{minipage}[b][][b]{0.32\textwidth}
  \begin{subfigure}{1\linewidth}
\begin{tabular}{rrr}
      Model $\S$ & $\gamma_\S$ & $m^\ast_\S$ \\\hline
      $\textbf{\{1\}}$ & \textbf{11.3} & \textbf{613} \\
      $\{2\}$ & 13.7 & 23.2 \\
      $\{3\}$ & 23.2 & 902 \\
      $\{1, 2\}$ & 12.2 & 596 \\
      $\{1, 3\}$ & 11.6 & 606 \\
      $\{2, 3\}$ & 14.2 & 790 \\
      $\{1, 2, 3\}$ & 12.4 & 581\\\\
    \end{tabular}
    \caption{}
    \end{subfigure}
  \end{minipage}
 \caption{Geometric Brownian motion. (a)-(b): Oracle CDF of $(S_{\min}, S_{\max})$ in the high-fidelity model computed using $10^5$ Monte Carlo samples. (c): Oracle scaled loss $\gamma_\S$ \eqref{optm} and the optimal exploration sample count $m_\S^*$ \eqref{optm} for budget $B=10^6$, computed using 50,000 samples.} \label{oracle}
\end{figure}

\begin{figure}[htbp]
\centering 
\begin{subfigure}{0.31\textwidth}{\includegraphics[width=\linewidth, trim={0.5cm 0.5cm 1cm 1cm},clip]{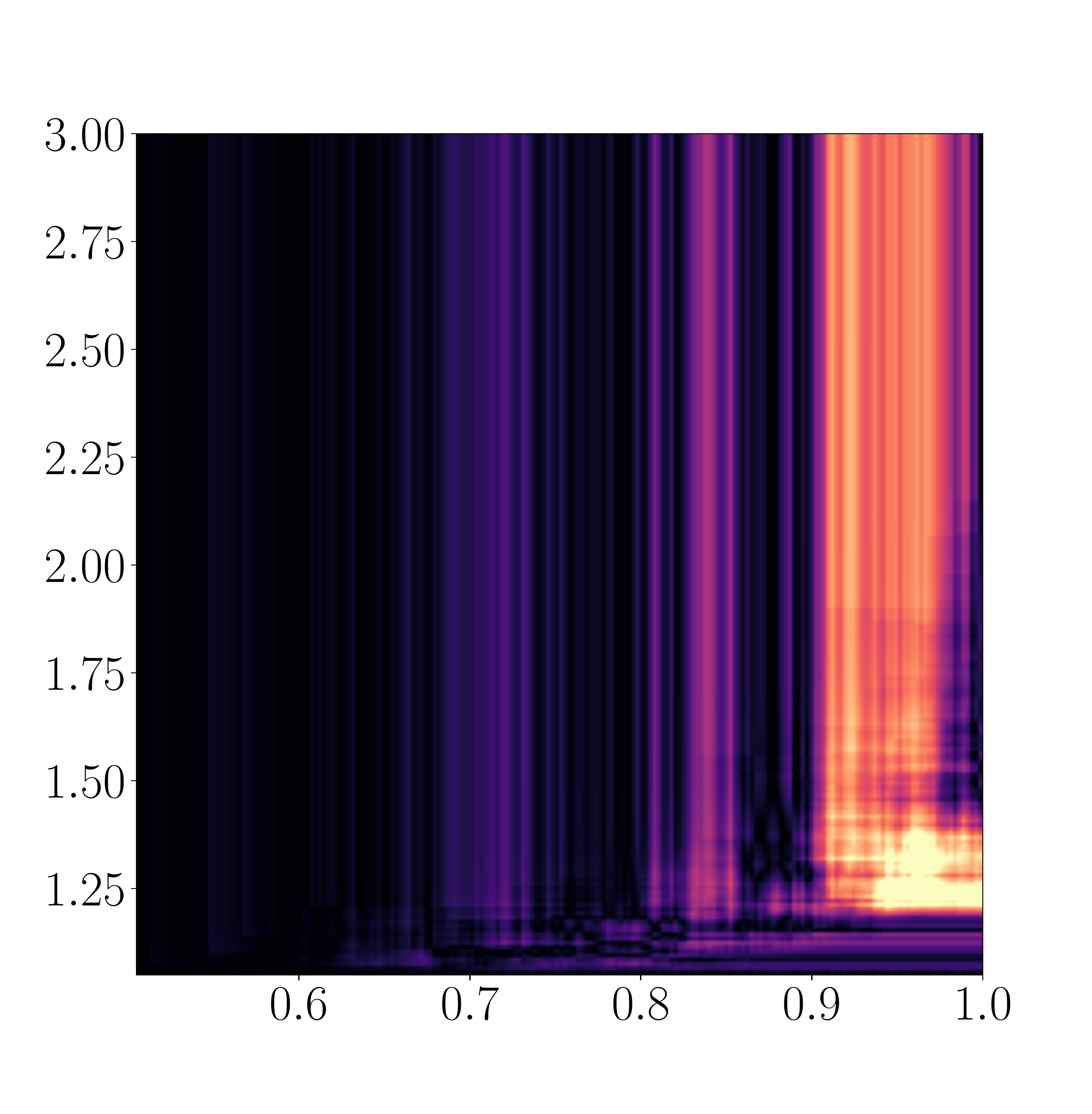}}\caption{}\label{heat1:(a)}
\end{subfigure} 
\begin{subfigure}{0.31\textwidth}{\includegraphics[width=\linewidth, trim={0.5cm 0.5cm 1cm 1cm},clip]{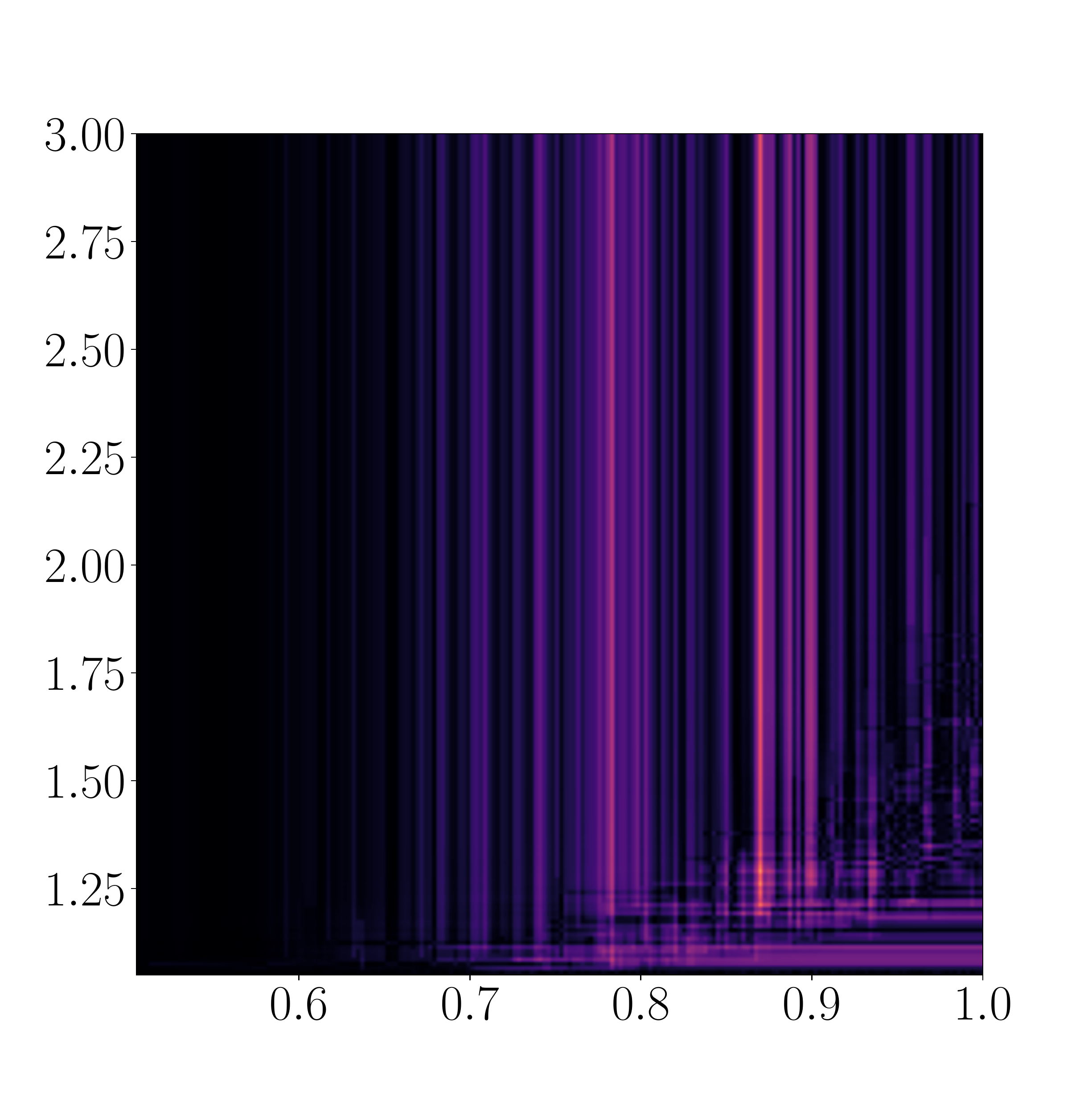}}\caption{}\label{heat2:(b)}
\end{subfigure} 
\begin{subfigure}{0.345\textwidth}{\includegraphics[width=\linewidth, trim={0.5cm 0.5cm 1cm 1cm},clip]{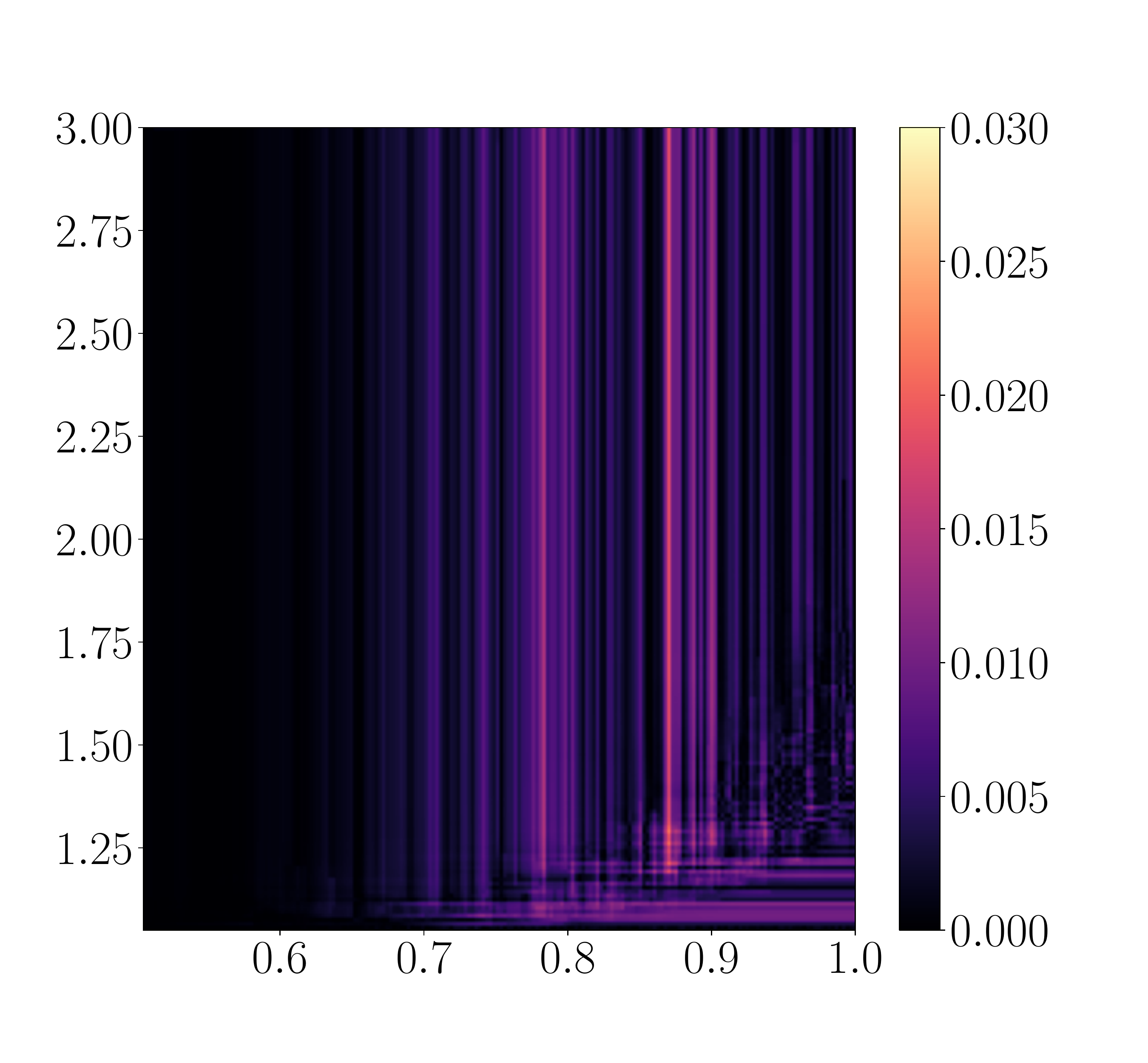}}\caption{}\label{heat3:(c)}
\end{subfigure} 
\caption{Geometric Brownian motion. An instance of absolute pointwise estimation errors of (a) ECDF, (b) cvMDL, and (c) cvMDL-sorted for budget $B = 10^6$.} \label{sde-heat}
\end{figure}

\subsection{Additional results for brittle fracture in \Cref{sec:num3}}\label{sec:num3-details}

We present additional experimental details that supplement those presented in \Cref{sec:num3}.

Recall the boundary value problem in ~\eqref{disp_gov} and \eqref{damage_gov}. The boundary conditions on $\Gamma_N$ and $\Gamma_D$ are
\begin{alignat*}{2}
[(1-\phi_d(\mathbf{x}))^2+q]\nabla\cdot \bm{\sigma}(\mathbf{x})&={\mathbf{v}_n},&\quad\mathbf{x}~\text{on}~\Gamma_N,\\
\mathbf{u}(\mathbf{x})&={\bf{0}},&\quad\mathbf{x}~\text{on}~\Gamma_D,
\\
\nabla\phi_d(\mathbf{x})\cdot\mathbf{n}&=0,&\quad\mathbf{x}~\text{on}~\Gamma_N,
\end{alignat*} 
where $q \ll 1$,  $\bm{\sigma}(\mathbf{x})=\dfrac{\partial\Psi(\bm{\epsilon}(\mathbf{x}))}{\partial{\bm{\epsilon}(\mathbf{x})}}$ is the Cauchy stress tensor, and $\Psi(\bm{\epsilon}(\mathbf{x}))=\dfrac{1}{2}\lambda(\mathrm{tr}(\bm{\epsilon}(\mathbf{x})))^2+\mu\mathrm{tr}(\bm{\epsilon}(\mathbf{x})^2)$ is the elastic energy density with $\mu$ and $\lambda$ the Lam\'e constants, i.e.,
\[
\lambda = \dfrac{\nu \kappa}{(1+\nu)(1-2\nu)},\qquad \mu = \dfrac{\kappa}{2(1+\nu)}
\]
with Young's modulus $\kappa$ and Poisson's ratio $\nu$, and $\bm{\epsilon}(\mathbf{x})=\dfrac{1}{2}\left[\nabla\mathbf{u}(\mathbf{x}) + \nabla\mathbf{u}(\mathbf{x})^{\top}\right]$ is the small strain tensor. In~\eqref{damage_gov} the history variable $H(\mathbf{x})$ is defined as:
\begin{equation*}
	H(\mathbf{x}) =
	\begin{cases}
		\Psi(\bm{\epsilon}(\mathbf{x})), & \Psi(\bm{\epsilon}(\mathbf{x}))<H_i(\mathbf{x}) \\
		H_i(\mathbf{x}), & \text{otherwise}
	\end{cases}
 ,\quad i=1,2,\ldots,n,
\end{equation*}
where $H_i(\mathbf{x})$ is the strain energy computed at $i$th step of the discretized load, which corresponds to the iterative solver stage $\bar{v}_i\cdot e_2$, with $\bar{v}_i\in[0,\bar{v}]$. 
\newpage
\end{appendix}

\bibliographystyle{siamplain}

\bibliography{BL-CDF}

\end{document}